\newcommand{\cat}[1]{\ensuremath{\mathbf{#1}}}
\newcommand{\id}[1]{\ensuremath{\text{id}_{#1}}}
\newcommand{\set}{\cat{Set}}
\newcommand{\op}[1]{{#1}^{\text{op}}}
\newcommand{\mcal}[1]{\mathcal {#1}}
\newcommand{\step}[1]{\xrightarrow {#1}}
\newcommand{\eseq}{\varepsilon}
\newcommand{\act}{A}
\newcommand{\tact}{\act_{\tau}}
\newcommand{\dom}[1]{\text{dom}(#1)}
\renewcommand{\phi}{\varphi}
\newcommand{\chaos}{\Omega}
\newcommand{\fairness}[1]{\text{Fair}_{#1}}
\newcommand{\fseq}[1]{{#1}^\star}
\newcommand{\iseq}[1]{{#1}^\omega}
\newcommand{\aseq}[1]{{#1}^\infty}
\newcommand{\history}[1]{\downarrow #1}
\newcommand{\tuple}[1]{(#1)}
\newcommand{\asheaf}{\mathfrak a}
\newcommand{\isheaf}{\mathfrak i}
\newcommand\inv[1]{#1\raisebox{1.15ex}{$\scriptscriptstyle-\!1$}}
\newcommand{\nat}{\mathbb{N}}
\newcommand{\semantics}[1]{\llbracket #1 \rrbracket}
\newcommand{\lts}{\cat{LTS}}
\newcommand{\exec}[1]{\text{Exec}(#1)}
\newcommand{\mexec}[1]{\text{MExec}(#1)}
\newcommand{\fexec}[1]{\text{FExec}(#1)}
\newcommand{\sheaves}[1]{\cat{Sh}(#1)}
\newcommand{\presheaves}[1]{\cat{PSh}(#1)}
\newcommand{\mpast}{\text{mpast}}
\newcommand{\ftact}{\fseq{\act}_{\tau}}
\newcommand{\A}{\mathcal A}
\newcommand{\obs}{\mathcal O}
\newcommand{\elements}[1]{\mathbb{E}(#1)}
\newcommand{\yoneda}[1]{\mathbb{Y}_{#1}}
\newcommand{\colimit}[1]{\varinjlim_{#1}}
\newcommand{\last}[1]{\text{last}(#1)}
\newcommand{\timeaxis}{\cat T}
\newcommand{\cod}[1]{\text{cod}(#1)}
\newcommand{\fts}{\cat{FTS}}
\newcommand{\ftbact}{\act_{\bar\tau}^\star}
\newcommand{\ftbtact}{\act_{\tau\bar\tau}^\star}
\providecommand*{\twoheadrightarrowfill@}{%
  \arrowfill@\relbar\relbar\twoheadrightarrow
}
\providecommand*{\xtwoheadrightarrow}[2][]{%
  \ext@arrow 0359\twoheadrightarrowfill@{#1}{#2}%
}
\newcommand{\steps}[1]{\xtwoheadrightarrow{#1}{}}
\newtheorem{notation}[thm]{Notation}
\newcommand{\eqnum}{\refstepcounter{equation}\textup{\tagform@{\theequation}}}
\begin{document}
\begin{frontmatter}
%
\title{Bisimulation maps in presheaf categories}
%
%

\author{Harsh Beohar\thanksref{myemail}}
  \address{Universit\"at Duisburg-Essen\\ Duisburg, Germany} \author{Sebastian K\"upper\thanksref{coemail}}
  \address{FernUniversit\"at in Hagen\\Hagen, Germany}  \thanks[myemail]{Email:
    \href{mailto:harsh.beohar@uni-due.de} {\texttt{\normalshape
        harsh.beohar@uni-due.de}}} \thanks[coemail]{Email:
    \href{mailto:sebastian.kuepper@feu.de} {\texttt{\normalshape
        sebastian.kuepper@feu.de}}}

%
%
\begin{abstract}
The category of presheaves on a (small) category is a suitable semantic universe to study behaviour of various dynamical systems. In particular, presheaves can be used to record the executions of a system and their morphisms correspond to simulation maps for various kinds of state-based systems. In this paper, we introduce a notion of bisimulation maps between presheaves (or executions) to capture well known behavioural equivalences in an abstract way. We demonstrate the versatility of this framework by working out the characterisations for standard bisimulation, $\forall$-fair bisimulation, and branching bisimulation.

\end{abstract}
\begin{keyword}Presheaves, $\forall$-fair bisimulation, Branching bisimulation.\end{keyword}
\end{frontmatter}
\section{Introduction}

The importance of formal semantics should not be underestimated, especially when aimning to design \emph{reliable} dynamical systems in heterogeneous environments. Therefore, a variety of state based modelling frameworks at different levels of abstraction have been proposed; to quote Goguen \cite{Goguen92sheafsemantics}: \emph{one person's syntax is another person's semantics}.
Diversity in 
algorithms can be desirable; however, as 
argued in \cite{Goguen92sheafsemantics,abramsky:critic-2006,presheaves-as-transitionsys:1997}, the proliferation of semantic theories indicates our scattered understanding of concurrent systems.
Thus, we seek a framework that provides semantic structure describing the behaviour of a dynamical system and its refinement independently of syntax.

This goal is shared to an extent by the theory of coalgebras \cite{Rut03:universal}. 
In \cite{bk2017}, we abandoned state-based modelling in favour of describing behaviour as the set of executions (inspired by \cite{Cuijpers:2013:DCM,control-theory-book}) because the branching structure of a state in the presence of invisible actions is described by the set of executions (not states).
This situation further escalates when one is interested in
infinite executions (e.g. fairness properties \cite{hennessy:futureperfect,Kwiatkowska1989:fairness-survey}) or dense executions which are omnipresent in hybrid systems (e.g. \cite{Cuijpers:lost-in-translation}). The point is not that executions are inexpressible in a coalgebra, but rather that we need a semantic framework where they are treated as first-class citizens just as states are in a coalgebraic framework. Thus, our hypothesis is that behaviour of a system is given solely by its executions. 

We anticipate presheaves to be the ``right'' semantic structure to study executions without fixing a kind of dynamical system. Note that we are \emph{not} the first in proposing presheaves as the mathematical universe to studying behaviour. Winskel and his colleagues \cite{presheaves-as-transitionsys:1997,cattani_winskel_2005,hildebrandt:fairness,jnw96:bisimopenmaps,Fiore:wbisim_open-maps}
have already employed presheaves (among other things) by giving a denotational semantics of process-algebraic terms supported by characterisations of strong bisimulation and weak bisimulation relations using open maps in the context of transition systems. For a more modern treatment, Hirschowitz and his colleagues \cite{Hirschowitz:LICS19,Hirschowitz:sheaf-pi,hirschowitz:inncocent} advocated game semantics using (pre)sheaves. 

The novelty of our work lies in refining the notion of open maps (which we christened \emph{bisimulation maps}) in a presheaf category and using it to characterise $\forall$-fair bisimulation \cite{Kupferman2003:fair_equiv_rel,Henzinger:2002} and branching bisimulation \cite{bbisim:1996} relations. In addition, the prospect of having to specify notions of time and observation (which was absent in the earlier works on presheaf semantics) leads to a clearer modelling, so explicitly highlighting  these two dimensions of system modelling is at the core of our contribution (cf. Section~\ref{sec:prelim}). This distinction was in turn essential to capture branching bisimulation in the presence of invisible actions $\tau$.

Interestingly, unlike open maps, bisimulation maps are always retracts in the category of presheaves (in turn, they are surjective at the level of executions). As a slogan, presheaf maps are refinement maps, while bisimulation maps (which are special presheaf maps) are complete refinement maps (Section~\ref{sec:bisimmaps}). By moving to a finer notion, we are still able to capture functional bisimulations without fairness. However, in the context of fairness, we can show (Theorem~\ref{thm:existsfairisfairbisimfct}) that the behavioural equivalence induced by a bisimulation map coincides with $\forall$-fair bisimulation relation. Note that our $\forall$-fair bisimulations are equivalence relations by definition in contrast to the existing definition  \cite{Kupferman2003:fair_equiv_rel,Henzinger:2002} (see the dicussion after Theorem~\ref{thm:existsfairisfairbisimfct} on Page~\pageref{thm:existsfairisfairbisimfct}). This is an improvement with respect to the previous characterisation of $\exists$-fair bisimulation \cite{Kupferman2003:fair_equiv_rel,Henzinger:2002} (called \emph{extended bisimulation} in \cite{hennessy:futureperfect}) obtained by Hildebrandt \cite{hildebrandt:fairness} using open maps, since any $\forall$-fair bisimulation relation is strictly finer than an $\exists$-fair bisimulation relation \cite{Henzinger:2002} and our correspondence does not impose any restrictions on the fairness predicates. These restrictions, originally from \cite{hennessy:futureperfect}, asserted that fairness predicates on infinite executions are closed under the removal and the addition of finite prefixes.


Another practical aspect of the theory of presheaves is that it guides us in finding the right semantic categories once a notion of time and observation is fixed. 
Moreover, we can apply concepts (like, e.g., essential geometric morphism \cite{sheafbook}) that transform a dynamical system from one observation space $\obs$ to another space $\obs'$. 
This way we can transform (see Section~\ref{sec:invsible}) a presheaf of executions (induced by a given transition system) into a presheaf of minimal executions (i.e., executions in which trailing $\tau$-transitions are chopped off). This property is specific to branching bisimulation, which may be the reason why this construction was not discussed in \cite{Fiore:wbisim_open-maps} (their objective was to capture weak bisimulation).


\paragraph{Organisation of the paper.}
In Section~\ref{sec:prelim}, we introduce our mathematical framework to model behaviour of a dynamical system with a special focus on the aspects of time and observation. Then, we introduce the notion of bisimulation maps in presheaves on an arbitrary (small) category in Section~\ref{sec:bisimmaps}. Turning our attention towards the first major example, we characterise $\forall$-fair bisimulation relations in Section~\ref{sec:fairness}. The case of invisible actions in Section~\ref{sec:invsible} is based on a change of observation space. We first outline an obvious (but ultimately failed) attempt to capture branching bisimulation, before giving the correct (yet intuitive) construction that characterises branching bisimulation. 

\section{Our universe of discourse}\label{sec:prelim}
The objective of this section is to describe our semantic framework in which one can model behaviour of a dynamical system. By \emph{behaviour} of a dynamical system, we understand some \emph{phenomena that evolve over time}. Our aim is to formalise this intuition. We begin by modelling time as a small category $\timeaxis$, whose objects are points in time and arrows describe passing of time.

\begin{notation}
  An object $C$ (an arrow $f$) of a category $\cat C$ will be denoted by the predicate $C\in \cat C$ ($f\in\cat C$). Moreover, the codomain and domain of an arrow $f\in\cat C$ are denoted as $\cod f$ and $\dom f$, respectively.
\end{notation}

Invariably, dynamical systems come with a notion of observation.
For instance, a letter from a fixed alphabet may denote the assignment of model variables in a computer program/controller. We assume that a system under study has a display unit together with the existence of a hypothetical `observer' $\obs$ who is watching/measuring behaviour of the system using this display unit over time. In addition, our observer $\obs$ can remember its observations over time, i.e., earlier observations can be deduced from the later observations. Mathematically, this amounts to saying that $\obs$ is a contravariant functor $\timeaxis \rTo \set$.

\begin{proposition}
  Let $\cat C$ be a small category. Then, the collection of functors of type $\op{\cat C} \rTo \set$ (i.e., presheaves on a category $\cat C$) and natural transformations between them form a category $\presheaves{\cat C}$.
\end{proposition}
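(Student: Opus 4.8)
The plan is to exhibit $\presheaves{\cat C}$ as the functor category $[\op{\cat C},\set]$ and verify the category axioms directly. First I would fix the data: objects are presheaves $F\colon \op{\cat C}\rTo\set$; a morphism $F\rTo G$ is a natural transformation, i.e.\ a family $(\alpha_C\colon F(C)\rTo G(C))_{C\in\cat C}$ of functions such that for every arrow $f\colon C\rTo C'$ in $\cat C$ the naturality square $G(f)\circ\alpha_{C'}=\alpha_C\circ F(f)$ commutes (recalling that $F(f)$ and $G(f)$ go $F(C')\rTo F(C)$ since $F,G$ are contravariant). The identity morphism on $F$ is the family $(\id{F(C)})_{C\in\cat C}$, and composition of $\alpha\colon F\rTo G$ with $\beta\colon G\rTo H$ is defined componentwise by $(\beta\circ\alpha)_C=\beta_C\circ\alpha_C$.

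Next I would check that these data are well defined. The identity family is trivially natural. For composition, naturality of $\beta\circ\alpha$ follows by pasting the two naturality squares of $\alpha$ and $\beta$ side by side: $H(f)\circ\beta_{C'}\circ\alpha_{C'}=\beta_C\circ G(f)\circ\alpha_{C'}=\beta_C\circ\alpha_C\circ F(f)$. The category axioms are then inherited pointwise from $\set$: associativity $(\gamma\circ\beta)\circ\alpha=\gamma\circ(\beta\circ\alpha)$ holds because it holds in each component $C$ where it is associativity of function composition, and similarly $\alpha\circ\id{F}=\alpha=\id{G}\circ\alpha$ componentwise.

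The one genuinely non-formal point — and the step I expect to be the main obstacle — is the size condition: to have a (locally small) category one must know that for fixed presheaves $F,G$ the natural transformations from $F$ to $G$ form a \emph{set}, not a proper class. This is exactly where smallness of $\cat C$ is used: a natural transformation is a function with domain the set $\mathrm{ob}(\cat C)$ assigning to each $C$ an element of the set $\hom{\set}{F(C),G(C)}$, so the collection of such transformations is a subset of the set $\prod_{C\in\cat C}\hom{\set}{F(C),G(C)}$, hence a set. With this in hand, all the remaining verifications are the routine pointwise arguments above, so I would state them briefly rather than in full detail.
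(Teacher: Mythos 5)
Your proposal is correct: it is the standard verification that $\presheaves{\cat C}=[\op{\cat C},\set]$ is a category, with componentwise identities and composition, naturality of composites by pasting squares, the axioms inherited pointwise from $\set$, and smallness of $\cat C$ ensuring the natural transformations between two fixed presheaves form a set. The paper states this proposition as a well-known fact without giving a proof, so your argument supplies exactly the routine justification the authors leave implicit.
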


\begin{notation}
  Given a presheaf $F\in\presheaves{\cat C}$, we follow \cite{sheafbook} in writing $x\cdot f$ to denote the restriction of $x\in FC$ along $C'\rTo^f C$, i.e., $Ff(x)=x\cdot f$. In case $\cat C$ is a poset (viewed as a category) $\cat C$, we write $x\cdot C'$ to denote the restriction of $x\in FC$ along $C'\preceq C$. 
  Note that we use calligraphic letters for specific presheaves, whereas arbitrary ones are denoted by capital letters as above.
\end{notation}

\begin{example}\label{ex:observation-lts}
  In this example, we fix the notion of time $\timeaxis$ and observation $\obs$ associated with a (labelled) transition system.
  For time $\timeaxis$ we take the set of natural numbers $\nat$ viewed as a category (arrows are the less-than-equal-to relations). For the given alphabet $\act$, we now define a presheaf $\A\in\presheaves{\nat}$:
  \[
  \A(n) =
  \{\sigma\in\fseq\act \mid |\sigma| = n\} \qquad
    \text{(for every $n\in\nat$),}
  \]
  together with the action on $\A$ given by $\sigma\cdot n =\sigma|_n$ (for every $\sigma\in \A(n')$ and $n\leq n'$).
  In other words, $\A(n)$ is the set of those finite words $\sigma\in \fseq\act$ whose length is $n$ (denoted by $|\sigma|=n$), while the action $\_\cdot n$ simply maps a word $\sigma$ of length $n'$ to its unique prefix of length $n$ (denoted by $\sigma|_n$). Note that $\A(0)$ is a singleton set containing the empty word which we denote by $\eseq$.
\end{example}
\begin{remark}\label{remark:sheaves}
  In modelling some dynamical systems, like, e.g., those arising from control theory \cite{control-theory-book}, $\obs$ may have even more structure in that global observations can be constructed by gluing  the local observations (smaller neighbourhoods). In such situations, the category of sheaves $\sheaves{\cat C,J}$ equipped with a Grothendieck topology $J$ on $\cat C$ is more suitable (cf. \cite{spivak_hybrid-systems}) for semantic purposes. Moreover, sheaves equipped with discrete Grothendieck topology are exactly presheaves (cf.\thinspace\cite{sheafbook}), so our mathematical universe is actually the category of sheaves (rather than presheaves). But due to the discrete nature of dynamical systems considered in this paper, we restrict ourselves to presheaves. Nevertheless, we will state our definitions so that they are applicable on sheaves (see, e.g., Remark~\ref{remark:sheaves-bisim}).
\end{remark}
Once a notion of time $\timeaxis$ and an observation $\obs\in\presheaves{\timeaxis}$ is fixed, then a system essentially describes the \emph{runs} (also known as \emph{trajectories} or \emph{executions}) of the system and the observation associated with each run.
To answer both, we envisage that a dynamical system is nothing but an object in the slice category $\presheaves{\cat C}/\obs$. In other words, a dynamical system corresponds to a presheaf $F$ modelling the runs of the system and a natural transformation $F \rTo^\alpha \obs$ modelling the observation associated with each run of the system. More importantly, a \emph{system homomorphism} $\phi$ between two systems $(F,\alpha)$ and $(G,\beta)$, denoted $(F,\alpha) \rTo^\phi (G,\beta)$, is a natural transformation $F \rTo^\phi G$ preserving the observations, i.e., $\beta \circ \phi = \alpha$. Intuitively, a system homomorphism $(F,\alpha) \rTo^\phi (G,\beta)$ says that the system $(F,\alpha)$ is a \emph{refinement} of $(G,\beta)$ (i.e., every observable behaviour of $F$ is also part of the observable behaviour of $G$).

\subsection{Refining our framework by unifying time and observation}\label{subsec:guide}
Although the slice category $\presheaves{\timeaxis}/\obs$ is close to our system theoretic intuition, its presentation can be further simplified.  
Recall the \emph{category of elements} of a presheaf $F\in\presheaves{\cat C}$, denoted $\mathbb E_{\cat C}(F)$ (we drop the subscript $\cat C$ whenever clear from the context), has as objects the tuples $(x,C)$ with $C\in\cat C,x\in FC$ and as arrows $(x,C) \rTo^f (x',C')$ the morphism $C \rTo^f C'\in\cat C$ such that $x'\cdot f=x$.

\begin{theorem}[\!\!\protect{\cite[Exercise~III.8(a)]{sheafbook}}]\label{thm:elements-equivalence}
  For a presheaf $F$ over a small category $\cat C$, there is an equivalence of categories
  $\presheaves{\cat C}/F \cong \presheaves{\elements{F}}$.
\end{theorem}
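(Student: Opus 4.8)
The plan is to exhibit functors $\Phi\colon\presheaves{\cat C}/F\to\presheaves{\elements{F}}$ and $\Psi\colon\presheaves{\elements{F}}\to\presheaves{\cat C}/F$ and to check that they are mutually quasi-inverse. Throughout, recall that a presheaf on $\elements{F}$ is a functor $\op{(\elements{F})}\to\set$, so for $H\in\presheaves{\elements{F}}$ and an arrow $f\colon(x,C)\to(x',C')$ of $\elements{F}$ --- that is, a map $f\colon C\to C'$ in $\cat C$ with $x'\cdot f=x$ --- one has a restriction $Hf\colon H(x',C')\to H(x,C)$.

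Given a system $(G,\alpha)$ with $\alpha\colon G\to F$, I would set $\Phi(G,\alpha)(x,C)=\alpha_C^{-1}\{x\}\subseteq GC$, the fibre of the component $\alpha_C$ over $x$. For an arrow $f\colon(x,C)\to(x',C')$ of $\elements{F}$, naturality of $\alpha$ gives $\alpha_C\circ Gf=Ff\circ\alpha_{C'}$, whence $Gf$ carries $\alpha_{C'}^{-1}\{x'\}$ into $\alpha_C^{-1}\{x\}$; take $\Phi(G,\alpha)(f)$ to be this restriction of $Gf$. A morphism over $F$ restricts fibrewise, so $\Phi$ is defined on arrows, and functoriality is then immediate. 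Conversely, given $H\in\presheaves{\elements{F}}$, I would set $(\Psi H)(C)=\coprod_{x\in FC}H(x,C)$, with restriction along $f\colon C\to C'$ acting on the summand indexed by $x'\in FC'$ via $H\bigl(f\colon(x'\cdot f,C)\to(x',C')\bigr)\colon H(x',C')\to H(x'\cdot f,C)$, landing in the summand indexed by $x'\cdot f$; functoriality of $\Psi H$ as a presheaf unwinds to that of $H$ on $\elements{F}$. The projection $\pi_H\colon\Psi H\to F$ picking out the index of a summand is natural, giving an object $(\Psi H,\pi_H)$ of $\presheaves{\cat C}/F$, and natural transformations are transported summandwise.

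It then remains to produce natural isomorphisms for the two composites. First, $\Phi(\Psi H)(x,C)=\pi_{H,C}^{-1}\{x\}$ is precisely the $x$-summand $H(x,C)$, naturally in $(x,C)$ and in $H$, so $\Phi\Psi\cong\mathrm{id}$. Second, for a system $(G,\alpha)$ the fibres $\alpha_C^{-1}\{x\}$ with $x\in FC$ partition $GC$, so the evident comparison map $\coprod_{x\in FC}\alpha_C^{-1}\{x\}\to GC$ is a bijection; it is natural in $C$, commutes with the projections down to $F$, and is natural in $(G,\alpha)$, hence is an isomorphism $\Psi\Phi(G,\alpha)\cong(G,\alpha)$ in $\presheaves{\cat C}/F$. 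This gives the claimed equivalence.

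The content is all bookkeeping; the only place that needs genuine care is verifying that $\Psi$ respects composition of arrows of $\cat C$ and that the partition bijection above is natural in every variable, since both require tracking elements through the coproduct indexing. I expect this routine-but-fiddly step to be the main obstacle. (Alternatively one could argue non-constructively, writing $F$ as the colimit over $\elements{F}$ of the representables $\yoneda{C}$, using that slicing a presheaf topos over a colimit yields the colimit of the slices, together with $\presheaves{\cat C}/\yoneda{C}\simeq\presheaves{\cat C/C}$, and reassembling; but the explicit functors are more transparent and would be my choice for the write-up.)
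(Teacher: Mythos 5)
Your construction is correct: the fibrewise functor $\Phi(G,\alpha)(x,C)=\alpha_C^{-1}\{x\}$ and the coproduct functor $(\Psi H)(C)=\coprod_{x\in FC}H(x,C)$ with its index projection are exactly the canonical quasi-inverse pair, and your verification of naturality and of the two composite isomorphisms is sound. The paper does not prove this statement itself but cites it as a known result (Exercise~III.8(a) of the sheaf-theory reference), and your argument is precisely the standard proof that citation points to, so there is nothing further to reconcile.
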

\noindent
Note that a similar result also holds in the setting of sheaves (cf.\thinspace \cite[Exercise~III.8(b)]{sheafbook}).

In other words, time can be made inherent with observation and, thus, we can work in a simpler setting without worrying about the bookkeeping associated with slice categories. To see this, recall Example~\ref{ex:observation-lts} and the poset of finite words $\fseq\act$ (a.k.a.\thinspace free monoid) generated by a set $\act$, which is ordered by the prefix relation $\preceq\ \subseteq\fseq\act\times\fseq\act$. Notice that the categories $\elements{\A}$ and $\fseq\act$ are isomorphic: since the length of a word is redundant in the objects of $\elements{\A}$ dropping the length results in the elements of $\fseq\act$. Thus, we obtain
\begin{corollary}
There is an equivalence of categories $\presheaves{\nat}/\A \cong \presheaves{\fseq\act}$.
\end{corollary}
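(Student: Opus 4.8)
The plan is to combine Theorem~\ref{thm:elements-equivalence} with an explicit comparison of the base categories, as already hinted in the paragraph preceding the statement. Instantiating that theorem at $\cat C=\nat$ and $F=\A$ yields an equivalence $\presheaves{\nat}/\A\cong\presheaves{\elements{\A}}$, so it suffices to exhibit an isomorphism of categories $\elements{\A}\cong\fseq\act$ and then transport presheaf categories across it.

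I would define $\Phi\colon\elements{\A}\rTo\fseq\act$ on objects by $\Phi(\sigma,n)=\sigma$; this is legitimate because $(\sigma,n)\in\elements{\A}$ means $\sigma\in\A(n)$, i.e.\ $|\sigma|=n$, so the length component is redundant. Since $\nat$ is a poset, a morphism $(\sigma,n)\rTo(\sigma',n')$ in $\elements{\A}$ is the unique instance of $n\le n'$ subject to the defining equation $\sigma'\cdot n=\sigma'|_n=\sigma$, and $\Phi$ sends it to the instance $\sigma\preceq\sigma'$ of the prefix order; functoriality is automatic since both categories are posets. To check that $\Phi$ is an isomorphism, observe that it is a bijection on objects with inverse $\sigma\mapsto(\sigma,|\sigma|)$, and that on each hom-set there is at most one arrow on either side, with $\hom{\elements{\A}}{(\sigma,n),(\sigma',n')}$ nonempty exactly when $n\le n'$ and $\sigma=\sigma'|_n$ --- which, using $n=|\sigma|$, is exactly when $\sigma\preceq\sigma'$, i.e.\ exactly when $\hom{\fseq\act}{\sigma,\sigma'}$ is nonempty.

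Finally, since $\presheaves{-}$ is (contravariantly) functorial on $\cat{Cat}$ and therefore carries an isomorphism (indeed, any equivalence) of categories to an equivalence, the isomorphism $\elements{\A}\cong\fseq\act$ induces $\presheaves{\elements{\A}}\cong\presheaves{\fseq\act}$; composing with the equivalence from Theorem~\ref{thm:elements-equivalence} gives the claimed $\presheaves{\nat}/\A\cong\presheaves{\fseq\act}$. I do not anticipate a real obstacle here: the only subtlety is matching the arrow condition $x'\cdot f=x$ of $\elements{\A}$ with the prefix relation and noting that the redundancy of the length component makes $\Phi$ bijective on objects, so that we get a genuine isomorphism of base categories rather than merely an equivalence.
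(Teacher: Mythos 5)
Your proposal is correct and follows essentially the same route as the paper: instantiate Theorem~\ref{thm:elements-equivalence} at $\cat C=\nat$, $F=\A$, and observe that dropping the redundant length component gives an isomorphism $\elements{\A}\cong\fseq\act$, which induces the equivalence of presheaf categories. Your explicit verification that the arrow condition $\sigma'\cdot n=\sigma$ matches the prefix order is exactly the detail the paper leaves implicit.
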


As a result, the category of presheaves on $\fseq\act$ can serve as the semantic universe to study behaviour of a transition system (cf. Example~\ref{ex:semantics-lts}). More generally, by giving the semantics to a `syntactic' category of a computational model $\cat M$, we mean identifying the notion of time $\cat T$ and observation $\obs\in\presheaves{\cat T}$ together with a faithful functor $\cat M \rTo^{\semantics{\_}} \presheaves{\mathbb E_{\cat T}(\obs)}$, called the \emph{semantics} functor. By interpreting an arrow $M \rTo^f M'$ in $\cat M$ as \emph{$M$ is an implementation of $M'$ witnessed by $f$}, then faithfulness of $\semantics{\_}$ asserts: if an implementation is witnessed by two semantically same morphisms $\semantics{f}=\semantics{g}$, then $f=g$ must be the same syntactically. 


\begin{example}\label{ex:semantics-lts}
  Consider a transition system $\tuple{X,\act,\rightarrow}$ where $X$ is the set of states, $\act$ is the set of actions, and $\rightarrow \subseteq X \times \act \times X$ is the transition relation\footnote{Transition systems without initial states are standard in process algebraic literature (see \cite{process_algebra_reference}).}. Then the collection of transition systems together with simulation functions form a category denoted $\lts$. Note that a \emph{simulation function} 
  is a function $X \rTo^f Y$ satisfying:
  \begin{equation}\label{eq:trans-preserve}
    \forall_{x,x'\in X,a\in\act}\ x\step a x' \implies f(x) \step a f(x').
  \end{equation}
  As usual, we write $x \step a x'$ to denote $(x,a,x')\in\rightarrow$. Let $\history{\sigma}=\{\sigma'\in\fseq\act\mid \sigma' \preceq \sigma \}$ be the prefixes of $\sigma$.
  Next, define a presheaf $\semantics{X}\in\presheaves{\fseq\act}$ which records all the executions whose trace is $\sigma$ at $\semantics{X}(\sigma)$:
  \begin{align*}
    \semantics{X}(\sigma) =&\ \left\{\history{\sigma} \rTo^p X \mid \forall_{\sigma',a}\ \big(\sigma'a\preceq\sigma \implies p(\sigma') \step a p(\sigma' a)\big) \right\},\\
    p\cdot \sigma' =&\ p_{\history{\sigma'}} \quad \text{\big(for any $\sigma'\preceq \sigma $ and $p\in\semantics{X}(\sigma)$\big)}.
  \end{align*}
  Moreover, any function $X \rTo^f Y$ induces a family of maps $\semantics{f}_\sigma (p) = f\circ p$ (for each $\sigma\in\fseq\act$).
\end{example}
Thus, we obtain the following result; wherein, the result that presheaf maps encode simulation maps is well known from the early work of Joyal {et al.} in \cite{jnw96:bisimopenmaps}.
\begin{proposition}\label{prop:lts-semantics}
  The map $\semantics{\_}$ defined in Example~\ref{ex:semantics-lts} is a faithful functor. Moreover,
  \begin{itemize}
    \item if $\semantics{f}$ is a presheaf map for any $f\in\lts$, then $f$ is a simulation function.
    \item if a function $f$ between the underlying state spaces induces a presheaf map $\semantics f$, then $f$ is a simulation map.
  \end{itemize}
\end{proposition}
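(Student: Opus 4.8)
\medskip
\noindent
The plan is to establish the functor axioms in the order: (i) $\semantics{X}$ is a presheaf; (ii) for a simulation function $f$, the family $\semantics{f}_\sigma(p) = f\circ p$ is a presheaf map; (iii) $\semantics{\_}$ preserves identities and composition; (iv) $\semantics{\_}$ is faithful. The two reflection clauses then drop out by specialising $\sigma$ to the empty word $\eseq$ and to a single letter.

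For (i), I would note that restricting a valid execution $p\in\semantics{X}(\sigma)$ to $\history{\sigma'}$ (for $\sigma'\preceq\sigma$) again satisfies the step condition --- its constraints, indexed by $\sigma''a\preceq\sigma'$, form a subset of those for $\sigma$ --- so $p\cdot\sigma'\in\semantics{X}(\sigma')$; functoriality of the restriction is then immediate from $\history{\sigma''}\subseteq\history{\sigma'}$ for $\sigma''\preceq\sigma'$, with $p\cdot\sigma = p$ the identity. For (ii), the one substantive point is that $\semantics{f}_\sigma(p) = f\circ p$ lands in $\semantics{Y}(\sigma)$: if $\sigma'a\preceq\sigma$ then $p(\sigma')\step a p(\sigma'a)$, whence $f(p(\sigma'))\step a f(p(\sigma'a))$ by \eqref{eq:trans-preserve}, so $f\circ p$ is again an execution. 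Naturality of $\semantics{f}$ reduces to the identity $(f\circ p)_{\history{\sigma'}} = f\circ(p_{\history{\sigma'}})$ --- precomposition with the inclusion $\history{\sigma'}\hookrightarrow\history{\sigma}$ commutes with postcomposition by $f$ --- and (iii) is read off directly from $\semantics{f}_\sigma(p) = f\circ p$.

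For faithfulness I would use that evaluation at $\eseq$, namely $p\mapsto p(\eseq)$, is a bijection $\semantics{X}(\eseq)\cong X$ (the step condition over $\history{\eseq} = \{\eseq\}$ being vacuous), under which $\semantics{f}_\eseq$ is exactly $f$; hence $\semantics{f} = \semantics{g}$ forces $f = g$. The first bullet is then immediate: a morphism of $\lts$ is by definition a simulation function, and step (ii) shows it induces a presheaf map. For the second bullet, suppose $f$ is merely a function on state spaces for which the induced family $p\mapsto f\circ p$ is a presheaf map; given $x\step a x'$ in $X$, the element $p\in\semantics{X}(a)$ determined by $p(\eseq) = x$ and $p(a) = x'$ (note $\history{a} = \{\eseq,a\}$) is sent to $f\circ p\in\semantics{Y}(a)$, which unwinds to $f(x)\step a f(x')$; thus $f$ satisfies \eqref{eq:trans-preserve}.

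I do not expect any deep obstacle. The crux --- and the only place the simulation condition is genuinely used --- is the well-definedness of $\semantics{f}_\sigma$ in (ii): postcomposing an execution by $f$ must again be an execution. Exactly this step, read in reverse on single-step executions, supplies the converse in the third bullet; the remainder is routine bookkeeping with prefixes and restriction maps.
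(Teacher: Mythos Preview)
Your proposal is correct. The paper does not give an explicit proof of this proposition (it is treated as routine and attributed to the folklore around \cite{jnw96:bisimopenmaps}), but your argument matches exactly the template the paper uses for the analogous Proposition~\ref{prop:fair-semantics}: well-definedness of $\semantics{f}_\sigma$ via \eqref{eq:trans-preserve}, naturality via $f\circ(p|_{\history{\sigma'}})=(f\circ p)|_{\history{\sigma'}}$, and faithfulness by evaluating at $\eseq$; your converse for the second bullet (recovering \eqref{eq:trans-preserve} from a single-step execution) is the natural complement and is not spelled out in the paper.

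One small remark on the first bullet: as written in the paper, the hypothesis $f\in\lts$ already makes $f$ a simulation function by definition, so the clause is vacuous --- you note this correctly, but your aside ``and step (ii) shows it induces a presheaf map'' is the converse direction and does not belong to that bullet. This is cosmetic and does not affect the argument.
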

As a result, the category of transition systems $\lts$ has presheaf semantics on $\fseq\act$. In the subsequent sections, we will show the applicability of our semantic framework by giving presheaf semantics to different computational models; namely, transition systems with fairness predicates and invisible transitions.
\section{Bisimulation maps: towards complete refinement}\label{sec:bisimmaps}

Consider a category of computational models $\cat M$ together with its semantics over a presheaf category $\presheaves {\cat C}$, i.e., $\cat M \rTo^{\semantics{\_}} \presheaves{\cat C}$. As mentioned earlier, an arrow between any two images of the semantics functor as a refinement map from the modelling point of view. In particular, we interpret $\semantics{M} \rTo^{\semantics{f}} \semantics{M'}$ (induced by an arrow $M \rTo^f M'\in\cat M$) as the information that $\semantics{M}$ is an implementation of $\semantics{M'}$ witnessed by the refinement map $\semantics{f}$. Note that this is a straightforward generalisation of Proposition~\ref{prop:lts-semantics} since simulation maps encode the refinement of behaviour in the case of labelled transition systems. 
Therefore, to study bisimulation maps at the level of presheaves (executions), we will restrict ourselves to those presheaf morphisms that represent `complete' refinement of behaviour in the following sense. Theorem~\ref{thm:bisim-epi} expresses this in a more formal manner. 
\begin{itemize}
  \item Every observable behaviour in $\semantics{M'}$ is also observable in $\semantics{M}$ (thus, $\semantics{M}$ is a refinement of $\semantics{M'}$).
  \item Moreover, every observable behaviour in $\semantics{M}$ can be retracted onto an observable behaviour in $\semantics {M'}$.
\end{itemize}

One possibility is to use the open maps of Joyal et al. \cite{jnw96:bisimopenmaps}, which gave a unified definition of functional bisimulations over the range of computational models. In particular, when invoked in a presheaf category, open maps correspond to natural transformations whose naturality squares are the weak-pullback squares in $\set$ (see \cite[Proposition~2.3]{cattani_winskel_2005}). Nevertheless, an open map falls short in capturing the complete refinement point of view since an arbitrary open map may not even be a surjective map at the level of executions; i.e., our implementation may not even implement or cover all the behaviour present in the specification. 

In \cite{openmaps-concrete:2015}, open maps \cite{jnw96:bisimopenmaps} were refined to embedding-open maps in the setting of concrete categories. An important difference to the classical definition of open maps is that the parametric notion of path extensions can be replaced by embeddings (see \cite[Definition~8.6(2)]{book:con-cat} for a formal definition).
%
Moreover, embedding-open maps are always retracts under some mild restrictions on concrete categories (cf. \cite{openmaps-concrete:2015}). Now, if a category is concrete over itself, then an embedding corresponds to simply a monomorphism. Consequently, we propose the following definition of an embedding-open map, which we simply call a \emph{bisimulation} map.
%
\begin{definition}\label{def:bisim-presheaf}
  A map $F \rTo^f G \in \presheaves{\cat C}$ is a  \emph{bisimulation} if, and only if, for every commutative square depicted in \eqref{eq:bisim-presheaf} with a mono $P \rMono^g Q$ and maps $m,n$ in $\presheaves {\cat C}$, there is a map $Q \rTo^k F\in\presheaves{\cat C}$ such that the two triangles commute, i.e., $k\circ g=m$ and $f\circ k=n$.
  \begin{diagram}[LaTeXeqno]\label{eq:bisim-presheaf}
    Q & \rTo^{n} & G \\
    \uMono^{g} & \rdExistMap^k & \uTo_{f} \\
    P & \rTo_{m} & F
  \end{diagram}
\end{definition}
Intuitively, \eqref{eq:bisim-presheaf} states that every extension of behaviour observable in $G$ can be reflected as an extension observable in $F$ through the witnesser $f$.
\begin{remark}\label{remark:sheaves-bisim}
  It is interesting to note that a similar definition for sheaves can be derived from the definition of embedding-open maps as given in \cite{openmaps-concrete:2015}. First, note that $\sheaves{\cat C,J}$ is concrete over $\presheaves{\cat C}$ due to the forgetful functor $\isheaf$, which is fully faithful. Moreover, embeddings in this concrete category are actually monomorphisms. This is because any mono is a regular mono in $\sheaves{\cat C, J}$ and the faithful functor $\isheaf$ preserves regular monos (since $\isheaf$ is right adjoint to the associated sheaf functor $\asheaf$ \cite{sheafbook}). Lastly, every regular mono is an embedding whenever the faithful functor preserves regular monos \cite[Proposition~8.7.3]{book:con-cat}. Thus, we have the exact same definition of bisimulation maps in $\sheaves{\cat C,J}$.
\end{remark}
\begin{theorem}\label{thm:bisim-epi}
  Every bisimulation map in a (pre)sheaf category is a retract.
\end{theorem}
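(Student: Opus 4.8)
The plan is to obtain a section of $f$ by feeding the most degenerate possible square into the lifting property of Definition~\ref{def:bisim-presheaf}. Write $0$ for the initial object of $\presheaves{\cat C}$; concretely it is the pointwise-empty presheaf, with $0(C)=\emptyset$ for every $C\in\cat C$. The first thing to check is that the unique morphism $\iota\colon 0\to G$ is a monomorphism. In $\presheaves{\cat C}$ this is immediate, since monomorphisms are detected pointwise and $\emptyset\hookrightarrow G(C)$ is (vacuously) injective for every $C$. In $\sheaves{\cat C,J}$ (where, by Remark~\ref{remark:sheaves-bisim}, the definition of a bisimulation map makes sense verbatim) it also holds, since a sheaf topos has a strict initial object and the canonical arrow out of a strict initial object is always monic — its pullback along itself is again initial.

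Given $\iota$, I would instantiate the square \eqref{eq:bisim-presheaf} by taking $Q=G$, $n=\id{G}$, $P=0$, the mono $g=\iota\colon 0\hookrightarrow G$, and $m\colon 0\to F$ the unique morphism. This square commutes trivially, both composites $0\to G$ agreeing because $0$ is initial. Since $f$ is a bisimulation map, Definition~\ref{def:bisim-presheaf} supplies a morphism $k\colon G\to F$ with $f\circ k=n=\id{G}$; the other triangle $k\circ g=m$ is automatic, both sides being the unique arrow out of $0$. Hence $k$ is a section of $f$, so $f$ is a retraction and $G$ is a retract of $F$ in the (pre)sheaf category. Evaluating at each object $C$ gives $f_C\circ k_C=\id{G(C)}$, so every component $f_C$ is surjective; this is the promised surjectivity ``at the level of executions''.

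The argument is essentially a two-step diagram chase, so I do not expect a genuine obstacle. The only step that uses more than the bare lifting property is the observation that $\iota\colon 0\to G$ is a monomorphism, which is where it matters that the ambient category is a (pre)sheaf topos rather than an arbitrary concrete category; over a general concrete category the corresponding statement for embedding-open maps needs the ``mild restrictions'' alluded to before the definition. Everything else — the choice of the degenerate square and reading the lift as a section — is forced.
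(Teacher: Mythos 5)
Your proposal is correct and matches the paper's own proof: both instantiate the lifting square with $P=0$, $Q=G$, $g$ the unique mono $0\rightarrowtail G$, $n=\id{G}$, and read the resulting diagonal $k$ as a section of $f$. The only cosmetic difference is how monicity of $0\to G$ is justified (you argue pointwise, resp.\ via strictness of the initial object in a sheaf topos, while the paper appeals to the initial object's universal property), which changes nothing of substance.
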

We end this section by capturing functional bisimulations in terms of bisimulation maps whose proof can be extracted from the proof of Theorem~\ref{thm:fair-bisimulation-map}. Note that a similar theorem was proven earlier in the seminal paper \cite{jnw96:bisimopenmaps} for functional bisimulation; however, the difference is that we use bisimulation maps (not open maps) in our characterisation.
\begin{theorem}
Given a simulation function $X \rTo^f Y$, then $\semantics{f}$ is a bisimulation map in $\presheaves{\fseq\act}$ if, and only if, the function $f$ is a surjection satisfying: 
\begin{equation}\label{eq:bisim-reflect}
  \forall_{x\in X,y\in Y}\ \big( f(x) \step a y \implies \exists_{x'\in X}\ (x\step a x' \land f(x')=y) \big).
\end{equation}
\end{theorem}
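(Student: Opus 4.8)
The plan is to prove both implications directly from Definition~\ref{def:bisim-presheaf}: the ``only if'' part by instantiating the lifting square \eqref{eq:bisim-presheaf} with representable presheaves, and the ``if'' part by constructing the filler through an induction on word length.

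\emph{From a bisimulation map to the two conditions.} Since $\semantics{f}$ is a bisimulation map, Theorem~\ref{thm:bisim-epi} makes it a retract, hence a (split) epimorphism, hence pointwise surjective; evaluating at the empty word and using the identifications $\semantics{X}(\eseq)\cong X$, $\semantics{Y}(\eseq)\cong Y$, under which $\semantics{f}_{\eseq}$ becomes $f$, shows that $f$ is surjective. For \eqref{eq:bisim-reflect}, suppose $f(x)\step a y$. I would instantiate \eqref{eq:bisim-presheaf} with $P=\yoneda{\eseq}$ and $Q=\yoneda{a}$ the representables on the empty word and the one-letter word $a$, with $g$ the mono induced by $\eseq\preceq a$. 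By the Yoneda lemma, a map $m\colon P\rTo\semantics{X}$ is an element of $\semantics{X}(\eseq)$, which I take to be $x$, and a map $n\colon Q\rTo\semantics{Y}$ is an element $p\in\semantics{Y}(a)$, which I take to be the run $\eseq\mapsto f(x)$, $a\mapsto y$; this $p$ is a legitimate element of $\semantics{Y}(a)$ exactly because $f(x)\step a y$, and the square commutes since both composites correspond, under Yoneda, to the element $\eseq\mapsto f(x)$ of $\semantics{Y}(\eseq)$. A filler $k\colon Q\rTo\semantics{X}$ is then an element $q\in\semantics{X}(a)$, and the two triangle identities of \eqref{eq:bisim-presheaf} unwind to $q(\eseq)=x$ and $f\circ q = p$; setting $x'=q(a)$ gives $x\step a x'$ and $f(x')=p(a)=y$, as required.

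\emph{From the two conditions to a bisimulation map.} Assume $f$ is a surjection satisfying \eqref{eq:bisim-reflect}, and take an arbitrary commuting square \eqref{eq:bisim-presheaf} with a mono $P\rMono^g Q$; since monos in $\presheaves{\fseq\act}$ are exactly the pointwise injective maps, I regard $P$ as a subpresheaf of $Q$. I would define the components $k_\sigma\colon Q(\sigma)\rTo\semantics{X}(\sigma)$ by induction on $|\sigma|$, maintaining the invariants that $k$ is natural, that $k\circ g = m$, and that $\semantics{f}\circ k = n$. For $\sigma=\eseq$: given $q\in Q(\eseq)$, put $k_\eseq(q)=m_\eseq(q)$ if $q\in P(\eseq)$, and otherwise use surjectivity of $f$ to pick some $x$ with $f(x)=n_\eseq(q)$ and put $k_\eseq(q)=x$. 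For $\sigma=\sigma'a$: a given $q\in Q(\sigma)$ restricts to $q\cdot\sigma'\in Q(\sigma')$, to which the induction has assigned a run $r=k_{\sigma'}(q\cdot\sigma')$ with $f\circ r$ equal to $n_\sigma(q)$ restricted to $\history{\sigma'}$; writing $x''=r(\sigma')$, the fact that $n_\sigma(q)\in\semantics{Y}(\sigma)$ forces $n_\sigma(q)(\sigma')\step a n_\sigma(q)(\sigma)$, i.e.\ $f(x'')\step a n_\sigma(q)(\sigma)$, so \eqref{eq:bisim-reflect} yields $x'$ with $x''\step a x'$ and $f(x')=n_\sigma(q)(\sigma)$. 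I then set $k_\sigma(q)=m_\sigma(q)$ if $q\in P(\sigma)$, and otherwise let $k_\sigma(q)$ be $r$ extended by $\sigma\mapsto x'$. Checking the invariants is routine: restriction-compatibility of $k_\sigma(q)$ holds by construction in the non-$P$ case and by naturality of $m$ together with the induction hypothesis in the $P$ case; $\semantics{f}\circ k = n$ at $\sigma$ holds by the choice of $x'$, respectively because $\semantics{f}\circ m = n\circ g$; and $k\circ g = m$ holds because on $P$ we defined $k$ through $m$. Naturality at an arbitrary prefix $\tau\preceq\sigma$ reduces to the immediate-predecessor case by functoriality of restriction, using that $\fseq\act$ is a poset.

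\emph{Anticipated obstacle.} The ``only if'' direction is short once the right representable test objects are chosen. The delicate part is the bookkeeping in the ``if'' direction: one must verify that the ``$q\in P$'' clause and the ``reflection'' clause defining $k_\sigma$ never conflict --- in particular when $q\notin P(\sigma)$ but $q\cdot\sigma'\in P(\sigma')$ --- and that naturality is preserved across all prefixes, not merely immediate predecessors. Both go through precisely because $\fseq\act$ is a poset in which every nonempty word has a unique immediate predecessor, so the construction runs along chains and the coherence conditions localise to a single step; the lifts are chosen with the axiom of choice and nothing stronger.
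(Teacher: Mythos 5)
Your proposal is correct and follows essentially the same route as the paper, which extracts this result from the proof of Theorem~\ref{thm:fair-bisimulation-map}: the filler is built by induction along prefixes, splitting the surjection at $\eseq$ and using \eqref{eq:bisim-reflect} for the one-letter extension step, while the converse tests $\semantics{f}$ against exactly the tiny presheaves you describe (the paper's explicit $P$, $Q$ with a single point at $\eseq$ resp.\ at $\eseq$ and $a$ are precisely the representables $\yoneda{\eseq}$, $\yoneda{a}$), with surjectivity of $f$ obtained from Theorem~\ref{thm:bisim-epi} at $\eseq$. The coherence issues you flag are handled the same way in the paper, via the clause $k_\sigma(\bar q)=m_\sigma(\inv{g}_\sigma(\bar q))$ on the image of $g$ and naturality of $m$.
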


\section{The case of fairness}\label{sec:fairness}

When considering infinite, rather than finite, behaviour of systems, it is common to take fairness  into account. There are various notions of fairness \cite{Henzinger:2002} but the general idea for fair (bi)simulation is to demand that fair executions are matched by fair executions, in addition to classical (bi)simulation properties. In this section, we give a presheaf semantics to fair transition systems and outline how $\forall$-fair bisimulation can be captured via bisimulation maps.

Let $\iseq\act=\{\sigma \mid \nat \rTo^\sigma \act\}$ be the set of infinite words generated from $\act$ and fix $\aseq\act=\fseq\act\cup\iseq\act$, which is ordered by the prefix relation $\preceq$. The object of study are \emph{fair transition system}s $\tuple{X,\act,\rightarrow, \fairness {X}}$, where $\tuple{X,\act,\rightarrow}$ is a transition system and $\fairness {X}$ is a fairness predicate on infinite executions. An \emph{infinite execution} is a function $\history{\sigma} \rTo^p X \cup \{\chaos\}$ whose domain is the history of an infinite word $\sigma\in\iseq\act$ such that
$p(\sigma) = \chaos$ and $\forall_{\sigma',a}\ (\sigma' a\prec \sigma \implies p(\sigma') \step a p(\sigma'a))$.
Here, $\Omega$ is the \emph{chaos} state which the system enters once it has executed an infinite execution. Let $\fexec{X}=\fairness{X}\cup\exec{X}$ be the set of all fair executions ordered by the prefix relation $\preceq$, i.e., $p\preceq p' \iff p'|_{\dom p} = p$ (for $p,p'\in\fexec{X}$). The set of (finite) executions $\exec{X}$ is defined as earlier, i.e., $\exec{X}=\{\history\sigma\rTo^p X\mid \sigma\in\fseq A\wedge\forall_{\sigma'a\in \history\sigma}\ p(\sigma')\step ap(\sigma'a)\}$. 

\begin{definition}
\label{def:fairbisim}
 A chaos preserving extension $X\cup\{\Omega\} \rTo^{f_\Omega} Y \cup\{\Omega\}$ (i.e., $f_\Omega(x)=f(x)$ for $x\in X$ and $f_\Omega(\Omega)=\Omega$) of a function $X\rTo^f Y$ is a \emph{fair simulation} between $\tuple{X,\act,\rightarrow, \fairness {X}},\tuple{Y,\act,\rightarrow, \fairness {Y}}$ if, and only if, $f$ satisfies \eqref{eq:trans-preserve} and
  $
  \forall_{p\in\fairness{X}}\ f_\Omega\circ p\in \fairness {Y}
  $.
  Henceforth, we do not distinguish between $f_\Omega$ and $f$.
\end{definition}

Note that an infinite, but fair, execution can be seen as the limit of a monotonically increasing sequence of finite executions in $\fexec X$. Since this limiting sequence is part of behaviour, we should reflect it. Thus, we say a \emph{fair bisimulation} $X\cup \{\Omega\} \rTo^f Y \cup \{\Omega\}$ is a surjective fair simulation $f$ satisfying \eqref{eq:bisim-reflect} and the following condition for any increasing sequence of finite executions $(p_i)_{i\in\nat}$ in $X$:
  \begin{equation}\label{eq:fbisim-reflect}
  \bigsqcup_{i\in \nat} f\circ p_i \approx f\circ \bigsqcup_{i\in\nat} p_i.
  \end{equation}
Here, $\approx$ is the Kleene equality used to equate the  partially defined terms above.


To the best of our knowledge, the above notion of fair bisimulation is novel; however, below we will establish its connection with the literature after characterising it in terms of presheaf morphisms.
So let $\fts$ be the category of fair transition systems and fair simulation between them. In addition,  
\begin{itemize}
  \item Time: Since infinite executions are allowed in a fair transition system, we take $\timeaxis=\nat\cup\{\infty\}$ to be the category of natural numbers extended by a number representing infinity (i.e., $\forall_{n\in\nat}\ n\leq \infty$).
  \item Observation: Using the definition of $\A$ in Example~\ref{ex:observation-lts}, we define: $\obs(\infty)=\iseq\act$ and $\obs(n)=\A(n)$ (for $n\in\nat$).
\end{itemize}
Clearly, the categories $\elements{\obs}$ and $\aseq\act$  are isomorphic and by applying Theorem~\ref{thm:elements-equivalence} we obtain $\presheaves{\elements{\obs}} \cong \presheaves{\aseq\act}$. So, we take the category $\presheaves{\aseq\act}$ as the semantic universe to study fair transition systems. Just as in Example~\ref{ex:semantics-lts}, for a given fair transition system $\tuple{X,\act,\rightarrow,\fairness{X}}$, we define a presheaf
\[
\semantics{X}(\sigma) =  \left\{p\in\fexec{X} \mid \max \dom p =\sigma\right\} \quad (\text{for each $\sigma\in\aseq\act$}),
\]
and the action is given by restricting the domain of an execution. Moreover, for any fair simulation function $f$, we let $\semantics{f}_\sigma = f\circ \_$ (for each $\sigma\in\aseq\act$); thus, resulting in a presheaf semantics to fair transition systems.
\begin{proposition}\label{prop:fair-semantics}
  The above map $\fts \rTo^{\semantics{\_}} \presheaves{\aseq\act}$ is a faithful functor.
\end{proposition}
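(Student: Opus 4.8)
The plan is to verify the two defining properties of a functor — well-definedness on objects (each $\semantics{X}$ is genuinely a presheaf on $\aseq\act$), functoriality on arrows ($\semantics{f}$ is a natural transformation, identities and composites are preserved) — and then establish faithfulness. Well-definedness on objects is routine: for $p\in\fexec{X}$ with $\max\dom p=\sigma$ and $\sigma'\preceq\sigma$, restricting $p$ to $\history{\sigma'}$ again yields a fair execution (a prefix of a finite execution is a finite execution; a finite prefix of an infinite execution is a finite execution; here one should note that when $\sigma'\in\iseq\act$, i.e. $\sigma'=\sigma$, the restriction is $p$ itself), and the functoriality of the restriction action $p\cdot\sigma'=p|_{\history{\sigma'}}$ follows from $(p|_{\history{\sigma'}})|_{\history{\sigma''}}=p|_{\history{\sigma''}}$ for $\sigma''\preceq\sigma'\preceq\sigma$. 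I would spend one or two sentences on this and move on.

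Next I would check that $\semantics{f}$ is a natural transformation: for a fair simulation $f$ and $p\in\semantics{X}(\sigma)$ one must see that $f\circ p$ lies in $\semantics{Y}(\sigma)$ — this uses exactly that $f$ (in its chaos-preserving extension $f_\Omega$) satisfies the transition-preservation condition \eqref{eq:trans-preserve} for finite executions and the fairness condition $\forall_{p\in\fairness{X}}\ f_\Omega\circ p\in\fairness{Y}$ for the infinite case, so that $\max\dom(f\circ p)=\sigma$ is preserved and $f\circ p$ is again a fair execution. Naturality squares commute because post-composition by $f$ commutes with domain restriction: $(f\circ p)|_{\history{\sigma'}}=f\circ(p|_{\history{\sigma'}})$. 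Preservation of identities and composition is immediate from $\mathrm{id}\circ p=p$ and $(g\circ f)\circ p=g\circ(f\circ p)$, using that chaos-preserving extension is itself functorial ($\mathrm{id}_{X\cup\{\Omega\}}$ is the extension of $\mathrm{id}_X$, and the extension of $g\circ f$ is the composite of the extensions).

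For faithfulness, suppose $f,g\colon X\to Y$ are fair simulations with $\semantics{f}=\semantics{g}$; I want $f=g$ as functions $X\cup\{\Omega\}\to Y\cup\{\Omega\}$. Given $x\in X$, consider a length-one (or even length-zero) execution landing on $x$: concretely, for any $a$ the constant-then-$x$ execution $p\in\semantics{X}(a)$ with $p(\eseq)=x$ — more carefully, take the single-point execution $p\in\semantics{X}(\eseq)$ defined by $p(\eseq)=x$, which is always a finite execution. Then $\semantics{f}_\eseq(p)=f\circ p$ sends $\eseq\mapsto f(x)$ and $\semantics{g}_\eseq(p)=g\circ p$ sends $\eseq\mapsto g(x)$; since these are equal, $f(x)=g(x)$. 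As $x$ was arbitrary and both extensions agree on $\Omega$ by definition, $f=g$.

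The main obstacle is bookkeeping rather than depth: one must handle the $\sigma\in\iseq\act$ case with care throughout — checking that $\semantics{X}(\sigma)$ for infinite $\sigma$ consists of the chaos-terminated infinite executions, that $f\circ p$ for such $p$ still satisfies $(f\circ p)(\sigma)=\Omega$ (which is why the chaos-preserving extension is essential, and why $f$ must send fair executions to fair executions for the image to land in $\semantics{Y}(\sigma)$ at all), and that the limit/restriction interaction is consistent. Condition \eqref{eq:fbisim-reflect} is a property of \emph{bisimulation} maps, not of plain fair simulations, so it plays no role here; only \eqref{eq:trans-preserve} and the fairness-preservation clause of Definition~\ref{def:fairbisim} are needed. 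I expect the proof to be a short, essentially mechanical adaptation of Proposition~\ref{prop:lts-semantics}.
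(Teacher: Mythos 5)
Your proposal is correct and follows essentially the same route as the paper's proof: presheaf well-definedness is routine, naturality of $\semantics{f}$ follows from \eqref{eq:trans-preserve} (plus fairness preservation and chaos preservation for the infinite case) together with the fact that post-composition by $f$ commutes with domain restriction, and faithfulness is obtained exactly as in the paper by evaluating at the empty executions $\eseq_x$.
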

Invoking the bisimulation map definition between any two presheaves generated by fair transition systems results in the characterisation of fair bisimulation. 
\begin{theorem}\label{thm:fair-bisimulation-map}
  A fair simulation function $f$ is a fair bisimulation if, and only if, the underlying map $\semantics{f}$ is a bisimulation map in $\presheaves{\aseq\act}$.
\end{theorem}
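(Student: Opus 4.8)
The plan is to prove the two directions of the biconditional separately, unpacking Definition~\ref{def:bisim-presheaf} in the concrete setting of $\presheaves{\aseq\act}$ and using that the category of elements $\elements{\obs}$ is (isomorphic to) the poset $\aseq\act$, so a presheaf there is just a family of sets indexed by finite and infinite words with restriction maps along the prefix order. Throughout I will exploit the representable presheaves $\yoneda{\aseq\act}(\sigma)$ as the natural ``test shapes'': a monomorphism $P \rMono^g Q$ between such representables (or between subpresheaves thereof) together with maps $m,n$ as in \eqref{eq:bisim-presheaf} encodes exactly the data of an execution downstairs in $\semantics{Y}$ together with a partial lift upstairs in $\semantics{X}$, and a diagonal filler $k$ is a choice of extension of that lift. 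The two lifting problems I need are (i) $P = \yoneda{}(\eseq) \hookrightarrow Q = \yoneda{}(\sigma)$ for $\sigma$ a finite word — this will encode the one-step reflection condition \eqref{eq:bisim-reflect} together with surjectivity — and (ii) $P = \colimit{i} \yoneda{}(\sigma_i) \hookrightarrow Q = \yoneda{}(\sigma)$ where $\sigma_i \nearrow \sigma \in \iseq\act$ — this will encode the limit-reflection condition \eqref{eq:fbisim-reflect}.

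For the ``only if'' direction, assume $f$ is a fair bisimulation, i.e.\ $f$ is surjective, satisfies \eqref{eq:trans-preserve}, \eqref{eq:bisim-reflect}, the fairness-preservation of Definition~\ref{def:fairbisim}, and the limit condition \eqref{eq:fbisim-reflect}. Given an arbitrary lifting problem \eqref{eq:bisim-presheaf} with $g$ mono, I build $k$ componentwise: for each $\sigma \in \aseq\act$ and each $q \in Q\sigma$, I must produce $k_\sigma(q) \in \semantics{X}\sigma$ compatible with restrictions, with $n_\sigma(q)$, and with $m$ on the image of $g$. I would do this by induction on the length of $\sigma$ (well-founded on $\fseq\act$), using surjectivity of $f$ at the base case $\sigma = \eseq$ and \eqref{eq:bisim-reflect} at the successor step to extend a chosen lift one transition at a time; the key point is that since $g$ is mono, the coherence constraints coming from $P$ never conflict, so the choices can be made consistently. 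For infinite $\sigma$ the value $k_\sigma(q)$ is forced to be (compatible with) the supremum of the lifts already chosen along the finite prefixes, and \eqref{eq:fbisim-reflect} together with fairness-preservation guarantees that this supremum is a genuine element of $\semantics{X}\sigma$ (i.e.\ a fair execution mapping onto $n_\sigma(q)$); naturality of $k$ is then immediate from the construction.

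For the ``if'' direction, I extract each of the required properties of $f$ by feeding the bisimulation property a tailored lifting problem. Surjectivity of $\semantics{f}$ at the level of executions is Theorem~\ref{thm:bisim-epi} (bisimulation maps are retracts, hence componentwise split epi); from surjectivity on $\semantics{X}(\eseq)$ I recover surjectivity of $f$ on states. Condition \eqref{eq:bisim-reflect} comes from lifting problem (i) with $P = \yoneda{}(\eseq)$, $m$ picking the state $x$, $Q = \yoneda{}(a)$, $n$ picking the one-step execution $f(x) \step a y$; the filler $k$ is precisely a transition $x \step a x'$ with $f(x') = y$. Condition \eqref{eq:fbisim-reflect} comes from lifting problem (ii): an increasing sequence $(p_i)$ in $X$ with $\bigsqcup_i f\circ p_i$ a fair execution of $Y$ gives a map from the colimit $\colimit{i}\yoneda{}(\sigma_i)$ into $\semantics{X}$ and a map from $\yoneda{}(\sigma)$ into $\semantics{Y}$ that agree; the filler gives a fair execution of $X$ over $\sigma$ restricting to the $p_i$, i.e.\ $\bigsqcup_i p_i$ exists and is fair, and its $f$-image is $\bigsqcup_i f\circ p_i$. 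Finally fairness-preservation in the forward direction of Definition~\ref{def:fairbisim} is subsumed because a fair simulation already preserves fairness by definition.

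The main obstacle I anticipate is the infinite/limit bookkeeping in the ``only if'' direction: showing that the forced value of $k$ at an infinite word is well-defined and lands in $\semantics{X}$. Concretely, one has made a coherent family of finite lifts along the prefixes $\sigma_i$ of an infinite $\sigma$, and must argue their supremum is a fair execution whose $f$-image is the prescribed $n_\sigma(q)$ — this is exactly where \eqref{eq:fbisim-reflect} (Kleene equality of the two suprema) is needed, and one must be careful that the chosen finite lifts actually form an increasing sequence (not merely a compatible family of non-monotone choices), which requires making the inductive choices monotonically; handling the case where $Q$ is not representable but an arbitrary presheaf, so that a single $q \in Q\sigma$ may have many ``histories'', adds a layer of care but no genuinely new idea. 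A secondary subtlety is checking that the diagonal $k$ is natural as a whole, not just a family of functions — but this follows formally once the values are defined by restriction-compatible choices.
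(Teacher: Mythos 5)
Your proposal is correct and follows essentially the same route as the paper's proof: a well-founded induction over $\aseq\act$ for the ``only if'' direction (split-epi surjectivity at $\eseq$, condition \eqref{eq:bisim-reflect} at successor words, condition \eqref{eq:fbisim-reflect} at infinite words, with the image of $g$ handled via $m$ and monomorphy), and tailored small test presheaves plus Theorem~\ref{thm:bisim-epi} for the ``if'' direction. Your representable test shapes $\yoneda{}(\eseq)\rMono\yoneda{}(a)$ and $\colimit{i}\yoneda{}(\sigma_i)\rMono\yoneda{}(\sigma)$ are exactly (isomorphic to) the hand-built presheaves $P,Q$ used in the paper, so the difference is only presentational.
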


\noindent\stepcounter{thm}\textit{Remark~\thethm.}
  It should be noted that fair bisimulation maps are stronger than the open maps studied by Hildebrandt \cite{hildebrandt:fairness} in $\presheaves{\aseq\act}$ to characterise the extended bisimulation of Hennessy and Stirling \cite{hennessy:futureperfect} for pointed systems, i.e. systems with explicitly initial states. To demonstrate this, recall that an $\text{Inf}_\bot$-open map of Hildebrandt \cite[Proposition~25]{hildebrandt:fairness} is a fair simulation function (not necessarily surjective) $X\cup\{\Omega\} \rTo^f Y \cup \{\Omega\}$ satisfying \eqref{eq:bisim-reflect} and the property:
  \begin{equation}\label{eq:hildebrandt-fair}
    \forall_{x\in X,q\in\fexec Y}\ \big( q(\eseq)=f(x)  \implies \exists_{p\in\fexec{X}}\ (p(\eseq)=x \land f\circ p=q) \big).
  \end{equation}
%
Consider the two systems drawn on the right with a function $f$ between the states depicted by dashed lines.
\[\scalebox{1}{
  \begin{tikzpicture}
    \node (x1) {$x$};
    \node (x2) at ($(x1.center)+(1,0)$) {$x'$};
    \node (x12) at ($(x2.center)+(1.5,0)$) {$y$};
    \path[->]
        (x1) edge node[above]{$a$} (x2)
        (x1) edge[loop left] node[left] {$a$} (x1)
        (x2) edge[loop above] node[right] {$a$} (x2)
        (x12) edge[loop right] node[right] {$a$} (x12);
    \path[->,dashed]
        (x1) edge[bend right] (x12)
        (x2) edge (x12);
  \end{tikzpicture}}
  \]
In the left system, the infinite executions visiting $x'$ infinitely often are considered fair, whereas the only infinite execution in the right system is fair. Clearly, $f$ is a fair simulation satisfying \eqref{eq:hildebrandt-fair}.
  But $f$ is \emph{not} a fair bisimulation because the sequence of finite executions $(p_i)_{i\in\nat}$ formed by unfolding the self-loop on $x$ has a fair execution $\bigsqcup_{i\in\nat} f\circ p_i$ (looping on $y$) as the limit in the right system. Yet, $(p_i)_{i\in\nat}$ has no limit, thus, violating \eqref{eq:fbisim-reflect}.

\begin{remark}
In \cite{hildebrandt:fairness}, (separated) presheaves with sup topology are used because an increasing sequence of finite executions induced by a fair transition system has at most one limit point. Unlike \cite{hildebrandt:fairness}, we are not interested in one-to-one semantic representation of our syntactical models, so we work with arbitrary presheaves. This is because the choice whether the semantic universe should be (separated) presheaves or sheaves depends on observations, i.e., how $\obs$ is modelled. This is also why we do \emph{not} require our semantic functor $\semantics{\_}$ to be full. 
\end{remark}

Next, we relate fair bisimulation maps with $\forall$-fair bisimulation relations.
\begin{definition}\label{def:forall-fairbisim}
A \emph{$\forall$-fair bisimulation} on $\tuple{X,\act,\rightarrow,\fairness {X}}$ is an equivalence relation $\mcal R\subseteq X\times X$ satisfying the following transfer properties:
\begin{enumerate}
  \item\label{def:forall-fairbisim1} $\forall_{x,y,x',a}\ \big( (x\step a x' \land x \mcal R y ) \implies \exists_{y'}\ (y \step a y' \land x' \mcal R y') \big)$, and
  \item\label{def:forall-fairbisim2} $\forall_{p,q}\
    \big( (p=_{\mcal R} q \land p\in\fairness X ) \implies q\in\fairness{X}\big)
    $.
\end{enumerate}
Here, $p=_{\mcal R} q$ is an abbreviation for $\dom q=\dom p \land \forall_{\sigma\in\dom p\cap\fseq\act}\ p(\sigma) \mcal R q(\sigma)$.
\end{definition}

\begin{theorem}\label{thm:existsfairisfairbisimfct}
Two states $x$ and $x'$ are related by a $\forall$-fair bisimulation relation if, and only if, there is a fair bisimulation function $f$ such that $f(x)=f(x')$. 
\end{theorem}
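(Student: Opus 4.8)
The plan is to prove the two implications separately, with the hard direction being the construction of a single fair bisimulation function from a given $\forall$-fair bisimulation relation.

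\medskip
\noindent\textbf{From a fair bisimulation function to a $\forall$-fair bisimulation.}
Suppose $f\colon X\cup\{\Omega\}\to Y\cup\{\Omega\}$ is a fair bisimulation with $f(x)=f(x')$. I would take $\mcal R = \ker f = \{(x_1,x_2)\mid f(x_1)=f(x_2)\}$, which is manifestly an equivalence relation containing $(x,x')$. Property~\ref{def:forall-fairbisim1} of Definition~\ref{def:forall-fairbisim} follows by a standard back-and-forth argument: if $x_1\,\mcal R\,x_2$ and $x_1\step a x_1'$, then by \eqref{eq:trans-preserve} $f(x_1)\step a f(x_1')$, hence $f(x_2)=f(x_1)\step a f(x_1')$, and by the reflection property \eqref{eq:bisim-reflect} (which $\semantics f$ being a bisimulation map guarantees via Theorem~\ref{thm:fair-bisimulation-map}) there is $x_2'$ with $x_2\step a x_2'$ and $f(x_2')=f(x_1')$, i.e.\ $x_1'\,\mcal R\,x_2'$. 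For property~\ref{def:forall-fairbisim2}, suppose $p=_{\mcal R} q$ and $p\in\fairness X$; then $f\circ p = f\circ q$ pointwise on $\dom p\cap\fseq\act$, and since $f$ preserves $\Omega$ and is a fair simulation, $f\circ p\in\fairness Y$, so $f\circ q = f\circ p\in\fairness Y$; I then need to pull fairness of $f\circ q$ back to $q$, which uses that $\semantics f$ reflects fair executions --- this should follow from the bisimulation-map property applied along the embedding of a finite prefix into the infinite execution $q$, reflecting $f\circ q$ to some fair $q'$ and then arguing $q'=q$ (or at least $q'=_{\mcal R}q$, which is enough if we are careful; more cleanly, use \eqref{eq:fbisim-reflect} on the approximating sequence of $q$).

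\medskip
\noindent\textbf{From a $\forall$-fair bisimulation to a fair bisimulation function.}
Given a $\forall$-fair bisimulation $\mcal R$ relating $x$ and $x'$, the natural candidate is the quotient map $f\colon X\to X/\mcal R$, equipped with the quotient transition relation $[x]\step a [x']$ iff $\exists\, x_1\in[x],x_1'\in[x']$ with $x_1\step a x_1'$, and with fairness predicate $\fairness{X/\mcal R}$ defined as the image $\{f\circ p\mid p\in\fairness X\}$ (extending $f$ to $\Omega$). Surjectivity is immediate, \eqref{eq:trans-preserve} holds by construction, and $f\circ p\in\fairness{X/\mcal R}$ for $p\in\fairness X$ by definition, so $f$ is a fair simulation. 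For \eqref{eq:bisim-reflect}: if $f(x_1)\step a y$ in the quotient, unfold the definition of the quotient transition to get $x_2\in[x_1]$, $x_2'$ with $x_2\step a x_2'$ and $f(x_2')=y$; then $x_1\,\mcal R\,x_2$, so by transfer property~\ref{def:forall-fairbisim1} there is $x_1'$ with $x_1\step a x_1'$ and $x_1'\,\mcal R\,x_2'$, i.e.\ $f(x_1')=f(x_2')=y$. For the fairness reflection \eqref{eq:fbisim-reflect}: an increasing sequence $(p_i)$ in $X$ with $\bigsqcup_i f\circ p_i$ a fair execution $\bar q$ in $X/\mcal R$ means $\bar q\in\fairness{X/\mcal R}$, so $\bar q = f\circ p$ for some $p\in\fairness X$; since $f\circ p_i\preceq f\circ p$ and the $p_i$ are increasing, I need to lift this to show $(p_i)$ has a limit in $\fexec X$. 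This is where transfer property~\ref{def:forall-fairbisim2} enters: one shows $\bigsqcup_i p_i$ exists by a König-style argument ($f\circ p_i$ determines the $\mcal R$-classes along the trace, $p$ picks representatives, and property~\ref{def:forall-fairbisim2} guarantees that any such pointwise-$\mcal R$-equivalent completion of the trace is again fair, so the limit execution lives in $\fexec X$), and then $f\circ\bigsqcup_i p_i = \bigsqcup_i f\circ p_i$ follows by continuity of composition.

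\medskip
\noindent\textbf{Main obstacle.}
The delicate point is the fairness-reflection clause \eqref{eq:fbisim-reflect} in the backward direction: one must manufacture an actual limiting execution $\bigsqcup_i p_i$ in $X$ out of the $\mcal R$-classes and the witness $p\in\fairness X$, and verify it is both a well-defined execution extending all $p_i$ and fair. The subtlety is that $p$ need not extend the given $p_i$ on the nose --- only up to $=_{\mcal R}$ --- so I expect to define the limit by, at each finite stage, correcting $p$ to agree with $p_i$ on the length-$i$ prefix (using property~\ref{def:forall-fairbisim1} to keep it a valid run), taking the resulting coherent family's union, and then invoking property~\ref{def:forall-fairbisim2} to conclude fairness of that union from fairness of $p$. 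Establishing that this correction process yields a coherent (and hence convergent) family is the crux; everything else is bookkeeping that mirrors the proof of Theorem~\ref{thm:fair-bisimulation-map}.
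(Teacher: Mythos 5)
Your proposal follows the paper's own route: take $\mcal R=\ker f$ in one direction, and in the other quotient by $\mcal R$ with transitions $[x]\step a[x']$ iff some representatives make the step and fairness $\fairness Y=\{f\circ p\mid p\in\fairness X\}$; the parts you work out in detail are correct. Two clarifications, though. In the forward direction, your first idea for clause \eqref{def:forall-fairbisim2} (reflect $f\circ q$ to some fair $q'$ and then argue $q'=q$, or settle for $q'=_{\mcal R}q$) does not work as stated: the reflected execution need not be $q$, and deducing fairness of $q$ from $q'\in\fairness X$ and $q'=_{\mcal R}q$ is exactly clause \eqref{def:forall-fairbisim2} for $\ker f$, i.e.\ the very property you are proving, so it is circular. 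Your ``more cleanly'' fallback is the right (and the paper's) argument: writing $\sigma=\max\dom p$, $p_i=p\cdot(\sigma|_i)$, $q_i=q\cdot(\sigma|_i)$, one has $f\circ q_i=f\circ p_i$, the supremum $\bigsqcup_i f\circ p_i=f\circ p$ exists in $\fexec Y$ since $f$ is a fair simulation, and \eqref{eq:fbisim-reflect} applied to $(q_i)_{i\in\nat}$ forces $\bigsqcup_i q_i=q$ to exist in $\fexec X$, i.e.\ $q\in\fairness X$. (Also, \eqref{eq:bisim-reflect} and \eqref{eq:fbisim-reflect} are part of the definition of a fair bisimulation function, so there is no need to route through Theorem~\ref{thm:fair-bisimulation-map}.)

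In the converse direction, the ``main obstacle'' you single out dissolves: since $(p_i)_{i\in\nat}$ is increasing, its set-theoretic union, extended by $\Omega$ at the infinite word of $\dom{\bar q}$, is already a well-defined infinite execution $u$ of $X$ extending every $p_i$; no K\"onig-style argument and no stagewise correction of the witness $p$ is needed (correcting $p$ to agree with the $p_i$ would merely reproduce the $p_i$). The only issue is whether $u$ is \emph{fair}, i.e.\ whether $\bigsqcup_i p_i$ exists in $\fexec X$, and that is immediate from your own last step: $f\circ u=\bigsqcup_i f\circ p_i=f\circ p$ with $p\in\fairness X$, and $\ker f=\mcal R$ because $f$ is the quotient map, so $u=_{\mcal R}p$ and Definition~\ref{def:forall-fairbisim}\eqref{def:forall-fairbisim2} yields $u\in\fairness X$, whence $f\circ\bigsqcup_i p_i=\bigsqcup_i f\circ p_i$. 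This is exactly how the paper closes the argument, so with these two repairs your proof is the paper's proof.
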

The requirement of equivalence relations in the above definition may look superfluous at first glance. This is because, traditionally (i.e., when fairness predicates are empty sets), a strong bisimulation relation by definition is not necessarily an equivalence relation on the set of states. Moreover, \emph{bisimilarity} which is defined as the union of all strong bisimulation relations turns out to be both an equivalence relation and a strong bisimulation relation. However, such closure results do not hold in general for $\forall$-fair bisimulation relations. 
$\forall$-fair bisimulation relations are not closed under union and the relational composition (even if we relax Definition~\ref{def:forall-fairbisim} by replacing `an equivalence relation' for `a symmetric relation')\footnote{This is how $\forall$-fair bisimulation relations are defined in \cite{Kupferman2003:fair_equiv_rel} on the states of Kripke structures. In regards to the above closure properties, we are only aware of \cite{Hojati:96-phdthesis}
who showed that $\forall$-fair simulation relations are not closed under union.}. Thus, $\forall$-fair bisimilarity may, in general, neither be an equivalence relation nor a $\forall$-fair bisimulation; in other words, $\forall$-fair bisimilarity is not a coinductive definition (like how strong bisimilarity is). Nevertheless, we deemed all $\forall$-fair bisimulation relations to be equivalences because: first, the main use of a $\forall$-fair bisimulation relation is to equate two systems that have the same behaviour sensitive to fair executions; second, the mathematics tells us that the kernel of a (fair bisimulation) function is an equivalence relation.

\section{The case of invisible actions}\label{sec:invsible}

The behaviour of a specification and its implementation is often spread over different levels of abstraction. The standard process algebraic way to relate the behaviour of an implementation with its specification is by delineating the effect of actions in lower levels of abstraction as invisible. For instance, removing a message from a buffer is considered unobservable in a rendezvous between communicating processes. This is made formal by reinterpreting the notion of (bi)simulation functions in the presence of the invisible action $\tau$.

\subsection{Branching bisimulation}
\begin{notation}
  Henceforth, $\tau\not\in\act$ will denote the invisible action and $\tact=\act\cup\{\tau\}$. Furthermore, $\exec{X,\sigma}$ is the set of executions $p$ having trace $\sigma$, i.e., $\max \dom p=\sigma$ and $\steps{} \subseteq X \times \fseq{\act} \times X$ is the \emph{weak reachability} relation on $\tuple{X,\tact,\rightarrow}$ 
  given as the smallest relation satisfying the following conditions:
  \[
  \frac{ }{x\steps \eseq x} \qquad \frac{x\step\tau x'}{x\steps \eseq x'} \qquad \frac{x\steps\eseq x'\wedge x'\steps\eseq x''}{x\steps \eseq x''} \qquad \frac{x \steps\sigma x' \land x'\step a x''}{ x \steps{\sigma a}x''}.
  \]
\end{notation}
\begin{definition}\label{def:bsimulation}
  A \emph{branching simulation} function $f$ between systems $\tuple{X,\tact,\rightarrow}$ and $\tuple{Y,\tact,\rightarrow}$ is a function $X \rTo^f Y$ satisfying the following properties:
  \begin{enumerate}
    \item Simulation of observable transitions, i.e., $\forall_{x,x'\in X,a\in\act}\ x \step a x' \implies f(x) \step{a} f(x')$, and
    \item (Possible) simulation of invisible transitions, i.e., $\forall_{x,x'\in X}\ x \step \tau x' \implies  \big(f(x)=f(x') \lor f(x) \step{\tau} f(x')\big)$.
    \item Stuttering of $\tau$-transitions, i.e., for any $x_1,x_2,x_3$ we have
    \begin{equation}\label{eq:b-stutter}
      \big(x_1 \steps\eseq x_2 \steps\eseq x_3 \land f(x_1)=f(x_3)\big) \implies f(x_1)=f(x_2).
    \end{equation}
  \end{enumerate}
  A \emph{branching bisimulation} $f$ is a branching simulation surjection $f$ satisfying the `weak' reflection of transitions, i.e., for any $x\in X,y\in Y$, and $a\in\tact$ we have
  \begin{equation}
  \label{eq:trans-breflect}
  f(x) \step a y \implies \exists_{x',x''\in X}\ (x \steps {\eseq} x' \step a x'' \land f(x')=f(x) \land f(x'')=y).
  \end{equation}
\end{definition}
\begin{definition}
  Given a labelled transition system $(X,\tact,\rightarrow)$, then a symmetric relation $\mcal R\subseteq X \times X$ is a \emph{branching bisimulation} \cite{bbisim:1996} if, and only if, the following transfer property is satisfied
  \[
  \forall_{x_1,x_2,y_1,a\in\tact} \Big( (x_1 \step a x_2 \land x_1 \mcal R y_1) \implies (a=\tau \land x_2 \mcal R y) \lor \exists_{y,y_2}\ \big( y_1 \steps{\eseq} y \step a y_2 \land x_1 \mcal R y \land x_2 \mcal R y_2\big) \Big).
  \]
  Two states $x,x'\in X$ are branching bisimilar if there exists a branching bisimulation $\mcal R$ such that $x\mcal R x'$.
\end{definition}
We work with branching bisimulation functions (not relations) because of the following result (Theorem~\ref{thm:bbisim-kernel}), which is similar in spirit to Lemma~{2.7} proved by Caucal in \cite{caucal:1992}. The difference is that Caucal's branching bisimulation functions (which he calls \emph{reduction} in \cite{caucal:1992}) do not respect the stuttering of invisible steps \eqref{eq:b-stutter}. Nevertheless, we are still able to obtain the following correspondence since the largest branching bisimulation relation satisfies the so-called \emph{stuttering lemma} of \cite{bbisim:1996}. In particular, any reduction $X \rTo^f Y$ in the sense of Caucal can be extended to a branching bisimulation function by composing it with the quotient map $Y \rTo^q Y/\mcal R$, where $\mcal R\subseteq Y \times Y$ is the largest branching bisimulation relation on $Y$.
\begin{theorem}\label{thm:bbisim-kernel}
  Two states $x,x'\in X$ of a transition system $\tuple{X,\tact,\rightarrow}$ are branching bisimilar if, and only if, there are a transition system $\tuple{Y,\tact,\rightarrow}$ and a branching bisimulation function $X \rTo^f Y$ such that $fx=fx'$.
\end{theorem}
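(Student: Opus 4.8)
The plan is to prove the two implications separately. The direction ``there is a branching bisimulation function $\Rightarrow$ branching bisimilar'' is the easy one: given $X \rTo^f Y$ with $f(x)=f(x')$, I would check that the kernel $\mcal R=\{(u,v)\mid f(u)=f(v)\}$ is a branching bisimulation relation; since $\mcal R$ is an equivalence, hence symmetric, and contains $(x,x')$, this finishes the direction. For the transfer property take $x_1\step a x_2$ with $f(x_1)=f(y_1)$. If $a\in\act$, property~(1) of Definition~\ref{def:bsimulation} gives $f(y_1)=f(x_1)\step a f(x_2)$, and the weak reflection \eqref{eq:trans-breflect}, applied at the state $y_1\in X$ and the target $f(x_2)\in Y$, produces a path $y_1\steps{\eseq}y\step a y_2$ in $X$ with $f(y)=f(y_1)=f(x_1)$ and $f(y_2)=f(x_2)$, i.e.\ exactly the second disjunct of the transfer property. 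If $a=\tau$, property~(2) gives either $f(x_1)=f(x_2)$ --- so $(x_2,y_1)\in\mcal R$, the first disjunct --- or $f(x_1)\step\tau f(x_2)$, which is handled as in the observable case. Note that neither the stuttering condition \eqref{eq:b-stutter} nor surjectivity of $f$ is needed here.

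For the converse, write $\bisim$ for branching bisimilarity on $X$; I will use that it is an equivalence relation, is itself a branching bisimulation, and satisfies the stuttering lemma of \cite{bbisim:1996}. I would take $Y=X/{\bisim}$ with $f$ the quotient map (so $f(x)=f(x')$ holds by hypothesis) and equip $Y$ with the transition relation
\[
  [u]\step a[v] \quad\text{iff}\quad \exists\,\bar u\bisim u,\ \bar v\bisim v:\ \bar u\step a\bar v \ \text{ and }\ (a\neq\tau\ \lor\ \bar u\not\bisim\bar v).
\]
The crucial point is the clause $a\neq\tau\lor\bar u\not\bisim\bar v$: it suppresses silent self-loops $[u]\step\tau[u]$, and without it the reflection condition \eqref{eq:trans-breflect} can fail, since such a loop could only be ``explained'' by the trivial $\tau$-match in the transfer clause, which supplies no genuine witnessing path. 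With this definition, properties~(1) and~(2) of a branching simulation function hold by taking $\bar u=u$, $\bar v=v$; property~(3) is precisely the assertion that $x_1\steps{\eseq}x_2\steps{\eseq}x_3$ and $x_1\bisim x_3$ imply $x_1\bisim x_2$, which is the stuttering lemma; and $f$ is visibly surjective.

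It then remains to verify \eqref{eq:trans-breflect}. Given $[u]\step a[v]$ in $Y$, unfold the definition to get $\bar u\bisim u$, $\bar v\bisim v$ with $\bar u\step a\bar v$ and ($a\neq\tau$ or $\bar u\not\bisim\bar v$). Applying the transfer property of the branching bisimulation $\bisim$ to $\bar u\step a\bar v$ together with $\bar u\bisim u$ yields either ($a=\tau$ and $\bar v\bisim u$) or a path $u\steps{\eseq}w\step a w'$ with $w\bisim\bar u$ and $w'\bisim\bar v$. The first alternative is impossible: $a=\tau$ forces $\bar u\not\bisim\bar v$, whereas $\bar v\bisim u\bisim\bar u$. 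Hence the second alternative holds, and $w,w'$ are the required witnesses, since $f(w)=[\bar u]=[u]=f(u)$ and $f(w')=[\bar v]=[v]$. (Alternatively one could first invoke Caucal's characterisation and then compose with the quotient by branching bisimilarity to repair \eqref{eq:b-stutter}; the construction above does both at once.)

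I expect the main obstacle to be \emph{calibrating the transition relation} on $X/{\bisim}$: it must be large enough for $f$ to preserve transitions yet small enough --- no silent self-loops --- for the weak reflection to go through, and seeing that this precise choice works hinges on the observation that the trivial $\tau$-match in the transfer property is ruled out exactly when the witnessing $\tau$-edge survives in the quotient. The other load-bearing ingredient is the stuttering lemma, which hands us property~(3) for free.
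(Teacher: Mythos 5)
Your proof is correct, and it takes a more self-contained route than the paper. The paper does not spell out a proof of Theorem~\ref{thm:bbisim-kernel}: it appeals to Caucal's Lemma~2.7 in \cite{caucal:1992} for the existence of a \emph{reduction} (a branching bisimulation function minus the stuttering condition \eqref{eq:b-stutter}) and then repairs stuttering by post-composing with the quotient map of the codomain by its largest branching bisimulation relation, invoking the stuttering lemma of \cite{bbisim:1996}. You instead construct the witnessing system directly as $X/{\bisim}$ with the quotient map, and the genuinely load-bearing point of your argument --- which the paper's sketch leaves hidden inside Caucal's construction --- is the calibration of the quotient's transition relation: suppressing the inert $\tau$-loops $[u]\step\tau[u]$ that arise from $\tau$-steps between bisimilar states is exactly what makes the weak reflection \eqref{eq:trans-breflect} go through (your counterexample-style justification is right: a deadlocked state bisimilar to one with an inert $\tau$-step could not reflect such a loop), while visible transitions are untouched, so preservation survives; property~\eqref{eq:b-stutter} is then the stuttering lemma, as you say. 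Your easy direction (the kernel of a branching bisimulation function is a branching bisimulation relation, using only properties (1), (2) and \eqref{eq:trans-breflect}) matches what the paper implicitly relies on. In short: the paper buys brevity by citing Caucal and composing quotients; you buy a complete, verifiable argument at the price of re-doing the quotient construction, and you correctly note the Caucal route as the alternative. The only facts you import --- that branching bisimilarity is an equivalence, is itself a branching bisimulation, and satisfies the stuttering lemma --- are the same background facts the paper assumes.
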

\subsection{The setup}
Our first step towards the characterisation of branching bisimulation is the presheaf representation of executions induced by a transition system with invisible actions. Since a transition system evolves in a step-based manner, it is sufficient to model the time by the set of natural numbers $\nat$. The only difference, when compared to the strong case (cf.\thinspace Example~\ref{ex:observation-lts}), is in the notion of observation. Instead of recording words from $\fseq\act$, our hypothetical `observer' now records words from $\ftact$. So, consider a presheaf $\op\nat \rTo^{\A_\tau} \set$ in the spirit of $\A$ as defined in Example~\ref{ex:observation-lts}.
Recall that the slice category $\presheaves{\nat}/\A_\tau$ is equivalent to the category of presheaves on the category of elements $\elements{\A_\tau}$, which is isomorphic to the category $\ftact$. 
Thus, for a given transition  system $(X,\tact,\rightarrow)$, we define a presheaf $F_{X}\in\presheaves{\ftact}$ as follows:
\begin{equation}\label{eq:presheaf-tact}
  F_X(\sigma) = \exec{X,\sigma},\quad \text{(for every $\sigma\in\ftact$)}.
\end{equation}
The action on $F_{X}$ is given by the restriction of the domain of an execution.

Incidentally, a branching simulation function $X \rTo^f Y$ between $(X,\tact,\rightarrow)$ and $(Y,\tact,\rightarrow)$ does \emph{not} induce a system homomorphism between the underlying dynamical systems (presheaves) $F_X$ and $F_Y$ because a branching simulation function does not necessarily preserve the length of the executions. For example, a sequence of transitions $\bullet \step \tau \bullet \step a \bullet$ may get mapped to a transition $\bullet \step a \bullet$.

Thus, we need a procedure that transforms a given presheaf on $\ftact$ to a presheaf on $\fseq\act$. It turns out that there already is a general result in category theory for this purpose, which we explain next.
In particular, recall the cocompletion of a category $\cat C$ through the Yoneda embedding $\cat C \rTo^{\yoneda{\cat C}} \presheaves{\cat C}$.
\begin{theorem}[see \cite{sheafbook}]\label{thm:cocompletion}
  For any functor $\cat C \rTo^h \cat D$, when $\cat C$ is small and $\cat D$ is cocomplete, there is a colimit preserving functor $\presheaves{\cat C} \rTo^{L_h} \cat D$ satisfying $L_h\circ \yoneda{\cat C} \cong h$. Moreover, $L_h$ has a right adjoint $R_h$ given by: $R_hD(C)=\cat D(hC,D)$, for each $C\in\cat C,D\in\cat D$.
\end{theorem}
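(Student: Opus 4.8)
The plan is to build $L_h$ as a pointwise colimit (it is the left Kan extension of $h$ along $\yoneda{\cat C}$), to identify its values with colimits over categories of elements, and then to read off the adjunction $L_h\dashv R_h$ from the density theorem; preservation of colimits by $L_h$ is then automatic.

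First I would recall the \emph{density theorem}: every $F\in\presheaves{\cat C}$ is the colimit of the diagram $\elements{F}\rTo\cat C\rTo^{\yoneda{\cat C}}\presheaves{\cat C}$, and $\elements{F}$ is small because $\cat C$ is small and each $FC$ is a set. Accordingly I \emph{define}
\[
L_h(F)\;=\;\colimit{(x,C)\in\elements{F}} hC,
\]
which exists since $\cat D$ is cocomplete; a choice of colimiting cocones turns $L_h$ into a functor $\presheaves{\cat C}\rTo\cat D$, its action on morphisms being forced by the universal property. For the isomorphism $L_h\circ\yoneda{\cat C}\cong h$, I would note that $\elements{\yoneda{\cat C}C}$ has a terminal object, namely $(\id{C},C)$ --- this is the Yoneda lemma --- so the defining colimit of $L_h(\yoneda{\cat C}C)$ collapses to $hC$, naturally in $C$.

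Next I would set $R_hD(C)=\cat D(hC,D)$ with the evident restriction maps (apply the contravariant functor $\cat D(-,D)$ to $h$) and $R_h$ acting on an arrow $D\rTo D'$ by postcomposition; this is plainly a functor $\cat D\rTo\presheaves{\cat C}$. The core step is the natural bijection $\cat D(L_hF,D)\cong\presheaves{\cat C}(F,R_hD)$, assembled from
\[
\cat D\Big(\colimit{(x,C)\in\elements{F}} hC,\,D\Big)\;\cong\;\limit{(x,C)\in\op{\elements{F}}}\cat D(hC,D)\;\cong\;\presheaves{\cat C}(F,R_hD),
\]
where the first isomorphism is the universal property of the colimit in $\cat D$ and the second is the density theorem again, in the form $\presheaves{\cat C}(F,G)\cong\limit{\op{\elements{F}}}G$ applied to $G=R_hD=\cat D(h-,D)$ (equivalently, $\presheaves{\cat C}(\yoneda{\cat C}C,G)\cong GC$ componentwise). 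With this adjunction in hand, $L_h$ preserves all colimits because it is a left adjoint.

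The main obstacle is not any single isomorphism above but checking that the displayed bijection is natural \emph{simultaneously} in $F$ and in $D$: this requires unwinding the universal properties on both sides and matching the induced maps on each component indexed by $(x,C)\in\elements{F}$, which is the only genuinely computation-heavy part. One caveat worth recording is that, since $L_h$ is defined via chosen colimits, it is pinned down only up to canonical natural isomorphism --- which is exactly why $L_h\circ\yoneda{\cat C}\cong h$ is stated as an isomorphism rather than an equality.
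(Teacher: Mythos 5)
Your proposal is correct, and it is essentially the standard argument from the cited source (\cite{sheafbook}): the paper itself offers no proof of Theorem~\ref{thm:cocompletion}, importing it as known, and your construction of $L_h$ as the pointwise colimit over $\elements{F}$, the collapse of $\elements{\yoneda{\cat C}C}$ to its terminal object $(\id{C},C)$ for $L_h\circ\yoneda{\cat C}\cong h$, and the hom-isomorphism chain through $\limit{\op{\elements{F}}}\cat D(hC,D)$ is exactly that proof. Your closing caveats (naturality in both variables, and $L_h$ being determined only up to canonical isomorphism by chosen colimits) are the right points to flag and do not affect correctness.
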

Using the language of Kan extensions, $L_h$ is the left Kan extension of $h$ along the Yoneda embedding $\yoneda{\cat C}$. Moreover, using the notion of a coend \cite{categories-working-mat}, we have:
\[L_h F \cong \int^{C\in\cat C} FC \odot hC,\]
where $\cat D \rTo^{S \odot\_ } \cat D$ is the copower functor given by $S\odot D =\coprod_{s\in S} D$ (taking $S$ disjoint copies of $D$). If we replace $\cat C \rTo^h \cat D$ in the above diagram by a map $\cat C \rTo^h \cat D \rTo^{\yoneda{\cat D}} \presheaves{\cat D}$, then we obtain the so-called \emph{essential geometric morphism} \cite{sheafbook} between $\presheaves{\cat C} \rTo \presheaves{\cat D}$. In full, this means that the composition functor $h^*$ not only has a right adjoint $\Pi_h$, but also a left adjoint $\Sigma_h$ (see, for instance, \cite{sheafbook} for a formal definition of a geometric morphism). 
Below, the composition functor $h^*$, its left adjoint $\Sigma_h$, and its right adjoint $\Pi_h$ are given by (up to isomorphism) $R_{\yoneda{\cat D}h}, L_{\yoneda{\cat D}h}$, and $R_{h^*\yoneda{\cat D}}$, respectively. 
\begin{corollary}[see \cite{sheafbook}]\label{cor:ess_geometric}
  Given a functor $\cat C \rTo^h \cat D$ between small categories, then the inverse image functor $\presheaves {\cat D} \rTo^{h^*} \presheaves{\cat C}$ given by $h^*G=G\circ \op{h}$ (for each $G\in\presheaves{\cat D}$) has both left and right adjoints:
  \[\Sigma_h \dashv h^* \dashv \Pi_h.\]
\end{corollary}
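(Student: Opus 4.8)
The plan is to produce each of the two adjoints by a separate application of Theorem~\ref{thm:cocompletion}, using that $\presheaves{\cat C}$ and $\presheaves{\cat D}$ are both cocomplete and $\cat C,\cat D$ are both small. First I would build the left adjoint: apply Theorem~\ref{thm:cocompletion} to the functor $\yoneda{\cat D}\circ h\colon\cat C\to\presheaves{\cat D}$, which yields a colimit-preserving functor $L_{\yoneda{\cat D}h}\colon\presheaves{\cat C}\to\presheaves{\cat D}$ together with a right adjoint $R_{\yoneda{\cat D}h}$. By the explicit formula for the right adjoint in Theorem~\ref{thm:cocompletion} followed by the Yoneda lemma, $R_{\yoneda{\cat D}h}G(C)=\presheaves{\cat D}(\yoneda{\cat D}(hC),G)\cong G(hC)=(h^*G)(C)$, naturally in $C\in\cat C$ and $G\in\presheaves{\cat D}$. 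Hence $R_{\yoneda{\cat D}h}\cong h^*$, and putting $\Sigma_h:=L_{\yoneda{\cat D}h}$ gives $\Sigma_h\dashv h^*$.

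For the right adjoint I would first observe that $h^*$ preserves all small colimits: colimits of presheaves are computed pointwise, and since $(h^*G)(C)=G(hC)$, a colimiting cocone in $\presheaves{\cat D}$ is carried by $h^*$ to a cocone whose value at each $C$ is the original cocone evaluated at $hC$, hence still colimiting. Now apply Theorem~\ref{thm:cocompletion} to $h^*\circ\yoneda{\cat D}\colon\cat D\to\presheaves{\cat C}$: this gives a colimit-preserving functor $L_{h^*\yoneda{\cat D}}\colon\presheaves{\cat D}\to\presheaves{\cat C}$ with right adjoint $R_{h^*\yoneda{\cat D}}$ and with $L_{h^*\yoneda{\cat D}}\circ\yoneda{\cat D}\cong h^*\circ\yoneda{\cat D}$. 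The crucial step is then the identification $h^*\cong L_{h^*\yoneda{\cat D}}$: both are colimit-preserving functors $\presheaves{\cat D}\to\presheaves{\cat C}$ that agree up to isomorphism on representables, and since every presheaf on $\cat D$ is canonically a colimit of representables, such a functor is determined up to natural isomorphism by its restriction to the Yoneda image. Therefore $\Pi_h:=R_{h^*\yoneda{\cat D}}$ is right adjoint to $h^*$, and altogether $\Sigma_h\dashv h^*\dashv\Pi_h$, with $h^*G=G\circ\op{h}$ as in the statement.

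I expect the genuine obstacle to be exactly that last identification, because Theorem~\ref{thm:cocompletion} as quoted asserts only the existence of a cocontinuous extension along $\yoneda{\cat D}$, not its uniqueness; closing the gap requires invoking (or briefly reproving) that a cocontinuous functor out of $\presheaves{\cat D}$ is pinned down up to isomorphism by what it does on representables, i.e.\ the co-Yoneda/density property underlying the free cocompletion. The two invocations of Theorem~\ref{thm:cocompletion}, the Yoneda-lemma computation of $R_{\yoneda{\cat D}h}$, and the pointwise-colimit argument for cocontinuity of $h^*$ are all routine bookkeeping.
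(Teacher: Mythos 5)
Your argument is correct and follows exactly the route the paper intends: the paper offers no proof of this corollary (it defers to \cite{sheafbook}) but records precisely your identifications, namely $\Sigma_h\cong L_{\yoneda{\cat D}h}$, $h^*\cong R_{\yoneda{\cat D}h}$ (via the Yoneda lemma), and $\Pi_h\cong R_{h^*\yoneda{\cat D}}$, each obtained from Theorem~\ref{thm:cocompletion}. Your flagged gap --- that a cocontinuous functor out of $\presheaves{\cat D}$ is determined up to isomorphism by its values on representables, needed to identify $h^*$ with $L_{h^*\yoneda{\cat D}}$ --- is the standard density/co-Yoneda fact and is handled adequately.
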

So, in principle, there are two ways to land in $\presheaves{\fseq\act}$ from $\presheaves{\ftact}$ whenever there is a functor between $\ftact\rTo\fseq\act$. Nevertheless, since our aim is to characterise branching bisimulation functions, we will choose the left adjoint (over the right adjoint) of the composition functor for this task. This is because to reflect (recall \eqref{eq:trans-breflect}) an observable transition -- say, of the form $\bullet \step a \bullet$ --, we only need \emph{minimal} executions which are of the form $\bullet \steps{} \bullet \step a \bullet$ (cf.\thinspace Definition~\ref{def:mexecution}). In particular, we will show (cf.\thinspace Theorem~\ref{thm:adjoint-h}) that the left adjoint $\Sigma_h$ (induced by a suitable $h$) transforms a presheaf of executions into a presheaf of minimal executions (or those executions in which trailing $\tau$-transitions are chopped off).
\subsection{An obvious, but failed attempt}\label{subsec:failed}
There is an evident functor $\ftact \rTo^h \fseq\act$ which treats the letter $\tau$ as an empty word. It is then natural to investigate whether the bisimulation maps in $\presheaves{\fseq\act}$ between any two induced presheaves $\Sigma_h F_X,\Sigma_h F_Y$ characterise the branching bisimulation map. Unfortunately, the answer is no! We explain this extensively, as this lays the formal foundation for the desired characterisation given in the next subsection.

So, consider the hiding function $\tact \rTo^h \fseq\act$ which treats $\tau$ as the unit of $\fseq\act$ (i.e., $h(\tau)=\eseq$) and treats an action $a\in\act$ as observable (i.e., $h(a)=a$ when $a\in\act$).  This lifts to an order-preserving function $\ftact \rTo^h \fseq\act$ (denoted again as $h$ by abuse of notation). Thus, we have a functor $\ftact \rTo^h \fseq\act$.
\begin{proposition}\label{prop:A*-binary-products}
  The categories $\ftact,\fseq\act$ have binary products.
\end{proposition}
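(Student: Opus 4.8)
The plan is to establish that both $\fseq\act$ and $\ftact$, viewed as posets under the prefix order, have binary products, i.e. that every pair of finite words has a greatest lower bound. The key observation is that in a poset of words ordered by the prefix relation, the meet of two words $\sigma$ and $\rho$ is precisely their longest common prefix. So the whole proof reduces to two things: (i) identifying the longest common prefix, and (ii) checking it is in fact the categorical product (which in a poset is exactly the binary meet).

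First I would define, for words $\sigma,\rho\in\fseq\act$ (and similarly in $\ftact$), the longest common prefix $\sigma\sqcap\rho$ by induction on the length of the shorter word: $\eseq\sqcap\rho=\eseq$, and $(a\sigma')\sqcap(b\rho')$ equals $\eseq$ if $a\neq b$ and $a(\sigma'\sqcap\rho')$ if $a=b$. (The alphabet is $\act$ in the first case and $\tact$ in the second; nothing else changes.) Then I would verify the universal property: (a) $\sigma\sqcap\rho\preceq\sigma$ and $\sigma\sqcap\rho\preceq\rho$, which follows by a straightforward induction, and (b) if $\pi\preceq\sigma$ and $\pi\preceq\rho$ then $\pi\preceq\sigma\sqcap\rho$ — this too goes by induction on $|\pi|$, using that $\pi$ being a common prefix forces its first letter (when nonempty) to agree with the common first letter of $\sigma$ and $\rho$. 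Since in a poset an object satisfying exactly this property is the binary product, this establishes Proposition~\ref{prop:A*-binary-products} for both categories simultaneously. I would also note the terminal object is $\eseq$ (the empty word is a prefix of everything), although that is not strictly needed for the statement as phrased.

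The proof is essentially routine — there is no real obstacle, only bookkeeping. The one point deserving a moment's care is the distinction between the two ambient posets: in $\ftact$ the letters range over $\tact=\act\cup\{\tau\}$, so a priori one might worry that the hiding functor $h$ interacts with meets, but for this proposition $\ftact$ is taken with its \emph{own} prefix order (not the order pulled back along $h$), so the argument is literally the same word-combinatorial fact applied to a larger alphabet. I would make that explicit to avoid confusion, since the next results do concern how $h$ behaves with respect to these products. Beyond that, the verification of the universal property by induction on word length is the only content, and it is short.
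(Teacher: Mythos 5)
Your proposal is correct and follows essentially the same route as the paper: both reduce the statement to the existence of binary meets in a poset, identify the meet of two words as their longest common prefix (the paper describes it as the unique word whose history is $\history{\sigma}\cap\history{\sigma'}$, you define it by induction on letters), verify the universal property, and handle $\ftact$ by the same argument over the larger alphabet $\tact$.
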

\begin{remark}
  It is worthwhile noting that $h$ does not generally preserve finite limits (or infimum $\sqcap$ in this case). Consider the words $a\tau b,ab\in\ftact$. Then, $h(a\tau b \sqcap ab) = h(a)=a$; however, $h(a\tau b) \sqcap h(ab) = ab \sqcap ab =ab$.
\end{remark}
\noindent
Furthermore, $S\odot X \cong S \times X$ (for any two sets $S,X$) and the (co)limit in any presheaf category is computed point-wise. Thus, using these observations, we calculate $\Sigma_h F$ (for $F\in\presheaves{\ftact}$) at $\varrho\in\fseq\act$ as follows:
\begin{equation}\label{eq:coend-simplification}
  \Sigma_h F (\varrho) \cong
  \int^{\sigma\in \ftact} F(\sigma)\times \yoneda {\fseq\act} (h (\sigma))(\varrho) \ \cong\
  \colimit{\sigma\in \op{\ftact},\varrho \preceq h(\sigma)} F(\sigma).
\end{equation}
Note that the reason for the last isomorphism is that every colimit can be encoded as a coend (see \cite[p.~224--225]{categories-working-mat} for a dual statement). 
In addition, \eqref{eq:coend-simplification} is the colimit of $\op{\ftact} \rTo^{F} \set$ when $\varrho=\eseq$.
Notice that $\op{\ftact}$ is directed; thus, the colimit of the filtered diagram is $\coprod_{\eseq\preceq h(\sigma)} F\sigma/{\sim}$, where $\sim\ \subseteq \coprod_{\eseq\preceq h(\sigma)} F(\sigma)\times \coprod_{\eseq\preceq h(\sigma)} F(\sigma)$ is the equivalence relation defined as follows:
\begin{equation}\label{eq:filtered-colimit}
  (\sigma, p) \sim (\sigma',p') \iff \exists_{\sigma''}\ \left( \sigma''\preceq \sigma \land \sigma''\preceq \sigma' \land p\cdot \sigma'' =p'\cdot \sigma''\right).
\end{equation}
It turns out that from a system theoretic viewpoint, the set $\colimit{\eseq\preceq h(\sigma)}F_{X}(\sigma)$ is nothing but the set of minimal executions whose observable trace is an empty observation. Theorem~\ref{thm:adjoint-h} states this result in full generality.
\begin{definition}\label{def:mexecution}
  Given a transition system $\tuple{X,\tact,\rightarrow}$ and a word $\varrho\in\fseq\act$, we say an execution $p$ has an \emph{observable trace} $\varrho$ if $h(\max \dom p)=\varrho$. Furthermore, $\mexec{X,\varrho}$ is the set of all \emph{minimal executions} w.r.t.\thinspace the prefix order $\preceq$ on executions whose observable trace is $\varrho$. Formally, $p\preceq p' \iff p'|_{\dom p} = p$ and
  \[\mexec{X,\varrho} = \Big\{p \mid \ \history{p} \cap \{q \in\exec X \mid h(\max\dom{q})=\varrho\} = \{p\}\Big\}.\]
\end{definition}
Before we prove Theorem~\ref{thm:adjoint-h}, we need a category theoretic result (which is probably folklore; see \cite[Exercise~IV.2.7]{categories-working-mat} for a dual statement), namely, that the colimit of a diagram can be decomposed into the coproducts of the colimits of diagrams with smaller shapes under certain restrictions.
\begin{lemma}\label{lem:colimit-decompose}
  Given a set $J$ and a small category $\cat C=\coprod_{j\in J}\cat C_j$ with injections $\iota_j$, then for any functor $F$ from $\cat C$ to a cocomplete category $\cat D$, we have
  $\colimit{\cat C} F  \cong \coprod_{j\in J} \colimit{\cat C_j} F \circ \iota_j.$
\end{lemma}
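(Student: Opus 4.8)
\textbf{Proof proposal for Lemma~\ref{lem:colimit-decompose}.}

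The plan is to establish the isomorphism $\colimit{\cat C} F \cong \coprod_{j\in J} \colimit{\cat C_j} F\circ\iota_j$ by exhibiting the right-hand side, together with the evident cocone, as a colimit of $F$ over $\cat C = \coprod_{j\in J}\cat C_j$, and then invoking uniqueness of colimits. Concretely, write $L_j$ for a chosen colimit $\colimit{\cat C_j} F\circ\iota_j$ with colimiting cocone $\mu^j_{C}\colon (F\circ\iota_j)(C) \to L_j$ for $C\in\cat C_j$, and let $\kappa_j\colon L_j \to \coprod_{j\in J} L_j$ be the coproduct injections. First I would define a cocone $\nu$ from $F$ to $\coprod_j L_j$ by $\nu_D = \kappa_j\circ\mu^j_{C}$ whenever $D = \iota_j(C)$; this is well defined because the objects of $\cat C$ are partitioned by the $\iota_j$, and it is a cocone because any arrow of $\cat C$ lies entirely within a single summand $\cat C_j$ (there are no arrows between distinct summands), so the relevant naturality triangle is just that of the cocone $\mu^j$.

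Next I would verify the universal property. Given any cocone $\theta$ from $F$ to an object $D'\in\cat D$, restrict it along each $\iota_j$ to obtain a cocone from $F\circ\iota_j$ to $D'$; by the universal property of $L_j$ there is a unique $\bar\theta_j\colon L_j \to D'$ with $\bar\theta_j\circ\mu^j_C = \theta_{\iota_j(C)}$ for all $C\in\cat C_j$. By the universal property of the coproduct $\coprod_j L_j$ in $\cat D$, the family $(\bar\theta_j)_{j\in J}$ assembles into a unique arrow $\bar\theta\colon \coprod_j L_j \to D'$ with $\bar\theta\circ\kappa_j = \bar\theta_j$. A short diagram chase shows $\bar\theta\circ\nu_D = \theta_D$ for every $D\in\cat C$, and uniqueness of $\bar\theta$ follows by composing any mediating arrow with each $\kappa_j$ and invoking uniqueness at the level of the $L_j$ and then of the coproduct. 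Hence $\coprod_j L_j$ with cocone $\nu$ satisfies the universal property of $\colimit{\cat C} F$, giving the claimed isomorphism, which is moreover canonical in the evident sense.

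The only genuine hypotheses used are that $\cat C$ really is a disjoint union (no morphisms cross summands) and that $\cat D$ is cocomplete, so that all the colimits $L_j$ and the coproduct $\coprod_j L_j$ exist; when $J$ or the $\cat C_j$ are infinite this is where cocompleteness is essential. I do not expect any serious obstacle here: the statement is formal and the argument is a routine "assemble a colimit out of pieces" computation. The one point requiring a little care is bookkeeping the well-definedness of $\nu$ and the cocone condition, i.e. making explicit that $\coprod_{j\in J}\cat C_j$ has hom-sets $\cat C(\iota_j C,\iota_{j'}C')$ empty for $j\neq j'$ and equal to $\cat C_j(C,C')$ for $j=j'$; once this is said, everything else is immediate. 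In our application ($\cat D = \set$) one may alternatively give the proof pointwise, since colimits in $\set$ are computed as quotients of disjoint unions and the partition of the index category induces the corresponding partition of that disjoint union, but the abstract argument above is cleaner and is what I would record.
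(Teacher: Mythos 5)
Your proof is correct, but it takes a genuinely different route from the one in the paper. You verify the universal property directly: you assemble the cocone $\nu$ from $F$ to $\coprod_j L_j$ using the colimiting cocones $\mu^j$ and the coproduct injections $\kappa_j$, observe that the cocone condition is unproblematic because no morphism of $\cat C=\coprod_j\cat C_j$ crosses summands, and then factor an arbitrary cocone $\theta$ uniquely through $\coprod_j L_j$ by first using the universal property of each $L_j$ and then that of the coproduct. The paper instead argues by representability: it writes $\cat D(\coprod_{j}\colimit{\cat C_j} F\circ\iota_j,\, D)$ as a product of hom-sets, identifies each factor with the set of cocones $[\cat C_j,\cat D](F\circ\iota_j,\Delta D)$ expressed as an end, merges the product of ends over $j$ and $\cat C_j$ into a single end over $\cat C$, recognises this as $[\cat C,\cat D](F,\Delta D)\cong\cat D(\colimit{\cat C}F,D)$, and concludes by the Yoneda embedding since all isomorphisms are natural in $D$. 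The two arguments prove the same statement under the same hypotheses; yours is more elementary and self-contained (only universal properties and a diagram chase), and it makes the key structural fact --- that hom-sets between distinct summands are empty --- fully explicit, whereas the paper's calculation is more compact and mechanical once one is comfortable with ends and representable functors, trading the explicit chase for naturality bookkeeping absorbed into the end calculus. Either proof would serve; there is no gap in yours.
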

\begin{theorem}\label{thm:adjoint-h}
  For a given transition system $\tuple{X,\tact,\rightarrow}$ and $\varrho\in\fseq\act$, we have
  $\Sigma_hF_X(\varrho) \cong \mexec{X,\varrho}.$
\end{theorem}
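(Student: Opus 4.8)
The plan is to compute $\Sigma_h F_X(\varrho)$ using the simplification \eqref{eq:coend-simplification} together with the decomposition Lemma~\ref{lem:colimit-decompose}, and then exhibit an explicit bijection with $\mexec{X,\varrho}$. First I would record the shape of the indexing category for the colimit in \eqref{eq:coend-simplification}: it is the full subcategory $\cat J_\varrho$ of $\op{\ftact}$ on those $\sigma\in\ftact$ with $\varrho\preceq h(\sigma)$, i.e.\ those $\tau$-interleaved words whose observable trace extends $\varrho$. The key structural observation is that $\cat J_\varrho$ splits as a coproduct $\coprod_{(\sigma_0,\text{``$\varrho$ is exactly consumed''})}$ — more precisely, every $\sigma$ with $\varrho\preceq h(\sigma)$ has a \emph{unique minimal} prefix $\sigma_0\preceq\sigma$ with $h(\sigma_0)=\varrho$ (it is obtained by reading $\sigma$ up to and including the letter that completes $\varrho$, or the whole of $\sigma$ followed by a block of $\tau$'s — here one must be careful, but a minimal such $\sigma_0$ exists and is unique since $h$ only collapses $\tau$'s). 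Two indices $\sigma,\sigma'$ then lie in the same connected component of $\cat J_\varrho$ iff their minimal prefixes agree, and each component is itself directed (it is a principal up-set in $\op{\ftact}$, i.e.\ the down-set of $\sigma_0$ in $\ftact$, which is a finite chain-like poset — in any case filtered). So by Lemma~\ref{lem:colimit-decompose}, $\Sigma_h F_X(\varrho)\cong\coprod_{\sigma_0}\colimit{\sigma\preceq\sigma_0,\ \sigma\in\op{\ftact}}F_X(\sigma)$, indexed over the $\sigma_0$ that are minimal with $h(\sigma_0)=\varrho$.

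Next I would identify each summand. Fixing such a $\sigma_0$, the colimit $\colimit{\sigma\preceq\sigma_0}F_X(\sigma)$ is a filtered colimit of executions, computed as in \eqref{eq:filtered-colimit}: it is $\coprod_{\sigma\preceq\sigma_0}\exec{X,\sigma}/{\sim}$ where $(\sigma,p)\sim(\sigma',p')$ iff $p$ and $p'$ restrict to a common execution on a common prefix of $\sigma,\sigma'$. Since $\sigma_0$ is the top of this poset, every equivalence class has a representative of the form $(\sigma_0,p)$ with $p\in\exec{X,\sigma_0}$ — but distinct $p,p'\in\exec{X,\sigma_0}$ can still be identified if they agree on a proper prefix, which happens exactly when the trailing letters of $\sigma_0$ (necessarily all $\tau$, since $\sigma_0$ is minimal over $\varrho$ any letter beyond the one completing $\varrho$ is a $\tau$) are ``stuttered away''. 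Unwinding this, the class of $(\sigma_0,p)$ is determined by the restriction of $p$ to its shortest prefix which is still a minimal execution with observable trace $\varrho$ — that is, precisely by an element of $\mexec{X,\varrho}$ whose maximal domain has underlying word of the form of $\sigma_0$ with trailing $\tau$'s removed. Collecting over all the minimal $\sigma_0$'s, the disjoint union of these summands is in natural bijection with $\mexec{X,\varrho}$, because every minimal execution with observable trace $\varrho$ arises from exactly one $\sigma_0$ (namely the underlying word of its maximal domain) and, within that summand, from exactly one equivalence class.

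Finally I would check naturality: the action $p\mapsto p\cdot\varrho'$ on $\mexec{X,\cdot}$ (for $\varrho'\preceq\varrho$ in $\fseq\act$) must agree, under the bijection, with the map $\Sigma_h F_X(\varrho)\to\Sigma_h F_X(\varrho')$ induced by functoriality of $\Sigma_h F_X$; since both are ultimately given by restricting the domain of an execution (and then passing to the minimal execution with the shorter observable trace), this is a routine diagram chase using the description of the $\Sigma_h F_X$-action obtained from \eqref{eq:coend-simplification}.

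I expect the main obstacle to be the bookkeeping around minimal prefixes: pinning down, for a word $\sigma$ with $\varrho\preceq h(\sigma)$, the correct notion of the unique minimal $\sigma_0\preceq\sigma$ with $h(\sigma_0)=\varrho$, verifying that the fibres of $\sigma\mapsto\sigma_0$ are exactly the connected (and filtered) components of $\cat J_\varrho$, and then matching the resulting quotient of executions with Definition~\ref{def:mexecution} — in particular making sure that ``trailing $\tau$-transitions are chopped off'' corresponds precisely to the identifications forced by \eqref{eq:filtered-colimit}. The rest is an assembly of Lemma~\ref{lem:colimit-decompose}, the filtered-colimit formula \eqref{eq:filtered-colimit}, and the pointwise computation of colimits in presheaf categories.
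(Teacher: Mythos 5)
Your overall strategy coincides with the paper's: simplify via \eqref{eq:coend-simplification}, split the indexing poset $\{\sigma\in\ftact\mid\varrho\preceq h(\sigma)\}$ according to the unique minimal prefix $\sigma_0$ realising $\varrho$, apply Lemma~\ref{lem:colimit-decompose}, and finish by identifying minimal executions with executions whose maximal domain word lies in $M(\varrho)=\{\sigma\in\ftact\mid\history{\sigma}\cap\inv{h}(\varrho)=\{\sigma\}\}$. However, the central step is carried out incorrectly. The fibre of the map $\sigma\mapsto\sigma_0$ is the set of \emph{extensions} of $\sigma_0$, i.e.\ the up-set $\{\sigma'\mid\sigma_0\preceq\sigma'\}$ in $\ftact$; no proper prefix $\sigma'\prec\sigma_0$ even lies in the indexing poset, since $h(\sigma')\preceq\varrho$ and $h(\sigma')=\varrho$ would contradict minimality of $\sigma_0$. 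You instead describe each component as ``the down-set of $\sigma_0$ in $\ftact$'', a finite chain, and write the summand as $\colimit{\sigma\preceq\sigma_0}F_X(\sigma)$. Read literally this is false: in that op-chain the empty word $\eseq$ is terminal, so the colimit of the restriction diagram is $F_X(\eseq)\cong X$, and the resulting coproduct over $M(\varrho)$ is not $\mexec{X,\varrho}$.

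The analysis of the summand inherits this confusion. Within the component of $\sigma_0$, the witness $\sigma''$ in \eqref{eq:filtered-colimit} must itself satisfy $\varrho\preceq h(\sigma'')$, hence $\sigma_0\preceq\sigma''$; consequently every $(\sigma',p)$ is identified with $(\sigma_0,p\cdot\sigma_0)$ and \emph{no two distinct} elements of $\exec{X,\sigma_0}$ are ever merged. Your claim that ``distinct $p,p'\in\exec{X,\sigma_0}$ can still be identified if they agree on a proper prefix'' is therefore wrong, and if such identifications were in force the quotient would be strictly coarser than $\mexec{X,\varrho}$ (two distinct executions over $\sigma_0=ab$ agreeing on their $a$-prefix would be fused, although they are distinct minimal executions). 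Relatedly, a word $\sigma_0\in M(\varrho)$ has no trailing $\tau$'s to be ``stuttered away'': the chopping of trailing $\tau$'s happens because executions over $\sigma_0\tau^n$, which sit \emph{above} $\sigma_0$ in its component, are identified with their restrictions to $\sigma_0$. With the components corrected to up-sets the argument becomes the paper's: each summand colimit is $\exec{X,\sigma_0}$ (the paper gives the mutually inverse maps $[p,\sigma']_\sim\mapsto p\cdot\sigma_0$ and $p\mapsto[p,\sigma_0]_\sim$), and $\coprod_{\sigma_0\in M(\varrho)}\exec{X,\sigma_0}\cong\mexec{X,\varrho}$ by \eqref{eq:mexec-char}. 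Finally, the naturality you sketch is not part of this statement; it is Theorem~\ref{thm:square8commutes}.
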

Next, we explore the action of the presheaf $\Sigma_hF_X$ from a system theoretic viewpoint. Firstly, it is defined by the universal property of the colimit. Let $\varrho'\preceq \varrho$ and let $\sigma$ be a minimal word such that $h(\sigma)=\varrho$. Then, there is a unique minimal word $\sigma_{\varrho'}=\bigsqcap \{\sigma'\preceq\sigma \mid h(\sigma')=\varrho'\}$. Note that this infimum exists since the history of a word in $\ftact$ is a finite totally ordered set of words. Therefore, the family of arrows depicted by the dotted arrows in \eqref{eq:sigma_h-action} forms a cone, where $\act_{\tau,\varrho}^\star=\{\sigma\in\ftact\mid \varrho \preceq h(\sigma)\}$ is a sub-forest of $\ftact$. Thus, the universal property of the colimit gives a map $\Sigma_hF_{X}\varrho \rTo \Sigma_hF_{Y}\varrho'$, which we denote by $\Sigma_h(\varrho,\varrho')$.

\vspace{-0.3cm}
\begin{minipage}[t]{0.4\textwidth}
  \begin{diagram}[width=1.6cm]
  F_X\sigma & \rAllMap & F_X\sigma_{\varrho'}\\
  \dTo & \eqnum\label{eq:sigma_h-action} & \dAllMap\\
  \colimit{\sigma\in\act_{\tau,\varrho}^\star} F_{X}\iota_{\varrho} \sigma & \rExistMap &\colimit{\sigma\in\act_{\tau,\varrho'}^\star} F_{X}\iota_{\varrho'} \sigma
 \end{diagram}
\end{minipage}
\hspace{0.8cm}
\begin{minipage}[t]{0.45\textwidth}
  \begin{diagram}[width=1.8cm]
    \Sigma_h F_{X} \varrho & \rTo~{\cong} & \mexec{X,\varrho}\\
    \dTo~{\Sigma_h(\varrho,\varrho')} & \eqnum\label{eq:naturality-mexec} & \dTo~{\mpast(\varrho,\varrho')}\\
    \Sigma_h F_{X} \varrho' & \rTo~{\cong} & \mexec{X,\varrho'}
  \end{diagram}
\end{minipage}
\vspace{0.1cm}

\noindent
Secondly, for any $\varrho' \preceq \varrho$ with $\varrho,\varrho'\in\fseq\act$, we define a map:
\[
\mexec{X,\varrho}\rTo^{\mpast(\varrho,\varrho')} \mexec{X,\varrho'} \quad \text{given by}\quad p \mapsto p|_{\history{\sigma_{\varrho'}}}\ ,
\]
where $\sigma=\max \dom p$. 
Now we can establish that the isomorphism in Theorem~\ref{thm:adjoint-h} is natural in $\varrho\in\fseq\act$.
\begin{theorem}\label{thm:square8commutes}
  For any $\varrho,\varrho'\in\fseq\act$ with $\varrho'\preceq\varrho$, the square in \eqref{eq:naturality-mexec} commutes.
\end{theorem}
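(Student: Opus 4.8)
The plan is to prove commutativity of the square \eqref{eq:naturality-mexec} by a short diagram chase on colimit representatives, once the two horizontal isomorphisms (those of Theorem~\ref{thm:adjoint-h}) and the two vertical maps have been made explicit on those representatives.

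First I would set up the combinatorics of $h$. For any $\sigma\in\ftact$ with $\varrho\preceq h(\sigma)$ put $\sigma_\varrho=\bigsqcap\{\sigma'\preceq\sigma\mid h(\sigma')=\varrho\}$, generalising the notation of \eqref{eq:sigma_h-action} to non-minimal $\sigma$ (the set is a chain of prefixes of $\sigma$, so the infimum is its shortest element and has $h$-image $\varrho$). Since $h$ deletes the $\tau$'s, a direct check shows $\sigma_\varrho$ ends in a visible letter and is $\preceq$-minimal among \emph{all} words with $h$-image $\varrho$; equivalently, $\sigma_\varrho$ is the root of the tree-component of $\sigma$ in the sub-forest $\act_{\tau,\varrho}^\star$. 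Two consequences I would record: (i) if $r$ is itself such a root, then $r_\varrho=r$; and (ii) for $\varrho'\preceq\varrho$ the word $r_{\varrho'}=\bigsqcap\{\sigma'\preceq r\mid h(\sigma')=\varrho'\}$ is again a root, now of $\act_{\tau,\varrho'}^\star$, so $(r_{\varrho'})_{\varrho'}=r_{\varrho'}$. Next, by \eqref{eq:coend-simplification} and Lemma~\ref{lem:colimit-decompose}, $\Sigma_hF_X(\varrho)$ is the coproduct over the roots $r$ of $\act_{\tau,\varrho}^\star$ of filtered colimits of restriction maps; a typical element is a class $[(\sigma,p)]$ with $\varrho\preceq h(\sigma)$ and $p\in\exec{X,\sigma}$, where $(\sigma,p)$ and $(\sigma',p')$ represent the same class iff $p,p'$ agree after restriction to some common prefix inside the forest (this is \eqref{eq:filtered-colimit}). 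Taking that prefix to be $\sigma_\varrho$ shows $[(\sigma,p)]=[(\sigma_\varrho,\,p\cdot\sigma_\varrho)]$, so every class has a \emph{root representative} $[(r,p)]$ with $r$ a root and $\dom p=\history r$. With this, the isomorphism of Theorem~\ref{thm:adjoint-h}, read off from its construction, is $[(\sigma,p)]\mapsto p\cdot\sigma_\varrho\in\mexec{X,\varrho}$, which on a root representative is simply $[(r,p)]\mapsto p$; and unwinding the defining cone \eqref{eq:sigma_h-action} (its leg at $\sigma$ factors as $F_X\sigma\to F_X\sigma_{\varrho'}\to\Sigma_hF_X(\varrho')$) gives $\Sigma_h(\varrho,\varrho')\colon[(\sigma,p)]\mapsto[(\sigma_{\varrho'},\,p\cdot\sigma_{\varrho'})]$.

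Then the chase is immediate. Fix a root representative $[(r,p)]$ of an element of $\Sigma_hF_X(\varrho)$ and set $s=r_{\varrho'}$. Along the top-then-right path, the top isomorphism sends $[(r,p)]$ to $p\in\mexec{X,\varrho}$ with $\max\dom p=r$, and $\mpast(\varrho,\varrho')$ then restricts $p$ to $\history s$, giving $p\cdot s$. Along the left-then-bottom path, $\Sigma_h(\varrho,\varrho')$ sends $[(r,p)]$ to $[(s,\,p\cdot s)]$ with the same $s$; since $s$ is a root of $\act_{\tau,\varrho'}^\star$, fact (ii) gives $s_{\varrho'}=s$, so the bottom isomorphism sends $[(s,\,p\cdot s)]$ to $(p\cdot s)\cdot s=p\cdot s$. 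The two outputs coincide, and since every element of $\Sigma_hF_X(\varrho)$ admits a root representative while all four maps are genuine functions, this establishes that \eqref{eq:naturality-mexec} commutes.

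The only real work lies in the second paragraph: extracting the explicit action of the Theorem~\ref{thm:adjoint-h} isomorphism on a colimit class and of $\Sigma_h(\varrho,\varrho')$ from the cone \eqref{eq:sigma_h-action}, together with the bookkeeping relating roots of $\act_{\tau,\varrho}^\star$ to roots of $\act_{\tau,\varrho'}^\star$. These use the equivalence \eqref{eq:filtered-colimit} and the universal property of the colimit, but present no genuine difficulty; after them, commutativity is the single identity ``restrict $p$ to $\history{r_{\varrho'}}$'' computed along either edge. A representative-free alternative would paste the colimiting cocones witnessing Theorem~\ref{thm:adjoint-h} onto the defining cone of $\Sigma_h(\varrho,\varrho')$ and compare legs; it relies on the same identity $s_{\varrho'}=s$ and is no shorter, so I would present the chase as the main argument.
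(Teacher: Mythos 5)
Your proposal is correct and follows essentially the same route as the paper's proof: both decompose the colimit via Lemma~\ref{lem:colimit-decompose} into a coproduct over the minimal words $M(\varrho)$, read off the Theorem~\ref{thm:adjoint-h} isomorphism and the map $\Sigma_h(\varrho,\varrho')$ on (root/minimal) representatives as $p\mapsto p\cdot\sigma_{\varrho'}$ using the cone \eqref{eq:sigma_h-action}, and conclude with the one-line chase that both composites are restriction to $\history{\sigma_{\varrho'}}$; the paper merely spells out in addition the characterisation $p\in\mexec{X,\varrho}\iff\max\dom p\in M(\varrho)$, which you subsume in your bookkeeping. (Only a cosmetic slip: your remark that $\sigma_\varrho$ ends in a visible letter fails for $\varrho=\eseq$, but nothing in the argument uses it.)
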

Now that we know what $\Sigma_h$ does to a presheaf, we use the category $\presheaves{\fseq\act}$ as our semantic universe to handle invisible actions. Thus, for a given $\tuple{X,\tact,\rightarrow}$, we let $\semantics{X}=\Sigma_h F_X$. In addition, for a branching simulation function $X\rTo^f Y$, we use the isomorphic view of minimal executions and let $\semantics{f}_{\varrho}(p)=p_f$ (for each $\varrho\in\fseq\act$), where $p_f$ is an execution defined inductively using the following rules.
\begin{itemize}
  \item If $\dom p=\eseq$ then $\dom {p_f} = \eseq$ and $p_f(\eseq)=f(p(\eseq))$.
  \item If $p \step a p'$, $p_f \step a q$, $\last q=f(\last {p'})$, and $a\in\act$ then $p'_f =q$.
  \item If $p \step \tau p'$ and $f(\last p) = f(\last {p'})$ then $p'_f=p_f$.
  \item If $p \step \tau p'$, $f(\last p) \neq f(\last {p'})$, $p_f \step \tau q,\last q =f(\last {p'})$ then $p'_f = q$.
\end{itemize}
Here, $p \step a p'\iff \dom {p'}= \dom p a \ \land \ p\prec p'$; the function $\last{p}$ returns the last visited state by the execution $p$, i.e., $\last p = p(\max \dom p)$.
\begin{lemma}\label{lem:mexec-preservation}
  For a given branching simulation function $X \rTo^f Y$ and a minimal execution $p\in\mexec{X,\varrho}$, we have $p_f\in\mexec{Y,\varrho}$.
\end{lemma}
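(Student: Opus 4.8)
The plan is to prove a slightly stronger statement about the construction $p\mapsto p_f$ by induction on $|\max\dom p|$, and then read off minimality from the last letter of $\max\dom p$. Concretely, I would first show that for every execution $p$ of $X$ the four rules define $p_f$ unambiguously as a genuine execution of $Y$ that, in addition, satisfies the two invariants $\last{p_f}=f(\last p)$ and $h(\max\dom{p_f})=h(\max\dom p)$. The base case is immediate from $p_f(\eseq)=f(p(\eseq))$. For the inductive step, write $p$ as its unique one-step predecessor $\bar p$ extended by a transition. If that transition carries a label $a\in\act$, then $\last{\bar p}\step a\last p$, so simulation of observable transitions (Definition~\ref{def:bsimulation}(1)) gives $f(\last{\bar p})\step a f(\last p)$; together with the inductive hypothesis $\last{\bar p_f}=f(\last{\bar p})$ this is exactly what is needed to extend $\bar p_f$ by a uniquely determined $a$-transition ending in $f(\last p)$ — the execution $q$ of the visible-transition rule — so $\last{p_f}=f(\last p)$ again. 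If the transition is a $\tau$ with $f(\last{\bar p})=f(\last p)$, then $p_f=\bar p_f$ and both invariants are inherited verbatim. If it is a $\tau$ with $f(\last{\bar p})\neq f(\last p)$, then possible simulation of invisible transitions (Definition~\ref{def:bsimulation}(2)) forces $f(\last{\bar p})\step\tau f(\last p)$, so once more $\bar p_f$ extends uniquely by a $\tau$-transition to $f(\last p)$. In each case the rule that fires is determined by the last letter of $\max\dom p$ and, for $\tau$, by whether $f$ identifies the two endpoints, so $p_f$ is well defined; and since $h$ erases every $\tau$, appending or omitting a $\tau$ leaves the $h$-image of the trace unchanged while appending an $a\in\act$ appends $a$ on both sides, so the trace invariant is preserved. (Only items (1) and (2) of Definition~\ref{def:bsimulation} enter this argument; the stuttering property~\eqref{eq:b-stutter} is not used here.)

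Applying the claim to the given $p\in\mexec{X,\varrho}$ yields $h(\max\dom{p_f})=h(\max\dom p)=\varrho$, so $p_f$ is an execution of $Y$ with observable trace $\varrho$. For minimality I would use the elementary fact that an execution $q$ of $X$ lies in $\mexec{X,\varrho}$ iff $h(\max\dom q)=\varrho$ and $\max\dom q$ does not end in $\tau$ (and the same for $Y$): a proper prefix of $q$ has the same observable trace precisely when the trace of $q$ ends in $\tau$, since $h$ kills exactly the $\tau$'s and a proper prefix with equal $h$-image must arise by stripping trailing $\tau$'s. Since $p$ is minimal, $\max\dom p$ does not end in $\tau$: either $\max\dom p=\eseq$, in which case $\varrho=\eseq$ and $\max\dom{p_f}=\eseq$; or $\max\dom p$ ends in some $a\in\act$, in which case the last transition built into $p_f$ is, by the visible-transition rule, an $a$-transition, so $\max\dom{p_f}$ also ends in $a$ and hence has no trailing $\tau$. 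In both cases $p_f\in\mexec{Y,\varrho}$, as required.

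The only genuinely delicate point is the induction in the first step: one must verify \emph{simultaneously} that each recursive rule is non-vacuous and unambiguous and that the equality $\last{p_f}=f(\last p)$ — which is precisely what makes the clauses ``$\bar p_f\step a q$'' and ``$\bar p_f\step\tau q$'' applicable — survives every step, the branching-simulation axioms being invoked exactly to furnish the required outgoing transition in the two cases where $p_f$ actually grows. Once this invariant is in place, the rest is routine bookkeeping about the prefix order on $\ftact$ and the letter-erasing map $h$.
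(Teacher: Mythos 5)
Your proposal is correct, and it takes a slightly different route from the paper that is worth comparing. The paper proves the lemma by structural induction directly on the \emph{minimal} execution $p$: the base case and the visible $a$-step case are treated, the $\tau$-step case is dismissed because a minimal execution cannot end in a $\tau$-transition, and minimality of $p_f$ is then read off, just as in your argument, from the fact that its trace ends in a letter of $\act$ (the paper's characterization \eqref{eq:mexec-char} of minimal executions is exactly your ``no trailing $\tau$'' criterion). You instead strengthen the statement to \emph{all} executions --- well-definedness of $p\mapsto p_f$, the invariant $\last{p_f}=f(\last p)$, and preservation of the observable trace $h(\max\dom{p_f})=h(\max\dom p)$ --- and only afterwards specialize to minimal $p$. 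This buys two things. First, it makes explicit the invariant $\last{p_f}=f(\last p)$ and the existence/uniqueness of the extensions demanded by the clauses ``$p_f\step a q$'' and ``$p_f\step\tau q$'', which the paper's proof uses silently. Second, it sidesteps a wrinkle in the paper's induction: there the induction hypothesis is applied to the one-step predecessor $p'$ of $p$, but $p'$ need not be minimal (its trace may end in $\tau$, e.g.\ $\max\dom p=a\tau b$), so the hypothesis as literally stated does not cover it; your generalization to arbitrary executions is precisely the strengthening that makes the induction go through cleanly, at the cost of carrying the extra invariants. Your remark that only clauses (1) and (2) of Definition~\ref{def:bsimulation} are needed is also consistent with the paper, where the stuttering condition \eqref{eq:b-stutter} enters only later (in the proof of Theorem~\ref{thm:bbisim-char}).
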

Next let $\lts_\tau$ be the category of transition systems with possible invisible steps, which comprises of transition systems as objects and morphisms are the collection of identity functions and branching simulation functions. Thus, we can now give a presheaf semantics for the category $\lts_\tau$. 
\begin{theorem}\label{thm:lts-tau_semantic1}
  The mapping $\lts_\tau \rTo^{\semantics{\_}} \presheaves{\fseq\act}$ is a faithful functor.
\end{theorem}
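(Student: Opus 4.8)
The plan is to establish the two defining properties of a presheaf semantics separately: that $\semantics{\_}$ is indeed a functor (i.e.\ respects identities and composition) and that it is faithful. For functoriality, the only genuine content is preservation of composition, since identities are sent to identities by construction of $p_f$ (when $f=\id X$ the inductive rules produce $p_f=p$). So given composable branching simulation functions $X \rTo^f Y \rTo^g Z$, I would prove that $(p_f)_g = p_{g\circ f}$ for every minimal execution $p$, by induction on $\dom p$ using the four inductive clauses defining $p_f$; Lemma~\ref{lem:mexec-preservation} guarantees that each intermediate object $p_f$ lives in the right $\mexec{\cdot,\varrho}$ so that the construction of $(p_f)_g$ is well-defined. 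The base case is immediate from $g(f(p(\eseq)))=(g\circ f)(p(\eseq))$, and each inductive clause matches up a clause for $f$ with the corresponding clause for $g$ (possibly the ``$\tau$ collapses'' clause for one and the ``$\tau$ is observed'' clause for the other), so this is a finite case analysis rather than a deep argument. One also has to check naturality of $\semantics{f}$ as a morphism in $\presheaves{\fseq\act}$, i.e.\ that $\semantics{f}$ commutes with the restriction maps $\mpast(\varrho,\varrho')$; this follows because $p_f$ is built step-by-step along $\dom p$, so truncating $p$ to $\history{\sigma_{\varrho'}}$ and then applying the construction yields the same result as applying the construction and then truncating — again a routine induction, made clean by Theorem~\ref{thm:square8commutes} which pins down what the restriction maps actually do on minimal executions.

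For faithfulness, suppose $f,g : X \to Y$ are branching simulation functions with $\semantics{f}=\semantics{g}$; I must show $f=g$ as functions $X\to Y$. Fix a state $x\in X$. Consider the trivial minimal execution $p$ with $\dom p = \eseq$ and $p(\eseq)=x$, which lies in $\mexec{X,\eseq}$. Then $\semantics{f}_\eseq(p)=p_f$ has $p_f(\eseq)=f(x)$ and similarly $p_g(\eseq)=g(x)$, and $\semantics{f}=\semantics{g}$ forces $p_f=p_g$, hence $f(x)=g(x)$. Since $x$ was arbitrary, $f=g$. This is the same device used implicitly in Example~\ref{ex:semantics-lts}/Proposition~\ref{prop:lts-semantics}: evaluating the induced presheaf map on the ``point'' executions recovers the underlying function.

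The main obstacle is the bookkeeping in the functoriality argument, specifically verifying $(p_f)_g = p_{g\circ f}$: the inductive definition of $p_f$ branches on whether $f$ identifies the endpoints of a $\tau$-transition, and when one composes, a $\tau$-step of $X$ may be collapsed by $f$ but not by $g$, or collapsed by neither, etc., so one has to be careful that the ``stuttering'' clause \eqref{eq:b-stutter} for $g\circ f$ — which does hold, since a composite of branching simulations is a branching simulation — is exactly what makes the collapsed cases consistent with the non-collapsed ones. I would first record as a small lemma that $g\circ f$ is again a branching simulation function (checking properties 1--3 of Definition~\ref{def:bsimulation}, where property 3 for the composite uses property 3 for $f$ and for $g$ together with property 2), and then the case analysis for $(p_f)_g=p_{g\circ f}$ goes through cleanly. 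Everything else — identities, naturality, faithfulness — is then short.
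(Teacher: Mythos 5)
Your plan overlaps with the paper's own proof on the two points it actually verifies: (i) that $\semantics f$ is a morphism of presheaves, i.e.\ the naturality condition $\mpast(\varrho,\varrho')\circ\semantics f_\varrho=\semantics f_{\varrho'}\circ\mpast(\varrho,\varrho')$, proved by structural induction on minimal executions using the description of the restriction maps from Theorems~\ref{thm:adjoint-h} and~\ref{thm:square8commutes}; and (ii) faithfulness via the empty executions $\eseq_x$, which is exactly your argument. Your treatment of composition is extra: the paper never checks $(p_f)_g=p_{g\circ f}$, so your lemma that branching simulations compose (for stuttering of $g\circ f$ it suffices that $g$ satisfies \eqref{eq:b-stutter} and that $f$ preserves $\steps{\eseq}$, which follows from clauses 1--2) and the ensuing case analysis are a genuine addition. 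Do not understate the naturality step, though: it is essentially the whole of the paper's proof, and the induction cannot be run naively on the immediate predecessor $p'$ of $p$, since $p'$ need not be minimal; the paper applies the induction hypothesis to $p'\cdot\sigma_{\varrho'}$, which is minimal by \eqref{eq:mexec-char}, and uses that a minimal execution never ends in a $\tau$-step. Your prefix-commutation idea does work, but it requires two explicit observations: that $q\mapsto q_f$ sends prefixes of $p$ to prefixes of $p_f$, and that two prefixes of $p_f$ that are both minimal with observable trace $\varrho'$ must coincide.

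The one step that fails as stated is the parenthetical claim that the inductive rules give $p_{\id}=p$. Take $X$ with transitions $x\step\tau x$ and $x\step a y$. The identity on $X$ is a branching simulation here (condition \eqref{eq:b-stutter} only forbids non-trivial $\tau$-cycles), and the execution $p$ given by $x\step\tau x\step a y$ lies in $\mexec{X,a}$; but the third clause of the definition of $p_f$ (the case $f(\last p)=f(\last{p'})$) erases the self-loop, so $p_{\id}$ is the one-step execution $x\step a y\neq p$. Hence preservation of identities cannot be derived from the $p_f$ construction; it has to be imposed, which is precisely why the morphisms of $\lts_\tau$ are declared to be ``identity functions and branching simulation functions'' and $\semantics{\id_X}$ is taken to be the identity natural transformation by fiat. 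With that amendment (and identities handled separately, and trivially, in the composition check) your proposal goes through and, for the parts the paper proves, follows the same route.
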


Finally, we can invoke the bisimulation maps in $\presheaves{\fseq\act}$ but the characterisation of branching bisimulation functions fails to hold. The mismatch is in the reflection \eqref{eq:trans-breflect} of invisible steps (not with the reflection of observable transitions), which we explain next in the following example.
\begin{example}
  Consider the two transition systems and the branching simulation $f$  depicted below with dashed lines. Note that $f$ is \emph{not} a branching bisimulation function (even though it is surjective) because it fails to reflect the transition $y_1 \step \tau y_3$.
  \begin{center}
  \begin{tikzpicture}
    \node (x1) {$x_1$};
    \node (x2) at ($(x1.center)+(0,-1.5)$) {$x_2$};
    \node (x3) at ($(x2.center)+(0.8,0)$) {$x_3$};
    \node (y1) at ($(x1.center)+(3,0)$) {$y_1$};
    \node (y3) at ($(y1.center)+(0.8,-1.5)$) {$y_3$};
    \node (y2) at ($(y1.center)+(-0.8,-1.5)$) {$y_2$};
    \path[->]
        (x1) edge node[left]{$a$} (x2)
        (y1) edge node[above right]{$\tau$} (y3)
        (y1) edge node[above left]{$a$} (y2);
    \path[->,dashed]
        (x1) edge (y1)
        (x2) edge [bend right] (y2)
        (x3) edge [bend left] (y3);
  \end{tikzpicture}
  \end{center}
  Yet $\semantics{f}$ is a bisimulation map in $\presheaves{\fseq\act}$. For this consider a given commutative square \eqref{eq:bisim-presheaf} in $\presheaves{\fseq\act}$. The crucial case in defining the functions 
	$k_\varrho$ is when $\varrho\in\{\eseq,a\}$.
  \begin{itemize}
    \item Let $q\in Q(\eseq)$. If $n_\eseq (q)=\eseq_{y_i}$ then we let $k_\eseq(q)=\eseq_{x_i}$ (for $i\in\{1,2,3\}$).
    \item Let $q\in Q(a)$. Then, $n_a(q)$ is the minimal execution witnessing $y_1 \step a y_2$ since there are no other minimal executions in $\mexec{Y,a}$. Thus, we let $k_a(q)$ to be the minimal execution witnessing the transition $x_1 \step a x_2$.
  \end{itemize}
\end{example}

\subsection{Characterisation of branching bisimulation}
Unfortunately, all the information related to silent transitions is lost by transforming a presheaf using the functor $\Sigma_h$ because all the executions with empty observations get fused into their corresponding empty (minimal) executions. In retrospect, the problem lies in our specification of observation since our hypothetical `observer' $\obs$ is unable to differentiate between \emph{empty observation due to no system move} and \emph{empty observation due to zero or more system moves}. Therefore, to model this latter observation, we introduce a constant $\bar\tau$ which can be observed anytime after $0$. So for this section, we define $\obs$ in the next page as follows:

\begin{equation*}\begin{aligned}
   \obs(0) = \A_\tau(0) \quad \text{and} \quad \obs(n)=\A_\tau(n) \cup  \{\bar\tau\}\quad (\text{for every $n>0$});\\
   \sigma\cdot m=
\begin{cases}
  \sigma|_{m}, & \text{if}\ \sigma\neq\bar\tau\\
  \eseq, & \text{if}\ m=0 \land \sigma=\bar\tau  \\
  \bar\tau, & \text{if}\ m>0 \land \sigma=\bar\tau.
\end{cases}
\qquad \Big(\text{for any $m\leq n$ and $\sigma\in\obs(n)$}\Big).
\end{aligned}\end{equation*}

\noindent
Put differently, $\bar\tau$ is `really' a constant observation over time except at $0$. The introduction of $\bar\tau$ is new and inspired from the constant $\eta$ \cite{eta-abstraction} which can be viewed as an \emph{empty observation due to at least one system move}.

\begin{proposition}\label{prop:Oispresheaf}
  The above mapping $\op{\nat} \rTo^{\obs} \set$ is a presheaf.
\end{proposition}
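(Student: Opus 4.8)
The plan is to verify the two functoriality axioms for $\obs$ — preservation of identities and of composites — together with well-definedness of the restriction maps, by a case analysis that isolates the fresh symbol $\bar\tau$ from the genuine words inherited from $\A_\tau$. The single fact driving every case is that $\bar\tau$ inhabits $\obs(n)$ only when $n>0$; consequently, whenever $\bar\tau$ occurs as an element of some $\obs(n)$ we may assume $n\ge 1$, and then restricting $\bar\tau$ along $0\preceq n$ collapses it to $\eseq$ while restricting it along $0<m\preceq n$ leaves it unchanged.

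First I would check well-definedness: for $m\le n$ and $\sigma\in\obs(n)$, that $\sigma\cdot m\in\obs(m)$. If $\sigma\neq\bar\tau$, then $\sigma\in\A_\tau(n)$ is a word of length $n$ over $\tact$, so $\sigma\cdot m=\sigma|_m$ has length $m$ and lies in $\A_\tau(m)\subseteq\obs(m)$ — this is exactly the content of the $\tact$-analogue of Example~\ref{ex:observation-lts}. If $\sigma=\bar\tau$, then $n>0$; for $m=0$ we get $\sigma\cdot 0=\eseq\in\A_\tau(0)=\obs(0)$, and for $m>0$ we get $\sigma\cdot m=\bar\tau\in\obs(m)$. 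Next, identities: $\sigma\cdot n=\sigma$ holds because for $\sigma\neq\bar\tau$ it reads $\sigma|_n=\sigma$ (as $|\sigma|=n$), and for $\sigma=\bar\tau$ we have $n>0$, so the third clause of the definition returns $\bar\tau$.

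Finally, functoriality: for $k\le m\le n$ and $\sigma\in\obs(n)$ one must show $(\sigma\cdot m)\cdot k=\sigma\cdot k$. When $\sigma\neq\bar\tau$, the restriction $\sigma\cdot m=\sigma|_m$ is again a genuine word, so $(\sigma|_m)\cdot k=(\sigma|_m)|_k=\sigma|_k=\sigma\cdot k$ by the associativity of prefixing already recorded in Example~\ref{ex:observation-lts}. When $\sigma=\bar\tau$ (hence $n>0$) I would split on $m$: if $m=0$ then $k=0$ and both sides equal $\eseq$; if $m>0$ then $\sigma\cdot m=\bar\tau$, and a further split on $k$ yields $\eseq$ on both sides when $k=0$ and $\bar\tau$ on both sides when $k>0$. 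This exhausts all cases. I do not anticipate a genuine obstacle — the statement is essentially bookkeeping — the only delicate point being level $0$, where $\bar\tau$ is deliberately absent, handled uniformly by the observation above.
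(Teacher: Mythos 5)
Your proof is correct and follows essentially the same route as the paper's: a direct case analysis on whether $\sigma=\bar\tau$ and on whether the target level is $0$, verifying identities and composition (your extra check of well-definedness of the restriction maps is a harmless, indeed welcome, addition). No gaps.
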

\noindent
Moreover, the categories $\presheaves{\nat}/\obs$ and $\presheaves{\elements{\obs}}$ are equivalent (Theorem~\ref{thm:elements-equivalence}). So, consider the category of elements $\elements{\obs}$ whose objects are tuples $\tuple{n,\sigma}$ (for $n\in\nat,\sigma\in\obs(n)$) and whose arrows are given by the rule:
\[\tuple{n,\sigma} \step{} \tuple{n',\sigma'} \iff n\leq n' \land \sigma'\cdot n=\sigma.\]
Just as the category $\elements{\A_\tau}$ has a simpler description in terms of $\ftact$, we simplify $\elements{\obs}$ by defining a set $\ftbtact = \ftact \cup \{(n,\bar\tau) \mid n>0\} $ ordered by the smallest relation $\preceq\subseteq \ftbtact \times \ftbtact$ satisfying the following rules:
\[
\frac{(|\sigma|,\sigma) \step{} (|\sigma'|,\sigma') \in\elements{\A_\tau}}{\sigma \preceq \sigma'} \quad
\frac{(m,\bar\tau) \step{} (n,\bar\tau)\in\elements{\obs}}{(m,\bar\tau) \preceq (n,\bar\tau)}\quad
\frac{n>0}{\eseq \preceq (n,\bar\tau)}.
\]
Note that the leftmost rule is concerned with the elements of $\ftact$. Thus, the slice category $\presheaves{\nat}/\obs$ is equivalent to the category $\presheaves{\ftbtact}$ since $\elements{\obs}\cong\ftbtact$. Using the presheaf $F_X$ (cf. \eqref{eq:presheaf-tact}) induced by a given transition system $\tuple{X,\tact,\rightarrow}$, define a presheaf $\bar F_X\in\presheaves{\ftbtact}$ as follows:
\[\bar F_X\tuple{\sigma}=
\begin{cases}
  F_X(\sigma), & \text{if}\ \sigma\in\ftact\\
  \coprod_{n\in\nat} F_X(\tau^n), & \text{otherwise}
\end{cases},
\ \text{where}\
\begin{array}{l}
  \tau^0=\{\eseq\} \\
  \tau^{n+1}=\tau^n \cup \prod_{n+1}\{\tau\}
\end{array}.
\]
The executions based on invisible steps are seen \emph{stretchable in time} by $\obs$, i.e., a system may perform invisible executions of length independent of time instants. This is encoded in the second clause of $\bar F_X$. To complete the definition of $\bar F_X$ as a presheaf, we define, for any $p\in\bar F_X(\sigma)$ and $\sigma'\preceq \sigma$ in $\ftbtact$, the action of $\bar F_X$ as:
\[
p\cdot \sigma' =
\begin{cases}
  p|_{\history{\sigma'}}, & \text{if}\ \sigma\in\ftact \land \sigma'\in\ftact\\
  p|_{\{\eseq\}}, & \text{if}\ \sigma\not\in\ftact \land \sigma'=\eseq\\
  p, & \text{if}\ \sigma\not\in\ftact \land \sigma'\not\in\ftact.
\end{cases}
\]
Notice how the above formulation closely follows the three defining clauses of $\obs$.
\begin{proposition}\label{prop:barfxcfunctor}
  The mapping $\bar F_X$ defined above is a contravariant functor.
\end{proposition}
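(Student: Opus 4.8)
The plan is to verify the functor axioms for $\bar F_X$ directly, using the case analysis in the definition of the action and the three generating rules for $\preceq$ on $\ftbtact$. Concretely, I would show: (i) $\bar F_X$ sends the identity $\sigma \preceq \sigma$ to the identity map on $\bar F_X(\sigma)$; and (ii) $\bar F_X$ is contravariant on composites, i.e. for $\sigma'' \preceq \sigma' \preceq \sigma$ in $\ftbtact$ we have $p \cdot \sigma'' = (p\cdot\sigma')\cdot\sigma''$ for every $p \in \bar F_X(\sigma)$. Since $\ftbtact$ is a poset, there is exactly one arrow between comparable elements, so there is nothing further to check.

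For the identity law I would split on whether $\sigma \in \ftact$ or $\sigma = (n,\bar\tau)$ for some $n>0$. If $\sigma\in\ftact$, the action is $p\cdot\sigma = p|_{\history{\sigma}}$; since $\max\dom p = \sigma$, this is just $p$. If $\sigma\notin\ftact$, then $\sigma\preceq\sigma$ with both out of $\ftact$ falls under the third clause, which gives $p\cdot\sigma = p$. For the composition law, the comparable triples $\sigma'' \preceq \sigma' \preceq \sigma$ in $\ftbtact$ come in only a few shapes because $\eseq$ is the only element below any $(n,\bar\tau)$ and the $\bar\tau$-elements form a chain above $\eseq$: either all three lie in $\ftact$ (ordinary prefix restriction of executions composes, which is routine); or $\sigma\notin\ftact$ and some of $\sigma',\sigma''$ equal $\eseq$ or stay outside $\ftact$. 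In the latter situation one uses: restricting $p$ (viewed as an element of $\coprod_n F_X(\tau^n)$) first to $\eseq$ and then to $\eseq$ is the same as restricting directly; restricting to a $\bar\tau$-element and then to $\eseq$ again yields $p|_{\{\eseq\}}$ by the second clause applied after the third; and restricting twice within the $\bar\tau$-chain is the identity composed with the identity. In each case the two sides agree by inspection of the three-clause definition, which, as the text observes, mirrors the three-clause definition of $\obs$ itself.

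I would organise the write-up as a short case table indexed by the membership of $\sigma,\sigma',\sigma''$ in $\ftact$ versus the $\bar\tau$-part, noting that the only mildly delicate point is the ``mixed'' composite where $\sigma'$ is a $\bar\tau$-element but $\sigma'' = \eseq$: there one must check that $(p\cdot(m,\bar\tau))\cdot\eseq = p\cdot\eseq$, i.e. that $p|_{\{\eseq\}} = p|_{\{\eseq\}}$, which holds because the first restriction leaves $p$ unchanged (third clause) and then the second clause applies. The main obstacle, such as it is, is purely bookkeeping: making sure the generating rules for $\preceq$ do not produce comparable pairs outside the enumerated shapes (in particular that no $(m,\bar\tau)$ is ever below a genuine word $\sigma\in\ftact$ other than via $\eseq$), so that the case analysis is genuinely exhaustive. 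Once that is confirmed, functoriality follows immediately from the definitions, and $\bar F_X$ is a well-defined presheaf on $\ftbtact$.
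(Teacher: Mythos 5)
Your proposal is correct and follows essentially the same route as the paper: a direct verification of the identity law (split on $\sigma\in\ftact$ versus $\sigma=(n,\bar\tau)$) and of the composition law by an exhaustive case analysis on which of $\sigma,\sigma',\sigma''$ lie in $\ftact$, using that only $\eseq$ and other $\bar\tau$-elements sit below a $\bar\tau$-element. The cases you enumerate, including the mixed composite through a $\bar\tau$-element down to $\eseq$, are exactly the ones the paper checks.
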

To characterise branching bisimulation, 
define a structure $\ftbact$ similar to $\fseq\act$ and a structure preserving map $\ftbtact \rTo \ftbact$ similar to $h$ (Section~\ref{subsec:failed}). To this end, our semantic category for branching bisimulation will be $\ftbact=\fseq\act\cup\{\bar\tau\}$ ordered by the relation $\preceq\subseteq\ftbact\times\ftbact$ consisting of prefix relation and the pairs $\tuple{\eseq,\bar\tau},\tuple{\bar\tau,\bar\tau}$. We consider $\ftbtact \rTo^{\bar h} \ftbact$ given by: $\bar h(\sigma)=h(\sigma)$, if $\sigma\in\ftact$; and $\bar h(n,\bar\tau)=\bar\tau$ (for all $n\in\nat$).

Next, we aim to compute the presheaf $\Sigma_{\bar h}\bar F_X$ at $\varrho\in\ftbact$. At this stage, we can take the advantage of similarity between the structures $\ftact,\ftbtact$ and $\fseq\act,\ftbact$ to compute the colimits, similar to  \eqref{eq:coend-simplification} where $\op{\ftact}$ is replaced by $\op{\ftbtact}$ as the indexing category. In lieu of Theorems~\ref{thm:adjoint-h} and \ref{thm:square8commutes}, we obtain
\begin{theorem}\label{thm:adjoint-barh}
  For a given transition system $\tuple{X,\tact,\rightarrow}$, we find that $\Sigma_{\bar h}\bar F_X(\bar\tau) \cong \coprod_{n\in\nat} \exec{X,\tau^n}$ and $\Sigma_{\bar h}\bar F_X(\varrho) \cong \mexec{X,\varrho}$ (for any $\varrho\neq\bar\tau$).
  Moreover, the above isomorphisms are natural in $\varrho\in\ftbact$. I.e., for any $\varrho'\preceq\varrho$ with $\varrho,\varrho'\in\fseq\act$, the square in \eqref{eq:naturality-mexec} and the following square commute.
  \vspace{-0.3cm}
  \begin{diagram}[size=3em,width=2cm,midshaft]
			\Sigma_{\bar h} F_{X} (\bar\tau) & \rTo~{\cong} & \coprod_{n\in\nat} \exec{X,\tau^n}\\
			\dTo_{\Sigma_{\bar h}(\bar\tau,\eseq)} & & \dTo_{\_|_{\eseq}}\\
			\Sigma_h F_{X} (\eseq) & \rTo~{\cong} & \mexec{X,\eseq}
		\end{diagram}
\end{theorem}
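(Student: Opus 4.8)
The plan is to replay the arguments behind Theorems~\ref{thm:adjoint-h} and~\ref{thm:square8commutes}, exploiting that $\ftbtact$ and $\ftbact$ arise from $\ftact$ and $\fseq\act$ simply by adjoining the $\bar\tau$-column. Since $\Sigma_{\bar h}$ exists by Corollary~\ref{cor:ess_geometric} applied to $\bar h\colon\ftbtact\to\ftbact$, the first step is to repeat the coend computation \eqref{eq:coend-simplification} verbatim, with $\op{\ftact}$ replaced by $\op{\ftbtact}$ and $h$ by $\bar h$ (this uses only $S\odot X\cong S\times X$ in $\set$ and pointwise colimits in presheaf categories, as before), obtaining
\[
  \Sigma_{\bar h}\bar F_X(\varrho)\ \cong\ \colimit{\sigma\in\op{\ftbtact},\ \varrho\preceq\bar h(\sigma)}\bar F_X(\sigma).
\]
Everything then reduces to analysing, for each $\varrho\in\ftbact$, the full subcategory $\mcal J_\varrho=\{\sigma\in\ftbtact\mid\varrho\preceq\bar h(\sigma)\}$ of $\op{\ftbtact}$.

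I would split into three cases. (i) If $\varrho\in\fseq\act\setminus\{\eseq\}$, then $\varrho\not\preceq\bar\tau$ in $\ftbact$, so no object $(n,\bar\tau)$ enters $\mcal J_\varrho$; hence $\mcal J_\varrho=\{\sigma\in\ftact\mid\varrho\preceq h(\sigma)\}$ is literally the indexing category of Theorem~\ref{thm:adjoint-h}, and since $\bar F_X$ restricts to $F_X$ on $\ftact$ (both on objects and on restriction maps), $\Sigma_{\bar h}\bar F_X(\varrho)\cong\Sigma_hF_X(\varrho)\cong\mexec{X,\varrho}$ by that theorem. (ii) If $\varrho=\bar\tau$, then $\mcal J_{\bar\tau}=\{(n,\bar\tau)\mid n>0\}$, on which all transition maps of $\bar F_X$ are identities (last clause of its action); the diagram is thus constant and $(1,\bar\tau)$ is terminal in $\mcal J_{\bar\tau}$, so $\Sigma_{\bar h}\bar F_X(\bar\tau)\cong\bar F_X(1,\bar\tau)=\coprod_{n\in\nat}\exec{X,\tau^n}$, the last equality being the second defining clause of $\bar F_X$. (iii) If $\varrho=\eseq$, then $\mcal J_\eseq$ is all of $\op{\ftbtact}$, which has as terminal object the least element $\eseq$ of $\ftbtact$, so the colimit collapses to $\bar F_X(\eseq)=F_X(\eseq)=\exec{X,\eseq}$, and this equals $\mexec{X,\eseq}$ since both are the set of executions of domain $\history\eseq$ (Definition~\ref{def:mexecution}). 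Alternatively, (iii) is just Theorem~\ref{thm:adjoint-h} at $\varrho=\eseq$ once one checks that the extra objects $(n,\bar\tau)$ do not change the colimit, and (ii) can equally be obtained via Lemma~\ref{lem:colimit-decompose} rather than the terminal-object shortcut.

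For naturality I would read the restriction maps of the presheaf $\Sigma_{\bar h}\bar F_X$ off the universal property, exactly as in the paragraph preceding \eqref{eq:sigma_h-action}: for $\varrho'\preceq\varrho$ in $\ftbact$, the map $\Sigma_{\bar h}(\varrho,\varrho')$ sends the class of $(\sigma,p)$ to the class of $(\sigma_{\varrho'},p\cdot\sigma_{\varrho'})$, where $\sigma_{\varrho'}$ is the least prefix of $\sigma$ in $\ftbtact$ with $\bar h(\sigma_{\varrho'})=\varrho'$. When $\varrho,\varrho'\in\fseq\act\setminus\{\eseq\}$ everything lives inside $\ftact$ and square~\eqref{eq:naturality-mexec} commutes by the argument of Theorem~\ref{thm:square8commutes}. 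When $\varrho'=\eseq$ one has $\sigma_\eseq=\eseq$, so $\Sigma_{\bar h}(\varrho,\eseq)$ carries a minimal execution $p$ (represented by $(\max\dom p,p)$) to $p\cdot\eseq=p|_{\history\eseq}$, which is precisely $\mpast(\varrho,\eseq)$; this settles square~\eqref{eq:naturality-mexec} in the remaining subcase. The same computation applied to $\varrho=\bar\tau$, $\varrho'=\eseq$ (note $\eseq\preceq\bar\tau$) shows that $\Sigma_{\bar h}(\bar\tau,\eseq)$ reindexes every $(n,\bar\tau)$ to $\eseq$ with transition map $p\mapsto p\cdot\eseq=p|_{\history\eseq}$, so under the isomorphisms of (ii) and (iii) it becomes the map $\_|_{\eseq}$ of the displayed $\bar\tau$-square, which therefore commutes.

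The main obstacle I anticipate is bookkeeping rather than a genuinely new idea: one has to confirm that the coend rewriting goes through unchanged with the $\bar\tau$-column present, that $\mcal J_{\bar\tau}$ and $\mcal J_\eseq$ really do have the claimed terminal objects so that the colimits collapse as asserted, and --- the subtlest point --- that the restriction maps $\Sigma_{\bar h}(\varrho,\varrho')$ are extracted correctly from the universal property in the boundary cases $\varrho'=\eseq$ and $\varrho=\bar\tau$, where $\bar h$ fails to preserve infima. None of these steps goes beyond what was already done for Theorems~\ref{thm:adjoint-h} and~\ref{thm:square8commutes}; they amount to checking that the $\bar\tau$-column behaves exactly as it was designed to.
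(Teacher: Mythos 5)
Your proposal is correct and follows essentially the same route as the paper: reduce the cases $\varrho\in\fseq\act$ to Theorems~\ref{thm:adjoint-h} and~\ref{thm:square8commutes} (the $(n,\bar\tau)$-objects do not enter those index categories), observe that on the $\bar\tau$-column all restriction maps of $\bar F_X$ are identities so the colimit collapses to $\coprod_{n\in\nat}\exec{X,\tau^n}$, and read $\Sigma_{\bar h}(\bar\tau,\eseq)$ off the universal property as $\_|_{\eseq}$. Your use of terminal objects in the op-indexed categories (in place of the paper's directedness/filtered-colimit remarks) is only a cosmetic shortcut, not a different argument.
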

%

Note that the definition of the maps $\Sigma_{\bar h}(\varrho',\varrho)$ (for $\varrho'\preceq \varrho$) is similar to the maps $\Sigma_{h}(\varrho',\varrho)$ defined by the universal property of colimits (see \eqref{eq:sigma_h-action}). Just like in the previous subsection, we utilise the isomorphic view of minimal executions to define our semantic map $\lts_\tau \rTo^{\semantics{\_}} \presheaves{\ftbact}$:
\begin{itemize}
  \item For a given transition system, we let $\semantics{X} = \Sigma_{\bar h}\bar F_X$.
  \item For a given branching simulation function $X \rTo^f Y$, we let $\semantics{f}_\varrho (p)=p_f$.
\end{itemize}
\begin{lemma}\label{lem:semanticsfunctor}
  The above mapping $\semantics{\_}$ is a faithful functor.
\end{lemma}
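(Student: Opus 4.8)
The argument runs along the same lines as the one needed for Theorem~\ref{thm:lts-tau_semantic1}, now working over $\ftbact$ instead of $\fseq\act$; I would split it into well-definedness on objects, well-definedness (and naturality) on morphisms, functoriality, and faithfulness. Well-definedness on objects is immediate: $\bar F_X$ is a presheaf on $\ftbtact$ by Proposition~\ref{prop:barfxcfunctor} and $\Sigma_{\bar h}$ is a functor $\presheaves{\ftbtact}\to\presheaves{\ftbact}$ by Corollary~\ref{cor:ess_geometric}, so $\semantics{X}=\Sigma_{\bar h}\bar F_X$ is an object of $\presheaves{\ftbact}$. Throughout I would use Theorem~\ref{thm:adjoint-barh} to replace the colimit description of $\semantics{X}$ by the concrete one, $\semantics{X}(\varrho)\cong\mexec{X,\varrho}$ for $\varrho\neq\bar\tau$ and $\semantics{X}(\bar\tau)\cong\coprod_{n\in\nat}\exec{X,\tau^n}$, and transport the prescribed component maps $\semantics{f}_\varrho(p)=p_f$ along these isomorphisms; by Theorem~\ref{thm:adjoint-barh} the restriction maps of $\semantics{X}$ then become the maps $\mpast(\varrho,\varrho')$ on the $\fseq\act$-part and $\_|_\eseq$ on the $\bar\tau$-component.

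For a branching simulation function $f\colon X\to Y$, the first thing to check is that $(\cdot)_f$ actually lands in the right sets: on a summand $\mexec{X,\varrho}$ with $\varrho\in\fseq\act$ this is exactly Lemma~\ref{lem:mexec-preservation}, and on the $\bar\tau$-component one argues analogously that $p\in\exec{X,\tau^n}$ forces $p_f\in\exec{Y,\tau^m}$ for some $m\leq n$, since the inductive clauses turn each $\tau$-step of $p$ into either nothing or a $\tau$-step of $p_f$ — the latter being legal because condition~2 of Definition~\ref{def:bsimulation} supplies the transition $f(\last p)\step\tau f(\last{p'})$. Along the way I would also record, by a routine induction on $p$, the invariants $\last{p_f}=f(\last p)$ and ``$p_f$ has the same observable trace as $p$''. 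Naturality of $\semantics{f}$ is then the assertion that the squares listed in Theorem~\ref{thm:adjoint-barh} commute after replacing the right-hand verticals by $(\cdot)_f$. The square involving $\bar\tau$ and $\eseq$ is immediate, since both composites send $p$ to the one-state execution at $f(p(\eseq))$ by the base clause. For $\varrho'\preceq\varrho$ in $\fseq\act$ the square in \eqref{eq:naturality-mexec} unwinds to the identity $(p|_{\history{\sigma_{\varrho'}}})_f = (p_f)|_{\history{\sigma^f_{\varrho'}}}$, where $\sigma^f_{\varrho'}$ is the minimal word below $\max\dom{p_f}$ with observable trace $\varrho'$; I would prove this by induction on $p$, simultaneously establishing that $r\preceq p$ implies $r_f\preceq p_f$ and that $\max\dom{(p|_{\history{\sigma_{\varrho'}}})_f}=\sigma^f_{\varrho'}$ — the latter following from the observable-trace invariant together with the minimality bookkeeping of Definition~\ref{def:mexecution}.

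Functoriality is next. Identities go to identities because the base clause gives $p_{\id X}=p$ when $\dom p=\eseq$ and the remaining clauses merely copy each step of a minimal execution, so $\semantics{\id X}=\id{\semantics X}$. For composites I would first note that branching simulation functions compose — conditions~1 and~2 of Definition~\ref{def:bsimulation} are preserved by composition outright, and condition~3 follows by feeding the $\tau$-paths $f(x_1)\steps\eseq f(x_2)\steps\eseq f(x_3)$ obtained from $x_1\steps\eseq x_2\steps\eseq x_3$ into condition~3 for the outer map — so that $\semantics{g\circ f}$ is again given by the prescribed formula, and then prove $p_{g\circ f}=(p_f)_g$ by induction on $p$, with a case split on which clause governs the next step of $p$ under $f$ (and, in the $\tau$-case, on whether $g$ collapses the resulting step); the invariant $\last{p_f}=f(\last p)$ keeps each case routine.

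Finally, faithfulness is cheap, and only the component at $\eseq$ is needed: since every proper prefix of a pure-$\tau$ execution is again pure-$\tau$, and hence again has observable trace $\eseq$, the set $\mexec{X,\eseq}$ consists precisely of the one-state executions, so $\semantics{X}(\eseq)\cong\mexec{X,\eseq}\cong X$ and under this identification $\semantics{f}_\eseq$ is $f$ itself; thus $\semantics{f}=\semantics{g}$ implies $f=g$. The main obstacle is the naturality step: because $p\mapsto p_f$ is built forwards along $p$ while the presheaf restrictions truncate executions from the rear, one must set up the auxiliary induction ``prefixes map to prefixes, and the minimal $\varrho'$-prefix is preserved'' with some care before the commuting squares drop out; everything else is bookkeeping on the inductive clauses and the invariants above.
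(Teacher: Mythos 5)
Your proposal is correct and follows essentially the same route as the paper: reduce everything on the $\fseq\act$-part to the naturality induction already carried out for Theorem~\ref{thm:lts-tau_semantic1} (via Lemma~\ref{lem:mexec-preservation} and the concrete description from Theorem~\ref{thm:adjoint-barh}), handle the single extra restriction $\bar\tau\to\eseq$ by noting both composites yield the one-state execution at $f(p(\eseq))$, and get faithfulness from $\mexec{X,\eseq}\cong X$. Your additional checks (composability of branching simulations, $p_{g\circ f}=(p_f)_g$, well-definedness on the $\bar\tau$-component) are sound details the paper leaves implicit, not a different argument.
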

Finally, we have obtained the desired result of this section.
\begin{theorem}\label{thm:bbisim-char}
  A branching simulation function $f$ is a branching bisimulation function iff $\semantics f$ is a bisimulation map in $\presheaves{\ftbact}$.
\end{theorem}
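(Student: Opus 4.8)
The plan is to prove both implications by unwinding the bisimulation-map square \eqref{eq:bisim-presheaf} in $\presheaves{\ftbact}$ and translating it into the transfer/reflection conditions of Definition~\ref{def:bsimulation}, using the isomorphic description $\semantics{X} \cong \mexec{X,\varrho}$ (for $\varrho\neq\bar\tau$) and $\semantics{X}(\bar\tau) \cong \coprod_{n\in\nat}\exec{X,\tau^n}$ from Theorem~\ref{thm:adjoint-barh}. Throughout I would work with the concrete presentation of $\semantics{f}$ via $p\mapsto p_f$ given just before Lemma~\ref{lem:semanticsfunctor}, and I would freely use the naturality squares in Theorem~\ref{thm:adjoint-barh} to move between the abstract colimit description and the execution-level description.

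For the ``only if'' direction, assume $f$ is a branching bisimulation function; I must show $\semantics f$ satisfies Definition~\ref{def:bisim-presheaf}. Given a commuting square with a mono $P\rMono^g Q$ and maps $m,n$, I would build $k$ pointwise over $\varrho\in\ftbact$ by well-founded recursion on the prefix order, exactly mimicking the informal construction in the failed-attempt example but now with the extra observation point $\bar\tau$ available. The base case $\varrho=\eseq$ and the ``stretchable'' case $\varrho=\bar\tau$ are where the new constant $\bar\tau$ earns its keep: because $\semantics X(\bar\tau)$ records \emph{all} $\tau$-executions (not just minimal ones), a $\tau$-transition $f(x)\step\tau f(x')$ downstream is now visible in the presheaf, and reflection \eqref{eq:trans-breflect} for $a=\tau$ together with the stuttering property \eqref{eq:b-stutter} lets me pick a preimage execution $q\in Q(\bar\tau)$ consistently with the already-defined value on $\{\eseq\}$. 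For the inductive step at $\varrho a$ with $a\in\act$, I would take $q\in Q(\varrho a)$, restrict along $g$ and $n$ to get a minimal execution $n_{\varrho a}(q)$ witnessing some $f(x)\steps{}\step a \bullet$, apply the weak reflection \eqref{eq:trans-breflect} to lift the observable transition, and use the already-constructed $k$ at the predecessor together with surjectivity of $f$ to glue; commutativity of the two triangles $k\circ g=m$, $\semantics f\circ k=n$ then follows by the defining clauses of $p\mapsto p_f$. Naturality of the whole family $k=(k_\varrho)$ is where I expect bookkeeping to pile up: one must check compatibility with the restriction maps $\Sigma_{\bar h}(\varrho,\varrho')$, including the special restriction $\Sigma_{\bar h}(\bar\tau,\eseq)$ of Theorem~\ref{thm:adjoint-barh}.

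For the ``if'' direction, assume $\semantics f$ is a bisimulation map; I must recover surjectivity and the weak reflection \eqref{eq:trans-breflect}. Surjectivity of $f$ follows from Theorem~\ref{thm:bisim-epi}: a bisimulation map is a retract, hence surjective on every fibre, and in particular on $\semantics X(\eseq)\cong X$ (reading off states from empty minimal executions). For \eqref{eq:trans-breflect}, given $x\in X$ and a transition $f(x)\step a y$, I would instantiate the square \eqref{eq:bisim-presheaf} with a carefully chosen representable-like $P\rMono Q$ — roughly, $P$ picking out the state $x$ at $\eseq$ and $Q$ picking out the minimal execution witnessing $f(x)\steps{}\bullet\step a y$ at $\varrho=a$ (or at $\bar\tau$ when $a=\tau$, using the $\coprod_n\exec{X,\tau^n}$ description) — with $m$ the inclusion of $x$ and $n$ the inclusion of that witnessing execution, checking the square commutes via $x\mapsto x_f$. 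The lifted map $k$ then yields an execution in $\semantics X$ whose image under $\semantics f$ is the chosen witness; decoding this execution through the clauses of $p\mapsto p_f$ and the stuttering property \eqref{eq:b-stutter} gives exactly the states $x',x''$ with $x\steps\eseq x'\step a x''$, $f(x')=f(x)$, $f(x'')=y$ demanded by \eqref{eq:trans-breflect}.

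The main obstacle, I expect, is the silent case of reflection --- showing that the square lifting actually produces a \emph{minimal} execution of the right shape rather than one that over- or under-shoots, and in particular that the trailing-$\tau$ segment it produces is consistent with the stuttering clause \eqref{eq:b-stutter}. This is precisely the point at which the naive functor $\Sigma_h$ failed (cf.\ Section~\ref{subsec:failed}), and the argument must pinpoint why the enriched observation $\obs$ with $\bar\tau$ repairs it: the monomorphism $P\rMono Q$ can now encode ``at least one $\tau$-step has occurred'' at the observation level, so the lifted $k$ is forced to choose a genuine $\tau$-predecessor rather than collapsing to the identity. Making that forcing rigorous — i.e., that no spurious lift exists — is the crux, and I would isolate it as a separate sublemma about the fibres of $\semantics f$ over $\bar\tau$ before assembling the two implications.
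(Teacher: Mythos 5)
Your proposal is correct and follows essentially the same route as the paper: a pointwise construction of the lift $k$ by induction over $\varrho\in\ftbact$ (base case via surjectivity of $\semantics f_\eseq$ and splitting in $\set$) for the forward direction, and Theorem~\ref{thm:bisim-epi} plus the small two-point test presheaves concentrated at $\eseq$ and $a$ (resp.\ $\bar\tau$ for silent steps) for the converse. One calibration of where the work lies: the paper's crux is in the forward inductive step, where it isolates the claim \eqref{eq:thm:bbisim-char-claim} --- derived from \eqref{eq:trans-breflect} together with stuttering \eqref{eq:b-stutter} --- that any $\tau$-step out of $p_f$ reflects to a $\tau$-extension $p'$ of $p$ with $p'_f$ equal to that extension \emph{on the nose}, which is exactly what makes the upper triangle $\semantics f\circ k=n$ commute after lifting the leading $\tau$-segment of a minimal execution; by contrast, the obstacle you flag in the converse $\tau$-case dissolves, since any lift $p$ with $p_f=q$ and $p\cdot\eseq=\eseq_x$ already decodes through the clauses of $(\_)_f$ into $x\steps{\eseq}x'\step\tau x''$ with the required $f$-images, so no separate minimality-forcing sublemma over $\bar\tau$ is needed --- the paper simply repeats the visible-step argument with $Q(\bar\tau)=\{q\}$ in place of $Q(a)$.
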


\section{Related work and Conclusion}

The core idea of \textbf{Goguen's sheaf semantics} \cite{Goguen92sheafsemantics} is: systems are diagrams of sheaves, behaviour (interconnection) of systems is their limit (colimit). In retrospect, Goguen gave a sheaf semantics of nondeterministic automata by constructing presheaves on the Alexandroff topology induced by the downward closed subsets of $\fseq\act$ (in contrast to presheaves over $\fseq\act$). However, it is well known that the category of sheaves on Alexandroff spaces induced by a poset $(X,\preceq)$ is equivalent to the category of presheaves on $X$ (a consequence of the so-called comparison lemma in topos theory; see \cite[Corollary~3 on Page~590]{sheafbook}). In short, we use simpler structures to represent executions and focused on defining bisimulations abstractly which were absent in \cite{Goguen92sheafsemantics}.

\textbf{Relational presheaves} \cite{rel-presheaves:2015} generalise transition systems that are labelled by words from the free monoid $\fseq\act$. The idea was to accommodate the earlier presheaf approaches \cite{presheaves-as-transitionsys:1997,jnw96:bisimopenmaps} with algebraic structure on labels. The most insightful observation of \cite{rel-presheaves:2015} was the well-known `saturation' construction on transition systems can be captured using a 2-adjunction induced by a homomorphism between 
$\ftact,\fseq\act$. 
Unlike \cite{rel-presheaves:2015}, our left adjoint $\Sigma_h$ records the minimal executions induced by a transition system with silent steps. 
We expect this to be relevant for probabilistic systems, where minimal executions are used to define a probability measure (cf. \cite{bk2017}).

\textbf{Open maps} between presheaves as defined in \cite{jnw96:bisimopenmaps} are instances of the open maps in a topos as introduced in \cite{openmaps-completness:1994}. This is because open maps between presheaves (as in \cite{jnw96:bisimopenmaps}) are natural transformations whose naturality square is a weak pullback in $\set$ (cf. \cite[Example~1.1]{openmaps-completness:1994}). We discarded the open maps between presheaves because they are incapable of establishing complete refinement between an implementation and its specficiation; though it is still interesting to assert whether the bisimulation maps (Def.~\ref{def:bisim-presheaf}) satisfy the axioms given in \cite{openmaps-completness:1994}.

\textbf{Prefix orders} are generalisations of trees proposed in \cite{Cuijpers:2013:DCM} to study executions of dynamical systems in an order theoretic manner. In \cite{openmaps-concrete:2015}, the authors defined functional bisimulation between prefix orders by reinterpreting the definition of open maps in concrete categories. Our bisimulation maps are an instance of this general definition (Section~\ref{sec:bisimmaps}). It is unclear, though, how to enrich prefix orders with observations so that we can model labelled executions in a uniformly. This  question lead us to model observations as presheaves.

To sum up, bisimulation maps between presheaves are versatile enough to capture
different notions of behavioural equivalence.
We
demonstrated
this by characterising $\forall$-fair bisimulation and branching bisimulation, two notions that are notoriously difficult to capture with a coalgebraic approach. The clear distinction between time and observation proved fruitful in dealing with silent actions, but we also expect our framework to lend itself well to modelling
hybrid systems. For instance, sheaves over the translation-invariant interval domain $\mathbb {IR}_{/\vartriangleright}$ were introduced in \cite{spivak_hybrid-systems} to model hybrid systems. It will be interesting to explore whether bisimulation maps between such sheaves (Remark~\ref{remark:sheaves-bisim}) coincides with stateless bisimulation \cite{hcif,Cuijpers-hypa,vanBeek:2006-chi}.



%
%
%
\begin{ack}
  We thank the anonymous reviewers of FOSSACS'19 for their feedback on an earlier draft that greatly improved this manuscript. In particular, Reviewer 1 not only identified a mistake in our earlier characterisation of branching bisimulation, but also provided a detailed feedback including the proofs in the appendix for which we are grateful.
  We thank Barbara K\"onig and Christina Mika for valuable comments on earlier drafts of this paper. We also thank Pieter Cuijpers for asking us about the importance of presheaf categories in system modelling, which formed the basis of Section~\ref{sec:prelim}. Moreover, we are grateful to Alex Simpson for discussing early ideas and in particular, for guiding our attention from sheaves on Alexandroff spaces induced by the poset $\fseq\act$ to just presheaves on $\fseq\act$. Finally, we acknowledge
Paul Taylor's diagram package for commutative diagrams.
\end{ack}
\bibliographystyle{entcs}
\bibliography{references}

\begin{thebibliography}{10}
\expandafter\ifx\csname url\endcsname\relax
  \def\url#1{\texttt{#1}}\fi
\expandafter\ifx\csname urlprefix\endcsname\relax\def\urlprefix{URL }\fi
\newcommand{\enquote}[1]{``#1''}

\bibitem{abramsky:critic-2006}
Abramsky, S., \emph{What are the fundamental structures of concurrency?: We
  still don't know!}, Electronic Notes in Theoretical Computer Science
  \textbf{162} (2006), pp.~37 -- 41, essays on Algebraic Process Calculi.

\bibitem{book:con-cat}
Ad\'{a}mek, J., H.~Herrlich and G.~E. Strecker, \enquote{Abstract and Concrete
  Categories,} Wiley-Interscience, New York, NY, USA, 1990.

\bibitem{process_algebra_reference}
Baeten, J. C.~M., T.~Basten and M.~A. Reniers, \enquote{Process Algebra:
  Equational Theories of Communicating Processes,} Cambridge Tracts in
  Theoretical Computer Science, Cambridge University Press, 2009.

\bibitem{eta-abstraction}
Baeten, J. C.~M. and R.~J. van Glabbeek, \emph{Another look at abstraction in
  process algebra}, in: T.~Ottmann, editor, \emph{Automata, Languages and
  Programming} (1987), pp. 84--94.

\bibitem{openmaps-concrete:2015}
Beohar, H. and P.~J.~L. Cuijpers, \emph{Open maps in concrete categories and
  branching bisimulation for prefix orders}, Electronic Notes in Theoretical
  Computer Science \textbf{319} (2015), pp.~51 -- 66, the 31st Conference on
  the Mathematical Foundations of Programming Semantics (MFPS XXXI).

\bibitem{bk2017}
Beohar, H. and S.~K{\"u}pper, \emph{On path-based coalgebras and weak notions
  of bisimulation}, in: F.~Bonchi and B.~K{\"o}nig, editors, \emph{7th
  Conference on Algebra and Coalgebra in Computer Science (CALCO 2017)},
  Leibniz International Proceedings in Informatics (LIPIcs)  \textbf{72}
  (2017), pp. 6:1--6:17.

\bibitem{hcif}
Beohar, H., D.~E. Nadales~Agut, D.~A. van Beek and P.~J.~L. Cuijpers,
  \emph{Hierarchical states in the compositional interchange format}, in:
  \emph{Proceedings Seventh Workshop on Structural Operational Semantics,
  {SOS}}, 2010, pp. 42--56.

\bibitem{cattani_winskel_2005}
Cattani, G.~L. and G.~Winskel, \emph{Profunctors, open maps and bisimulation},
  Mathematical Structures in Computer Science \textbf{15} (2005), p.~553–614.

\bibitem{caucal:1992}
Caucal, D., \emph{Branching bisimulation for context-free processes}, in: R.~K.
  Shyamasundar, editor, \emph{FSTTCS} (1992), pp. 316--327.

\bibitem{Cuijpers:2013:DCM}
Cuijpers, P. J.~L., \emph{Prefix orders as a general model of dynamics}, in:
  I.~Mackie, M.~Ayala-Rinc\'{o}n and E.~Bonelli, editors, \emph{Proc. of
  Developments in Computation Models}, DCM'13, Buenos Aires, Argentina, 2013,
  pp. 25--29.

\bibitem{Cuijpers-hypa}
Cuijpers, P. J.~L. and M.~A. Reniers, \emph{Hybrid process algebra}, The
  Journal of Logic and Algebraic Programming \textbf{62} (2005), pp.~191 --
  245.

\bibitem{Cuijpers:lost-in-translation}
Cuijpers, P. J.~L. and M.~A. Reniers, \emph{Lost in translation: Hybrid-time
  flows vs. real-time transitions}, in: M.~Egerstedt and B.~Mishra, editors,
  \emph{Hybrid Systems: Computation and Control} (2008), pp. 116--129.

\bibitem{Hirschowitz:LICS19}
Eberhart, C. and T.~Hirschowitz, \emph{What's in a game?: A theory of game
  models}, in: \emph{Proceedings of the 33rd Annual ACM/IEEE Symposium on Logic
  in Computer Science}, LICS '18 (2018), pp. 374--383.
\newline\urlprefix\url{http://doi.acm.org/10.1145/3209108.3209114}

\bibitem{Hirschowitz:sheaf-pi}
Eberhart, C., T.~Hirschowitz and T.~Seiller, \emph{{An intensionally
  fully-abstract sheaf model for $\pi$ (expanded version)}}, {Logical Methods
  in Computer Science} \textbf{{Volume 13, Issue 4}} (2017).
\newline\urlprefix\url{https://lmcs.episciences.org/4069}

\bibitem{Fiore:wbisim_open-maps}
Fiore, M., G.~L. Cattani and G.~Winskel, \emph{Weak bisimulation and open
  maps}, in: \emph{Proc. 14th Symposium on Logic in Computer Science}, 1999,
  pp. 67--76.

\bibitem{Goguen92sheafsemantics}
Goguen, J.~A., \emph{Sheaf semantics for concurrent interacting objects}, in:
  \emph{Mathematical Structures in Computer Science}, 1992, pp. 159--191.

\bibitem{hennessy:futureperfect}
Hennessy, M. C.~B. and C.~P. Stirling, \emph{The power of the future perfect in
  program logics}, Information and Control \textbf{67} (1985), pp.~23 -- 52.

\bibitem{Henzinger:2002}
Henzinger, T.~A., O.~Kupferman and S.~K. Rajamani, \emph{Fair simulation}, Inf.
  Comput. \textbf{173} (2002), pp.~64--81.

\bibitem{hildebrandt:fairness}
Hildebrandt, T.~T., \emph{Towards categorical models for fairness: fully
  abstract presheaf semantics of {SCCS} with finite delay}, Theoretical
  Computer Science \textbf{294} (2003), pp.~151 -- 181.

\bibitem{hirschowitz:inncocent}
Hirschowitz, T. and D.~Pous, \emph{Innocent strategies as presheaves and
  interactive equivalences for ccs}, Scientific Annals of Computer Science
  \textbf{22} (2013), pp.~147--199.

\bibitem{Hojati:96-phdthesis}
Hojati, R., \enquote{A BDD-Based Environment for Formal Verification of
  Hardware Systems,} Ph.D. thesis, EECS Department, University of California,
  Berkeley (1996).
\newline\urlprefix\url{http://www2.eecs.berkeley.edu/Pubs/TechRpts/1996/3052.html}

\bibitem{openmaps-completness:1994}
Joyal, A. and I.~Moerdijk, \emph{A completeness theorem for open maps}, Annals
  of Pure and Applied Logic \textbf{70} (1994), pp.~51 -- 86.

\bibitem{jnw96:bisimopenmaps}
Joyal, A., M.~Nielsen and G.~Winskel, \emph{Bisimulation from open maps},
  Information and Computation \textbf{127} (1996), pp.~164 -- 185.

\bibitem{Kupferman2003:fair_equiv_rel}
Kupferman, O., N.~Piterman and M.~Y. Vardi, \enquote{Fair Equivalence
  Relations,} Springer, Berlin, Heidelberg, 2003 pp. 702--732.

\bibitem{Kwiatkowska1989:fairness-survey}
Kwiatkowska, M., \emph{Survey of fairness notions}, Information and Software
  Technology \textbf{31} (1989), pp.~371 -- 386.

\bibitem{categories-working-mat}
Mac~Lane, S., \enquote{Categories for the Working Mathematician,} Number~5 in
  Graduate texts in Mathematics, Springer, NY, 1998, second edition.

\bibitem{sheafbook}
Mac~Lane, S. and I.~Moerdijk, \enquote{Sheaves in geometry and logic: a first
  introduction to topos theory,} Universitext, Springer, 1992.

\bibitem{control-theory-book}
Polderman, J.~W. and J.~C. Willems, \enquote{Introduction to Mathematical
  Systems Theory: A Behavioral Approach,}  Texts in Applied Mathematics
  \textbf{26}, Springer-Verlag, New York, 1998, 1 edition.

\bibitem{Rut03:universal}
Rutten, J. J. M.~M., \emph{Universal coalgebra: a theory of systems},
  Theoretical Computer Science \textbf{249} (2000), pp.~3 -- 80.

\bibitem{spivak_hybrid-systems}
{Schultz}, P. and D.~I. {Spivak}, \emph{{Temporal Type Theory: A
  topos-theoretic approach to systems and behavior}}  (2017).
\newline\urlprefix\url{https://arxiv.org/abs/1710.10258}

\bibitem{rel-presheaves:2015}
Soboci\'{n}ski, P., \emph{Relational presheaves, change of base and weak
  simulation}, Journal of Computer and System Sciences \textbf{81} (2015),
  pp.~901 -- 910, 11th International Workshop on Coalgebraic Methods in
  Computer Science, CMCS 2012 (Selected Papers).

\bibitem{vanBeek:2006-chi}
van Beek, D.~A., K.~L. Man, R.~M. A., J.~E. Rooda and R.~R.~H. Schiffelers,
  \emph{Syntax and consistent equation semantics of hybrid {C}hi}, The Journal
  of Logic and Algebraic Programming \textbf{68} (2006), pp.~129 -- 210.

\bibitem{bbisim:1996}
van Glabbeek, R.~J. and W.~P. Weijland, \emph{Branching time and abstraction in
  bisimulation semantics}, J. ACM \textbf{43} (1996), pp.~555--600.

\bibitem{presheaves-as-transitionsys:1997}
Winskel, G. and M.~Nielsen, \emph{Presheaves as transition systems}, in:
  \emph{Proceedings of the DIMACS Workshop on Partial Order Methods in
  Verification}, POMIV '96 (1997), pp. 129--140.

\end{thebibliography}
%
\newpage
\appendix
\setcounter{thm}{0}
\section{Proofs}

\begin{theorem}[Theorem~\ref{thm:bisim-epi}]
  Every bisimulation map in a (pre)sheaf category is a retract; thus, every bisimulation map is an epimorphism.
\end{theorem}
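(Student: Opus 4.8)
The plan is to exhibit a section of $f$ directly, by instantiating the lifting property \eqref{eq:bisim-presheaf} at one carefully chosen square. Write $0$ for the initial object of $\presheaves{\cat C}$, i.e.\ the presheaf that is empty at every object of $\cat C$ (recall that colimits, hence the initial object, are computed pointwise in a presheaf category). Given a bisimulation map $F \rTo^f G$, I would form the commutative square of \eqref{eq:bisim-presheaf} with $P = 0$ and $Q = G$, taking the left vertical $g$ to be the unique arrow $0 \rTo G$, the bottom $m$ to be the unique arrow $0 \rTo F$, and the top $n$ to be $\id G$. This square commutes for trivial reasons: $f \circ m$ and $n \circ g = g$ are both arrows $0 \rTo G$, and there is exactly one such arrow by initiality of $0$.

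The only point requiring verification is that $g \colon 0 \rMono G$ is a monomorphism, so that the square is of the shape demanded by Definition~\ref{def:bisim-presheaf}. This is standard: a natural transformation of presheaves is a mono precisely when each of its components is mono in $\set$, and the component of $g$ at an object $C$ is the empty function $\emptyset \rTo GC$, which is injective. In the sheaf setting one replaces this computation by the general fact that the unique arrow out of the initial object of a topos is always a monomorphism; cf.\ Remark~\ref{remark:sheaves-bisim}.

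Applying the bisimulation property of $f$ to this square produces an arrow $G \rTo^k F$ with $f \circ k = n = \id G$; the second triangle $k \circ g = m$ holds automatically, both sides being the unique arrow $0 \rTo F$. Hence $k$ is a section of $f$, so $G$ is a retract of $F$ and $f$ is a split epimorphism; in particular $f$ is an epimorphism. Since composition and identities in $\presheaves{\cat C}$ are computed pointwise, the equation $f \circ k = \id G$ says $f_C \circ k_C = \id{GC}$ for every $C\in\cat C$, so each component $f_C$ is a split epi in $\set$ and therefore surjective, which recovers the claim that $f$ is ``surjective at the level of executions''.

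I do not expect a genuine obstacle here: the argument is a single-square application of the definition, and the only ingredient with any content is the observation that the initial (pre)sheaf embeds into $G$ via a mono. The mild care needed is to phrase that ingredient so that it applies verbatim to $\sheaves{\cat C, J}$ as well, which is why in the sheaf case one invokes the topos-theoretic property of initial objects rather than the pointwise description.
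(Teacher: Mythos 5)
Your proposal is correct and follows essentially the same route as the paper's own proof: instantiate Definition~\ref{def:bisim-presheaf} with $P=0$, $Q=G$, $g=\,!_G$, $m=\,!_F$, $n=\id{G}$, note that $0\rTo G$ is a mono, and read off the section $k$ with $f\circ k=\id{G}$. Your justification of the monomorphism (pointwise empty functions, resp.\ the topos-theoretic fact about maps out of the initial object in the sheaf case) is if anything slightly more carefully stated than the paper's.
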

\begin{proof}
In the commutative square \eqref{eq:bisim-presheaf}, take $P=0$ the initial object in $\presheaves{\cat C}$ (the proof for sheaves is similar), $Q=G$, $g=!_{G}$, $n=\id{G}$, and $m=!_{P}$. Clearly, $0 \rMono{!_{G}} G$ is the monomorphism since any two parallel arrows with initial object as the common codomain ought to be isomorphic. Thus, $f$ is a retract in $\presheaves{\cat C}$ since there is an arrow $k$ such that $f\circ k = \id{G}$. Hence, $f$ is an epimorphism.
\end{proof}

\section{Proofs Concerning Section~\ref{sec:fairness}}
\begin{proposition}[Proposition~\ref{prop:fair-semantics}]
  The above map $\fts \rTo^{\semantics{\_}} \presheaves{\aseq\act}$ is a faithful functor.
\end{proposition}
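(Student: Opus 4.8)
The plan is to check, in order, that $\semantics{\_}$ is well-defined on objects (each $\semantics X$ is a presheaf on $\aseq\act$), well-defined on arrows (each $\semantics f$ is a presheaf morphism), that it respects identities and composition, and finally that it is faithful; the faithfulness step is the quickest and is witnessed entirely by executions of length $0$.

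First I would verify $\semantics X\in\presheaves{\aseq\act}$. For $\sigma'\preceq\sigma$ in $\aseq\act$ and $p\in\fexec X$ with $\max\dom p=\sigma$, the restriction $p|_{\history{\sigma'}}$ is again a fair execution with $\max\dom=\sigma'$: if $\sigma'\in\fseq\act$ then $p|_{\history{\sigma'}}$ is a finite execution and so lies in $\exec X\subseteq\fexec X$; if $\sigma'$ is infinite then necessarily $\sigma'=\sigma$ and the restriction is $p$ itself. Since restriction of functions is functorial in the domain, $\semantics X$ is a contravariant functor $\op{(\aseq\act)}\to\set$.

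Next, for a fair simulation $f$ with chaos-preserving extension $f_\Omega$, I would show that $\semantics{f}_\sigma(p)=f_\Omega\circ p$ lands in $\semantics Y(\sigma)$. Because $f$ satisfies \eqref{eq:trans-preserve}, post-composing a finite execution with $f$ again yields a finite execution of the same trace; for an infinite fair execution $p$, the map $f_\Omega\circ p$ preserves every transition and sends the top element to $\Omega$, hence is an infinite execution, and $f_\Omega\circ p\in\fairness Y$ by the defining clause of a fair simulation (Definition~\ref{def:fairbisim}). In all cases $\max\dom(f_\Omega\circ p)=\max\dom p$, so $\semantics{f}_\sigma$ is correctly typed, and naturality is the identity $f_\Omega\circ(p|_{\history{\sigma'}})=(f_\Omega\circ p)|_{\history{\sigma'}}$. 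Functoriality then follows from $\id{X}\circ p=p$ together with $(g\circ f)_\Omega\circ p=g_\Omega\circ(f_\Omega\circ p)$, using that the chaos-preserving extension of a composite is the composite of the extensions.

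For faithfulness, suppose $\semantics f=\semantics g$ for fair simulations $f,g\colon X\to Y$; since a morphism of $\fts$ is determined by its underlying function on states, it suffices to prove $f(x)=g(x)$ for every $x\in X$. The singleton map $p_x\colon\history{\eseq}\to X$ with $p_x(\eseq)=x$ vacuously satisfies the execution condition, so $p_x\in\exec X\subseteq\semantics X(\eseq)$, and $\semantics{f}_{\eseq}(p_x)$ is the singleton map $\eseq\mapsto f(x)$ while $\semantics{g}_{\eseq}(p_x)$ is $\eseq\mapsto g(x)$. Equality of $\semantics f$ and $\semantics g$ at $\eseq$ then forces $f(x)=g(x)$. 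The only place any genuine structure is used is the well-definedness of $\semantics{f}_\sigma$ above (the transition- and fairness-preservation properties of fair simulations); the remaining verifications, faithfulness included, are routine bookkeeping, so I do not expect a real obstacle.
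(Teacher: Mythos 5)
Your proposal is correct and follows essentially the same route as the paper's proof: well-definedness of $\semantics{f}_\sigma$ via transition preservation \eqref{eq:trans-preserve} (plus the chaos/fairness clause for infinite traces), naturality via $f\circ(p|_{\history{\sigma'}})=(f\circ p)|_{\history{\sigma'}}$, and faithfulness witnessed by the empty executions $\eseq_x$ at stage $\eseq$. You are in fact slightly more explicit than the paper about the presheaf structure of $\semantics X$, the preservation of the fairness predicate, and functoriality, but these are routine additions, not a different argument.
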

\begin{proof}
  It is clear that a given fair transition system $\tuple{X,\act,\rightarrow,\fairness{X}}$ induces a presheaf $\semantics{X}$ on $\aseq\act$. Let $X\cup \{\Omega\} \rTo^f Y\cup \{\Omega\}$ be a given fair simulation function and let $p\in\semantics{X}(\sigma)$. Then $f\circ p$ preserves the transition structure since $f$ satisfies \eqref{eq:trans-preserve}; thus,
  \[
  p(\sigma') \step a p (\sigma' a) \implies
  f(p(\sigma')) \step a f(p(\sigma'a))
  \]
  for any $\sigma' a\preceq \sigma$ if $\sigma\in\fseq\act$ (or $\sigma' a \prec \sigma$ if $\sigma\in\iseq\act$). Moreover, if $\sigma\in\iseq\act$, then $\semantics f_\sigma (p(\sigma))=\Omega$ by construction. Thus, $\semantics{f}_\sigma\circ p\in\semantics{Y}(\sigma)$.

  Furthermore, $f$ induces a natural transformation $\semantics{X} \rTo^{\semantics{f}}\semantics{Y}$ because for any $\sigma' \preceq \sigma$ and $p\in\semantics{X} (\sigma)$, we find that
  \[f \circ (p \cdot \sigma') = f \circ (p|_{\history{\sigma'}})= (f\circ p)|_{\history{\sigma'}}=(f\circ p) \cdot \sigma'.\]
  Lastly, to show that $\semantics{\_}$ is faithful, assume that $\semantics{f}=\semantics{g}$. Then we find that $\eseq_{f(x)}= \semantics{f}_\eseq (\eseq_x) = \semantics{g}_\eseq (\eseq_x)=\eseq_{g(x)}$, for any $x\in X$. Thus, $f=g$.
\end{proof}

\begin{theorem}[Theorem~\ref{thm:fair-bisimulation-map}]
  A fair simulation function $f$ is a fair bisimulation if, and only if, the underlying map $\semantics{f}$ is a bisimulation map in $\presheaves{\aseq\act}$.
\end{theorem}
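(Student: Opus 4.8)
The plan is to prove both implications directly by translating the presheaf-level lifting condition in Definition~\ref{def:bisim-presheaf} into the three state-level conditions defining a fair bisimulation (surjectivity, \eqref{eq:bisim-reflect}, and the limit-reflection condition \eqref{eq:fbisim-reflect}), and back. Throughout I would use the isomorphic view $\semantics{X}(\sigma)=\{p\in\fexec{X}\mid\max\dom p=\sigma\}$ together with the fact that monos in $\presheaves{\aseq\act}$ are exactly the pointwise-injective natural transformations, and that the Yoneda embedding $\yoneda{\aseq\act}$ picks out representables $\aseq\act(-,\sigma)$ whose elements at $\sigma'$ are prefixes $\sigma'\preceq\sigma$.

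For the ``only if'' direction, assume $f$ is a fair bisimulation and take a commutative square \eqref{eq:bisim-presheaf} with mono $g\colon P\rMono Q$. I would define the lift $k\colon Q\to\semantics{X}$ pointwise: for $\varrho\in\aseq\act$ and $q\in Q\varrho$, the element $n_\varrho(q)\in\semantics{Y}(\varrho)$ is a fair execution $\history\varrho\to Y\cup\{\Omega\}$, and I must produce a fair execution $k_\varrho(q)\in\semantics{X}(\varrho)$ over $x$ with $f\circ k_\varrho(q)=n_\varrho(q)$, agreeing with $m$ wherever $q$ is in the image of $g$. The construction proceeds by induction along the prefix order on $\history\varrho$: the base case $\eseq$ uses surjectivity of $f$ (or the already-chosen value from $m$ when $q=g(p)$), each successor step $\sigma'\!\to\!\sigma'a$ uses the transition-reflection condition \eqref{eq:bisim-reflect} to pick a predecessor in $X$, and at a limit ordinal $\sigma\in\iseq\act$ we have an increasing chain of finite executions whose $f$-image is the restriction of the fair execution $n_\varrho(q)$; condition \eqref{eq:fbisim-reflect} guarantees the chain has a limit in $\fexec X$ and that its image is the prescribed fair execution (so in particular it lands in $\fairness X$, hence in $\semantics{X}(\sigma)$). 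Naturality of $k$ follows because all choices are made compatibly with restriction — the subtlety here is that these choices must be made coherently across all $q$ simultaneously, which I would handle by well-ordering/recursion on $\history\varrho$ uniformly, reusing on $g(p)$ the values dictated by $m$.

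For the ``if'' direction, assume $\semantics{f}$ is a bisimulation map. Surjectivity of $f$ at the state level: apply the lifting square with $P=0$, $Q=\yoneda{\aseq\act}(\eseq)$ (so $Q(\eseq)$ is a singleton $\{\eseq_y\}$ for a chosen $y\in Y$) and $n$ the map classifying $\eseq_y\in\semantics{Y}(\eseq)$; the lift gives some $\eseq_x\in\semantics{X}(\eseq)$ with $f(x)=y$ — actually, since $\semantics{f}$ is a retract by Theorem~\ref{thm:bisim-epi}, surjectivity of $f$ is immediate from surjectivity of $\semantics f_\eseq$. For \eqref{eq:bisim-reflect}, given $f(x)\step a y$ build the square with $Q=\yoneda{\aseq\act}(a)$ and $P=\yoneda{\aseq\act}(\eseq)\rMono Q$, with $n$ classifying the length-one execution $\eseq\mapsto f(x),\,a\mapsto y$ and $m$ classifying $\eseq_x$; the lift yields a length-one execution over $x$ reflecting the transition. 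For \eqref{eq:fbisim-reflect}, given an increasing chain $(p_i)$ in $\exec X$ such that $\bigsqcup_i f\circ p_i$ is a fair execution over $f(p_0(\eseq))$, build the square with $Q=\yoneda{\aseq\act}(\sigma)$ for the limiting infinite word $\sigma$, $P$ the sub-presheaf generated by the finite prefixes, $n$ classifying the fair execution $\bigsqcup_i f\circ p_i$, and $m$ picking out the chain $(p_i)$; the lift produces a fair execution over $X$ whose finite restrictions are the $p_i$, i.e.\ the chain has a limit in $\fexec X$, and applying $f$ recovers $\bigsqcup_i f\circ p_i$, which is exactly \eqref{eq:fbisim-reflect}.

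The main obstacle is the ``only if'' direction at the limit step: I must ensure that the inductively-chosen increasing chain of finite executions in $X$ whose $f$-images match the restrictions of the fair execution $n_\varrho(q)$ actually has a supremum in $\fexec X$ and that this supremum maps onto $n_\varrho(q)$. This is precisely where \eqref{eq:fbisim-reflect} is needed, but care is required because the chain in $X$ is not canonical — it depends on all the earlier successor-step choices — so I would phrase the induction as building, for each $q$, a full fair execution at once (not prefix-by-prefix in isolation), invoking \eqref{eq:fbisim-reflect} to the chain of its finite truncations; the Kleene equality $\approx$ in \eqref{eq:fbisim-reflect} is what tells us the left supremum is defined exactly when the right one is, closing the gap. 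The remaining verifications (naturality of $k$, commutativity of the two triangles, compatibility with $m$ on $\mathrm{im}\,g$) are routine bookkeeping over the prefix order.
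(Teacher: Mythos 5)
Your proposal is correct and follows essentially the same route as the paper's proof: for the forward direction a well-founded induction over the prefix order on $\aseq\act$ using surjectivity of $f$ at $\eseq$, \eqref{eq:bisim-reflect} at finite successor words and \eqref{eq:fbisim-reflect} at infinite words, and for the converse the same small test squares (your $\yoneda{\aseq\act}(\eseq)\to\yoneda{\aseq\act}(a)$ and the finite-prefix subpresheaf of $\yoneda{\aseq\act}(\sigma)$ are exactly the paper's hand-built $P$ and $Q$) together with Theorem~\ref{thm:bisim-epi} for surjectivity. The coherence issue you flag is handled in the paper precisely in the way you gesture at: one recurses on the index $\sigma$ and defines $k_\sigma$ on all of $Q(\sigma)$ from the already-fixed values $k_{\sigma'}(\bar q\cdot\sigma')$, so naturality is automatic, and the paper adds the short uniqueness argument (which you elide) that the execution produced in the converse direction really is the supremum $\bigsqcup_{i} p_i$.
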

\begin{proof}
  \fbox{$\Rightarrow$} Consider the commutative square as depicted below in $\presheaves{\aseq\act}$.
    \begin{equation}\label{eq:fbisim}
    \begin{diagram}
          Q & \rTo^n & \semantics{Y}\\
          \uMono^g & & \uTo_{\semantics{f}}\\
          P & \rTo_m & \semantics{X}
    \end{diagram}
    \end{equation}
    We will prove the existence of $Q \rTo^k \semantics{X}$ by well-founded induction. Note that the relation $\preceq$ is a well-founded relation on $\aseq\act$. To see this, let $U\subseteq\aseq\act$ and let $\sigma\in U$, for some $\sigma\in\aseq\act$. Since the history of $\sigma$ is discrete, i.e., $\history{\sigma}\setminus\{\sigma\}$ is isomorphic to $\history{n}$ (for some $n$ when $\sigma$ is finite) or $\nat$ (when $\sigma$ is infinite), we know that the infimum $\bigsqcap (\history{\sigma}\cap U)$ exists. Thus, every nonempty subset of $\aseq\act$ has a minimal element; whence, $\preceq$ is a well-founded relation on $\aseq\act$.

    For the base case, let $\sigma=\eseq$. Notice that the map $\semantics{f}_{\eseq}$ is surjective since the given $f$ is surjective. Moreover, since every epi splits in \set, there is a map $\semantics{Y}(\eseq) \rTo^h \semantics{X}(\eseq)$ such that $\semantics f_\eseq\circ h=\id{\semantics{Y}(\eseq)}$. Thus, we define
    \[
    k_\eseq(q)=
    \begin{cases}
      m_\eseq (\inv{g}_\eseq (q)), & \text{if}\  q\in g_\eseq (P(\eseq))\\
      h (n_\eseq (q)), & \text{otherwise}.
    \end{cases}
    \]
    Clearly, $\semantics{f}_\eseq\circ k_\eseq=n_\eseq$ and $k_\eseq\circ g_\eseq=m_\eseq$.

    For the inductive case, consider a family of functions $k_{\sigma'}$ (for each $\sigma'\preceq \sigma$) such that
\begin{enumerate}
  \item[$I.1$] $k$ is a natural transformation up to $\sigma$, i.e., $k_{\sigma''} (q\cdot \sigma'')= (k_{\sigma'} (q)) \cdot \sigma''$, for any $\sigma''\preceq\sigma' \preceq \sigma$ and $q\in Q(\sigma')$;
  \item[$I.2$] the lower triangle commutes up to $\sigma$, i.e., $k_{\sigma'}\circ g_{\sigma'} = m_{\sigma'}$ (for any $\sigma'\preceq\sigma$);
  \item[$I.3$] the upper triangle commutes up to $\sigma$, i.e.,
      $\semantics{f}_{\sigma'} \circ k_{\sigma'} = n_{\sigma'}$ (for any $\sigma'\preceq \sigma$).
\end{enumerate}
Next, we distinguish two cases based on the length of $\sigma$.
\begin{enumerate}
  \item Let $\sigma\in\fseq\act$. Then we know that $\sigma=\sigma' a$, for some $\sigma'\in\fseq\act,a\in\act$. Let $\bar q\in Q(\sigma)$ and assume that $\bar q\not\in g_{\sigma}( P(\sigma))$ (otherwise, simply take $k_{\sigma}(\bar q)=m_{\sigma}(\inv{g}_{\sigma} (\bar q))$). Then there are some $q'\in Q(\sigma')$ and $q\in \semantics{Y}(\sigma')$ such that $q'=\bar q\cdot \sigma'$, $n_{\sigma'}(q')=q$, and $q=(n_{\sigma}(\bar q))\cdot \sigma'$. And by the induction hypothesis we find an execution $p\in\exec{X,\sigma'}$  such that $k_{\sigma'}(q')=p$ and $\semantics{f}_{\sigma'}(p)=q$. Thus, $f(p) \prec n_\sigma(\bar q)$. And using \eqref{eq:bisim-reflect} we find a strict extension $p'$ of $p$ (i.e., $p'\prec p$) such that $f(p')=n_\sigma(\bar q)$. Clearly, $p'\cdot \sigma' = p$. Therefore, we let $k_{\sigma}(\bar q)=p'$. Clearly, $(k_{\sigma}(\bar q))\cdot \sigma' = k_{\sigma'}(\bar q \cdot \sigma')$. Moreover, $\semantics{f}_{\sigma}\circ k_{\sigma}(\bar q)=n_{\sigma}(\bar q)$; thus, the upper triangles commute up to $\sigma$. Lastly, commutativity of the lower triangle up to $\sigma$ follows directly from $k_{\sigma'}$. Hence, the inductive hypothesis ($I.1 \land I.2 \land I.3$) is satisfied by $k_\sigma$.
  \item Let $\sigma\in\iseq\act$. Let $\bar q\in Q(\sigma)$ and assume that $\bar q\not\in g_\sigma (P(\sigma))$. Then for every $i\in\nat$, there are some $q'_i \in Q(\sigma_i)$ and $q_i\in \semantics{Y}(\sigma_i)$ (where $\sigma_i=\sigma|_{i}\in\fseq\act$) such that $q_i'= \bar q\cdot\sigma_i$, $n_{\sigma_i}(q_i') = q_i$, and $q_i=(n_\sigma (\bar q))\cdot \sigma_i$. By induction hypothesis, for each $i$, we find some execution $p_i\in\exec{X}$ such that $k_{\sigma_i} (q_i')=p_i$ and $\semantics{f}_{\sigma_i} (p_i) = f(p_i) = q_i$. Note that this sequence $(p_i)_{i\in \nat}$ is strictly increasing because using the induction hypothesis $I.1$, we find (for any $i\in\nat$):
      \[
      p_{i+1}\cdot \sigma_i =
        (k_{\sigma_{i+1}}(q_{i+1}'))\cdot \sigma_i =
        k_{\sigma_i} (q_{i+1}'\cdot \sigma_i) = k_{\sigma_i} (q_i') = p_i.
      \]
      Thus, $p_i \prec p_{i+1}$. Clearly, the supremum $\bigsqcup_{i\in\nat} f(p_i)$ exists since it evaluates to the execution $n_{\sigma}(\bar q)$. So, from \eqref{eq:fbisim-reflect} we know that the supremum $\bigsqcup_{i\in\nat} p_i$ also exists and, moreover, $f(\bigsqcup_{i\in\nat} p_i) = n_\sigma(\bar q)$. Therefore, we let $k_\sigma(\bar q) = \bigsqcup_{i\in\nat} p_i$.

      Clearly, $(k_\sigma (\bar q))\cdot \sigma_i = p_i = k_{\sigma_i}(\bar q\cdot \sigma_i)$, for any $i\in\nat$. Moreover, $\semantics{f}_\sigma\circ k_\sigma (\bar q) = n_\sigma (\bar q)$; thus, the upper triangle commutes up to $\sigma$. Lastly, commutativity of the lower triangle up to $\sigma$ follows directly from each $k_{\sigma_i}$. Hence, the inductive hypothesis ($I.1 \land I.2 \land I.3$) is satisfied by $k_\sigma$.
\end{enumerate}
\fbox{$\Leftarrow$} Let $X\cup \{\Omega\}\rTo^f Y \cup \{\Omega\}$ be a fair simulation function such that $\semantics{f}$ is a bisimulation map in $\presheaves{\aseq\act}$. The function $f$ is surjective because the map $\semantics{f}_\eseq$ is surjective (cf. Theorem~\ref{thm:bisim-epi}). Next, we show that $f$ satisfies the condition in \eqref{eq:bisim-reflect}. Let $f(x) \step a y$ (for some $x\in X,a\in\act,y\in Y$) and let $q\in\exec{Y}$ witnessing this transition. Construct two presheaves $P,Q\in\presheaves{\aseq\act}$ as follows:
\begin{itemize}
  \item $P(\eseq)=\{\eseq_x\}$ and $P(\sigma)=\emptyset$ (for all $\sigma\neq\eseq$).
  \item $Q(\eseq)=\{\eseq_{f(x)}\}$, $Q(a)=\{q\}$, and $Q(\sigma)=\emptyset$ (for $\sigma\not\in \{\eseq,a\}$) with $q\cdot\eseq=\eseq_{f(x)}$.
\end{itemize}
Consider the square drawn in \eqref{eq:fbisim} and take $m,n$ as the obvious injections with $g_\eseq(\eseq_x)= \eseq_{f(x)}$ and $g_\sigma=\emptyset$ (for $\sigma\neq\eseq$). Then this square commutes, i.e., $\semantics{f}\circ m =n \circ g$. Moreover, the map $g$ is clearly a mono in $\presheaves{\aseq\act}$. Thus, there is a diagonal map $Q \rTo^k \semantics{X}$ such that the two triangles commute (see \eqref{eq:fbisim}). Therefore, there is an execution $p\in\semantics{X}(a)$ such that $p\cdot \eseq=\eseq_x$ and $f\circ p =q$ (due to the commutativity of lower and upper triangles, respectively). I.e., there is a transition $x \step a \last p$ and $f(\last p) = y$.

It remains to show that $f$ satisfies \eqref{eq:fbisim-reflect}. Consider an increasing sequence of finite executions $(p_i)_{i\in\nat}$. The interesting case is when the supremum $\bigsqcup_{i\in\nat} f\circ p_i$ exists. The other case when the supremum $\bigsqcup_{i\in\nat} p_i$ exists is trivial, since a simulation function already preserves such limit executions. Let $\sigma_i=\max \dom {p_i}$ (for each $i\in\nat$). Note that the supremum $\bigsqcup_{i\in\nat}\sigma_i$ exists since $\aseq\act$ is a directed complete poset. Now construct the following presheaves $P,Q\in\presheaves{\aseq\act}$.
\begin{itemize}
  \item $P(\sigma_i)=\{p_i\}$ and $P(\sigma)=\emptyset$ (for $\sigma\not\in \bigsqcup_{i\in\nat} \sigma_i$) with $p_i\cdot\sigma_j=p_j$ (for $j\leq i$).
  \item $Q(\sigma_i)=\{f\circ p_i\}$, $Q(\bigsqcup_{i\in\nat} \sigma_i)=\bigsqcup_{i\in\nat} f\circ p_i$, and $Q(\sigma)=\emptyset$ (for $\sigma\not\in \bigsqcup_{i\in\nat} \sigma_i$ or $\sigma\neq\bigsqcup_{i\in\nat}\sigma_i $) with $(\bigsqcup_{i\in\nat} f\circ p_i) \cdot\sigma_i = f\circ p_i$ and $(f\circ p_i)\cdot\sigma_j = f\circ (p_i\cdot\sigma_j)$. Notice that the last equation holds because $\semantics{f}$ is a natural transformation, which is due to $f$ being a fair simulation function (cf. Proposition~\ref{prop:fair-semantics}).
\end{itemize}
Furthermore, taking $m,n,$ and $g$ as the obvious injections in \eqref{eq:fbisim}, we again find that the square drawn in \eqref{eq:fbisim} commutes. Thus, there is a map $Q \rTo^k \semantics{X}$ such that the two triangles commute. Therefore, there is an execution $p\in\fexec X$ such that $p|_{\dom {p_i}} = p_i$ (for each $i\in\nat $) and $f\circ p =\bigsqcup_{i\in\nat} f\circ p_i$. Next, we claim that $p=\bigsqcup_{i\in\nat} p_i$. Clearly, $p$ is an upper bound of the sequence $(p_i)_{i\in\nat}$. Let $p'$ be another upper bound of this sequence. Then, $p'(\sigma_i) = p_i(\sigma_i) = p(\sigma_i)$ (for any $i\in\nat$). Moreover, $p'(\bigsqcup_{i\in\nat}\sigma_i) = \Omega = p(\bigsqcup_{i\in\nat}\sigma_i)$. Thus, $p=p'$; whence $p$ is the supremum of the sequence $(p_i)_{i\in\nat}$.
\end{proof}

\begin{theorem}[Theorem~\ref{thm:existsfairisfairbisimfct}]
Two states $x$ and $x'$ are related by a $\forall$-fair bisimulation relation if, and only if, there is a fair bisimulation function $f$ such that $f(x)=f(x')$.
\end{theorem}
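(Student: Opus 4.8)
The plan is to prove both directions through the standard ``kernel/quotient'' bridge between functional and relational bisimulations, the only genuinely new ingredient being the interplay of the fairness clauses with the limit-reflection condition~\eqref{eq:fbisim-reflect}.

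\emph{From functions to relations.} Given a fair bisimulation $X\cup\{\Omega\}\rTo^f Y\cup\{\Omega\}$ with $f(x)=f(x')$, I take $\mcal R=\{(a,b)\in X\times X\mid f(a)=f(b)\}$, the kernel of $f$; it is plainly an equivalence relation with $x\mcal R x'$, so by Definition~\ref{def:forall-fairbisim} it remains to check the two transfer properties. For the transition transfer~(\ref{def:forall-fairbisim1}): if $a\step c a'$ and $a\mcal R b$ then $f(a)\step c f(a')$ by~\eqref{eq:trans-preserve}, hence $f(b)=f(a)\step c f(a')$, and applying~\eqref{eq:bisim-reflect} to this transition yields $b'$ with $b\step c b'$ and $f(b')=f(a')$, i.e.\ $a'\mcal R b'$. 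For the fairness transfer~(\ref{def:forall-fairbisim2}): if $p=_{\mcal R}q$ with $p\in\fairness X$, then $p$ and $q$ are infinite executions with common domain $\history\sigma$ and $f\circ p=f\circ q$. Writing $p_i,q_i$ for their restrictions to $\history{\sigma|_i}$, we have $f\circ q_i=f\circ p_i$; since $f$ is a fair simulation, $f\circ p\in\fairness Y$, and $f\circ p$ is the only upper bound in $\fexec Y$ of the chain $(f\circ p_i)_i=(f\circ q_i)_i$ (a finite execution cannot bound it, since the lengths are unbounded), so $\bigsqcup_i f\circ q_i$ exists. By~\eqref{eq:fbisim-reflect} applied to $(q_i)_i$, $\bigsqcup_i q_i$ then also exists in $\fexec X$; it is an infinite execution extending every $q_i$, hence equals $q$, so $q\in\fexec X$ and therefore $q\in\fairness X$.

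\emph{From relations to functions.} Conversely, given a $\forall$-fair bisimulation $\mcal R$ on $X$ with $x\mcal R x'$, I let $f\colon X\to X/\mcal R$ be the quotient map (extended by $\Omega\mapsto\Omega$), put $[a]\step c [b]$ iff $a''\step c b''$ for some $a''\in[a],b''\in[b]$, and set $\fairness{X/\mcal R}=\{f\circ p\mid p\in\fairness X\}$. Then $f$ is surjective, preserves transitions and fairness by construction (so it is a fair simulation), and~\eqref{eq:bisim-reflect} for $f$ is exactly transfer property~(\ref{def:forall-fairbisim1}) read through the definition of the quotient transitions. For~\eqref{eq:fbisim-reflect}, after the trivial reduction to a strictly increasing chain $(p_i)_i$, the $p_i$ glue to a single infinite execution $p^\ast$ of $X$, and $\bigsqcup_i p_i$ (resp.\ $\bigsqcup_i f\circ p_i$) exists in $\fexec X$ (resp.\ $\fexec{X/\mcal R}$) precisely when $p^\ast$ (resp.\ $f\circ p^\ast$) is fair; so it suffices to show $p^\ast\in\fairness X\iff f\circ p^\ast\in\fairness{X/\mcal R}$. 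The forward implication is the definition of $\fairness{X/\mcal R}$; for the converse, $f\circ p^\ast=f\circ q$ for some $q\in\fairness X$, whence $p^\ast=_{\mcal R}q$, and transfer property~(\ref{def:forall-fairbisim2}) gives $p^\ast\in\fairness X$. Finally $f(x)=f(x')$ since $x\mcal R x'$.

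\emph{Where the work is.} The transition-level reasoning is the routine kernel/quotient argument. The delicate point, present in both directions, is the fairness clause: one must translate the pointwise comparison $=_{\mcal R}$ of infinite executions into the limit condition~\eqref{eq:fbisim-reflect} and back, and in particular argue that a prefix chain of finite executions admits a supremum in $\fexec X$ exactly when the infinite execution it glues to is fair. Keeping precise track of the chaos state $\Omega$ and of strict increase along the chains (so that the suprema in question really are the intended infinite executions) is where care is required.
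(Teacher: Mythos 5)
Your proof is correct and follows essentially the same route as the paper's: the kernel of $f$ for the forward direction and the quotient system with image fairness predicate for the converse, with the same interplay between $=_{\mcal R}$ and the Kleene-equality condition \eqref{eq:fbisim-reflect} via uniqueness of upper bounds of unbounded chains in $\fexec{X}$. The only differences are presentational (you reduce \eqref{eq:fbisim-reflect} to the equivalence ``$p^\ast$ fair iff $f\circ p^\ast$ fair'', while the paper constructs the glued execution explicitly and chains the equalities), so nothing further is needed.
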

\begin{proof}
	Let $\tuple{X,\act,\rightarrow,\fairness{X}}$ be a fair transition system.

\fbox{$\Rightarrow$} Assume a given fair bisimulation function $X\cup\{\Omega\} \rTo^f Y \cup\{\Omega\}$ with some fair transition system $\tuple{Y,\act,\rightarrow,\fairness{Y}}$. Then we show that the equivalence relation $\mcal R=\{(x,x')\mid f(x)=f(x')\}$ satisfies the transfer properties of a $\forall$-fair bisimulation relation.

Let $x\mcal Ry$ and $x\step ax'$. Using \eqref{eq:trans-preserve} we have $f(x)\step a f(x')$ and therefore $f(y)\step af(x')$, since $f(x)=f(y)$. Using \eqref{eq:bisim-reflect} we find some $y'\in X$ such that $y\step ay'$ and $f(y')=f(x')$. Thus, $x' \mcal R y'$.

Let $p\in\fairness{X}$ and let $q$ be an infinite execution such that $p=_{\mcal R} q$. Further, let $\sigma=\max\dom p$. Clearly, $\sigma\in\iseq\act$. Let $r_i=r\cdot (\sigma|_i)$, for each $i\in\nat$ and $r\in\{p,q\}$. Then we find $f\circ p_i = f\circ q_i$ because $p=_{\mcal R} q$. Now using \eqref{eq:fbisim-reflect}, we find $q=\bigsqcup_{i\in\nat} q_i\in\fairness{X}$ because
\[
f\circ \bigsqcup_{i\in\nat} p_i = \bigsqcup_{i\in\nat} f\circ p_i = \bigsqcup_{i\in\nat} f\circ q_i=f\circ \bigsqcup_{i\in\nat} q_i.
\]
\fbox{$\Leftarrow$} Assume a given $\forall$-fair bisimulation $\mcal R\subseteq X\times X$ which, by definition, is an equivalence relation. We construct a matching fair bisimulation function whose codomain is the quotient system $Y=\{[x]=\{x'\mid (x,x')\in \mcal R\}\mid x\in X\}$ with $[x]\step a[x']\iff \exists_{x_1\in[x], x_2\in[x']}\ x_1\step{a}x_2$. The fairness predicate on the quotient system $Y$ is defined as follows:
  \begin{equation}\label{eq:fairnessY1}
  \fairness{Y} = \{f\circ p \mid p\in\fairness{X}\},
  \end{equation}
  where $X \cup \{\Omega\} \rTo^f Y \cup \{\Omega\} $ by $f(x)=[x]$ and $f(\Omega)=\Omega$. We claim that the predicate $\fairness{Y}$ is well defined, i.e., independent of chosen representatives. To prove this, we show that
  \[\forall_{p,p'}\ \big(f\circ p =f\circ p' \implies (p\in\fairness{X} \iff p'\in\fairness{X}) \big).\]
  So let $p,p'$ be such that $f\circ p=f\circ p'$. Then we find that $\dom{p}=\dom{p'}$ and $\forall_{\sigma\in\dom p}\ p(\sigma) \mcal R p'(\sigma)$. I.e., $p=_{\mcal R} p'$. Clearly, from Definition~\ref{def:forall-fairbisim}\eqref{def:forall-fairbisim2} we obtain $p\in\fairness{X}\iff p'\in\fairness{X}$. Thus, $\fairness{Y}$ is well
  defined.

  It is straightforward to see that $f$ is a surjective fair simulation. The function $f$ satisfies \eqref{eq:trans-preserve} because $x \step a x' \implies [x] \step a [x']$. Moreover, $f$ preserves  fairness by the construction \eqref{eq:fairnessY1} of $\fairness{Y}$.

  Now to show that $f$ satisfies \eqref{eq:fbisim-reflect}, assume an increasing sequence of finite executions $(p_i)_{i\in\nat}$. The interesting case is when the supremum $q=\bigsqcup_{i\in \nat} f\circ p_i$ exists. The other case when the supremum $\bigsqcup_{i\in\nat} p_i$ exists is trivial since a fair simulation function preserves this limit. So consider the former case. Then define an infinite execution $\dom{q} \rTo^p X$ as follows:
      \[
      p(\sigma) =
      \begin{cases}
        p_{|\sigma|}(\sigma), & \mbox{if } \sigma\in\dom q\cap \fseq\act\\
        \Omega, & \mbox{otherwise}.
      \end{cases}
      \]
Note that $p$ is a well defined function since the domain of a fair infinite execution has one unique infinite word. Moreover, we know that $p$ is an infinite execution because of the given increasing sequence of finite executions.

Moreover, using \eqref{eq:fairnessY1}, we find some $p'\in\fairness{X}$ such that $q=f\circ p'$. Clearly, $\dom p =\dom q=\dom{p'}$. In addition,
\[
f (p(\sigma)) = f(p_{|\sigma|}(\sigma)) = q(\sigma) = f(p'(\sigma))\qquad \text{(for any $\sigma\in\fseq\act$)}.
\]
Clearly, $p(\max \dom q)=\Omega=p'(\max\dom{q})$ by the construction of an infinite execution. Thus, $p=_{\mcal R} p'$. So, from Definition~\ref{def:forall-fairbisim}\eqref{def:forall-fairbisim2}, we conclude that $p=\bigsqcup_{i\in\nat} p_i\in\fairness{X}$.

It remains to show that \ref{eq:bisim-reflect} holds. So, assume $f(x)\step ay$. By definition of $f$it follows that $[x]\step ay$, which means there exist $x_1\in [x], x_2\in y$ such that $x_1\step ax_2$. By Definition~\ref{def:forall-fairbisim}\eqref{def:forall-fairbisim1}, it follows then that there exists an $x'$ such that $x\step ax'\wedge x_2\in[x']$, i.e. $f(x_2)=f(x')=y$.
\end{proof}

\subsection{Closure of $\forall$-fair bisimulation relations under relational composition and union}\label{subsec:fair-closure}
\begin{lemma}
  There are two $\forall$-fair bisimulation relations which are not closed under relational composition.
\end{lemma}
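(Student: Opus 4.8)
The plan is to construct an explicit counterexample: a single fair transition system $\tuple{X,\act,\rightarrow,\fairness X}$ equipped with two $\forall$-fair bisimulation relations $\mcal R_1,\mcal R_2$ whose relational composition $\mcal R_2\circ \mcal R_1$ fails one of the transfer properties in Definition~\ref{def:forall-fairbisim}. Since a $\forall$-fair bisimulation is by definition an equivalence relation, I first observe that the composition $\mcal R_2\circ\mcal R_1$ of two equivalence relations need not even be an equivalence relation (it is typically not transitive), so the cleanest route is to exhibit the failure already at the level of the infinite-execution transfer property \eqref{def:forall-fairbisim2} rather than property \eqref{def:forall-fairbisim1}. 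The idea is to take three states $x_0,x_1,x_2$ (plus whatever transition structure is needed to carry non-trivial infinite executions), let $\mcal R_1$ be the equivalence generated by identifying $x_0$ with $x_1$, and $\mcal R_2$ the equivalence generated by identifying $x_1$ with $x_2$; then $x_0\,(\mcal R_2\circ\mcal R_1)\,x_2$ but the fairness predicate can be chosen so that the infinite executions through $x_0$ and those through $x_2$ disagree on fairness, while each of $\mcal R_1$, $\mcal R_2$ individually only identifies states whose associated fair-execution behaviour is genuinely the same.

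Concretely, I would use an alphabet like $\act=\{a\}$ and give each of $x_0,x_1,x_2$ a self-loop labelled $a$, so that from each $x_i$ there is exactly one infinite execution, namely the one looping forever on $a$ at $x_i$. Designate the infinite execution looping at $x_1$ as fair, and the infinite executions looping at $x_0$ and $x_2$ as unfair — but crucially, to make $\mcal R_1$ a valid $\forall$-fair bisimulation identifying $x_0$ and $x_1$, I need the fairness status of an execution to depend not only on which states it visits but on a genuine choice at some branching point, so I would actually give $x_0$ (resp.\ $x_2$) a $\tau$-free branching: an $a$-transition into a fresh ``fair sink'' and an $a$-transition into a fresh ``unfair sink'', arranged so that $p=_{\mcal R_i} q$ forces $p$ and $q$ to make the same branching choice and hence have the same fairness status, yet the identification $x_0\mcal R_1 x_1$ (and $x_1\mcal R_2 x_2$) is consistent. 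The fairness predicate is then defined by hand on each system to respect $=_{\mcal R_1}$ and $=_{\mcal R_2}$ separately, while failing to respect $=_{\mcal R_2\circ\mcal R_1}$ because the latter relation, being coarser, glues together a fair and an unfair execution. I would then verify Definition~\ref{def:forall-fairbisim}\eqref{def:forall-fairbisim1} for $\mcal R_1$ and $\mcal R_2$ (routine, since the transition structure is symmetric under the swaps) and \eqref{def:forall-fairbisim2} for each (by construction), and finally exhibit a pair $p,q$ with $p=_{\mcal R_2\circ\mcal R_1}q$, $p\in\fairness X$, $q\notin\fairness X$.

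The main obstacle I anticipate is the tension between two requirements: on the one hand $\mcal R_1$ and $\mcal R_2$ must each genuinely be $\forall$-fair bisimulations, which constrains the fairness predicate to be invariant under the respective relational renamings of executions; on the other hand the composite must violate exactly that invariance. Because $=_{\mcal R}$ on executions is defined pointwise on the $\fseq\act$-indexed part of the domain (recall $p=_{\mcal R}q \iff \dom q=\dom p \land \forall_{\sigma\in\dom p\cap\fseq\act}\ p(\sigma)\mcal R q(\sigma)$), two executions are $=_{\mcal R}$-related iff they are ``the same run up to renaming states within $\mcal R$-classes'', so the fairness predicate must be constant on such classes for $\mcal R_1$ and for $\mcal R_2$ but not for the transitive closure of their union. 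Getting a system small enough to check by hand while still exhibiting this requires care in placing the branch point early (at the empty trace or length one) so that the renaming has no room to ``correct'' the choice; the footnote's reference to \cite{Hojati:96-phdthesis} for the analogous simulation statement suggests the construction there can be adapted, essentially by symmetrising it and transporting it to the fair-bisimulation setting. Once the system and the two relations are written down, the verification is a finite case analysis, so the real work is the design, not the proof.
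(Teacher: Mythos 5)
Your proposal never actually produces the witness, and the lemma is a bare existence statement, so the construction \emph{is} the proof; deferring it as ``the real work is the design, not the proof'' leaves exactly the content of the lemma unproven. Worse, the design you sketch provably cannot be completed. If $\mcal R_1$ and $\mcal R_2$ are the equivalences generated by the root identifications $x_0\sim x_1$ and $x_1\sim x_2$ only, then every non-root state (in particular every fresh sink) is related only to itself. Transfer property~(1) of Definition~\ref{def:forall-fairbisim} then forces $x_0,x_1,x_2$ to have $a$-transitions to \emph{literally the same} sinks (a matching transition must land in a related state, and the sinks are only self-related), and two executions related by $=_{\mcal R_2\circ\mcal R_1}$ can differ at most at root positions. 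Hence for any $p=_{\mcal R_2\circ\mcal R_1}q$ you can interpolate an actual execution $r$ through the middle root with $p=_{\mcal R_1}r$ and $r=_{\mcal R_2}q$ (the interpolated run exists because the roots share their outgoing transitions into the self-related sinks), so clause~\eqref{def:forall-fairbisim2} transfers fairness from $p$ to $q$ via $r$. In other words, ``same branching choice implies same fairness'' is not only compatible with the composite, it is inherited by it: any fairness predicate making $\mcal R_1$ and $\mcal R_2$ $\forall$-fair bisimulations in your setup automatically makes $\mcal R_2\circ\mcal R_1$ satisfy clause~\eqref{def:forall-fairbisim2} as well.

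The missing idea, which is the crux of the paper's counterexample, is that the relations must also identify \emph{sink} states across the components, through a decoy state whose fairness is never tested. Concretely, the paper takes three copies branching under $a$ into a fair and an unfair $b$-loop, plus an extra state $z$ in the middle copy that carries only a $b$-loop and has \emph{no} incoming $a$-transition; the first relation relates the fair sink $x_2$ to both $z_2$ and $z$, the second relates $z$ to the unfair sink $y_3$. Since no execution with trace $a b^\omega$ passes through $z$, clause~\eqref{def:forall-fairbisim2} never constrains $z$ in either relation, yet the composite relates $x_2$ directly to $y_3$ and thereby glues a fair execution to an unfair one --- precisely the situation in which the interpolation argument above breaks down, because the would-be intermediate run through $z$ does not exist. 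Two further points: your opening remark already contains a cheap route under this paper's definition (the composite of two equivalences is typically not even symmetric or transitive, and the paper itself records that its $\mcal T\mcal T'$ fails transitivity), but you then abandon it without completing it either, and the route you do commit to --- a violation of clause~\eqref{def:forall-fairbisim2}, which is also what the paper establishes for the symmetric-relation variant of \cite{Kupferman2003:fair_equiv_rel} --- genuinely requires the decoy mechanism (or something like it) that your sketch lacks.
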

\begin{proof}
  Consider the following system $X=\{\Box_i,z\mid \Box\in\{x,y,z\} \land i\in\{1,2,3\}\}$:
  \[
  \begin{tikzpicture}
    \node (x1) {$x_1$};
    \node (x2) at ($(x1.center)+(-0.75,-1.25)$) {$x_2$};
    \node (x3) at ($(x1.center)+(0.75,-1.25)$) {$x_3$};
    \node (z1) at ($(x1.center)+(3,0)$) {$z_1$};
    \node (z2) at ($(z1.center)+(-0.75,-1.25)$) {$z_2$};
    \node (z3) at ($(z1.center)+(0.75,-1.25)$) {$z_3$};
    \node (z) at ($(z1.center)+(0,-1.25)$) {$z$};
    \node (y1) at ($(z1.center)+(3,0)$) {$y_1$};
    \node (y2) at ($(y1.center)+(-0.75,-1.25)$) {$y_2$};
    \node (y3) at ($(y1.center)+(0.75,-1.25)$) {$y_3$};
    \path[->]
        (x1) edge node[above left]{$a$} (x2)
        (x1) edge node[above right]{$a$} (x3)
        (z1) edge node[above left]{$a$} (z2)
        (z1) edge node[above right]{$a$} (z3)
        (y1) edge node[above left]{$a$} (y2)
        (y1) edge node[above right]{$a$} (y3)
        (x2) edge[loop below] node[below] {$b$} (x2)
        (x3) edge[loop below] node[below] {$b$} (x3)
        (z2) edge[loop below] node[above left] {$b$} (z2)
        (z3) edge[loop below] node[above right] {$b$} (z3)
        (z) edge[loop above] node[right] {$b$} (z)
        (y2) edge[loop below] node[below] {$b$} (y2)
        (y3) edge[loop below] node[below] {$b$} (y3);
    \path[-,dashed]
        (x1) edge node[below] {$\mcal R$} (z1)
        (x2) edge[bend left] (z2)
        (x2) edge[bend left] (z)
        (x3) edge[bend right=40] (z3);
    \path[-,dotted]
        (z1) edge node[below] {$\mcal R'$} (y1)
        (z2) edge[bend right=44] (y2)
        (z) edge[bend left] (y3)
        (z3) edge[bend left] (y3);
  \end{tikzpicture}
  \]
  where dashed and dotted lines show the relations $\mcal R$ and $\mcal R'$, respectively. Furthermore, consider the following notion of fairness (note that $b^0=\eseq$):
  \[
  \fairness{X}=\{p \mid \text{$p$ is an infinite execution on $X$} \land \forall_{n\in\nat}\ p(ab^n)\in \{x_2,z_2,y_2\}\}.
  \]
  Clearly, $\mcal R$ and $\mcal R'$ are two $\forall$-fair bisimulation relations. However, their composition $\mcal R' \circ \mcal R$ is not a $\forall$-fair bisimulation. To see this, consider two infinite executions: $\history{a\iseq b} \rTo^{p,q} X$ with $p(\eseq)=x_1$, $q(\eseq)=y_1$ and $p(a b^n)=x_2$, $q(a b^n)=y_3$ (for all $n\in\nat$). Moreover, let $p(a \iseq{b})=\Omega=q(a \iseq{b})$. Clearly, $p=_{\mcal R'\circ \mcal R} q$ because $p(\eseq) \mcal R z \mcal R' q(\eseq)$ and $p(a b^n) \mcal R z \mcal R' q(a b^n)$ (for all $n\in\nat$). Yet, $p$ is a fair execution (i.e., $p\in\fairness{X}$) and $q$ is an unfair execution (i.e., $q\not\in\fairness{X}$). Thus the relation $\mcal R' \circ \mcal R$ violates Definition~\ref{def:forall-fairbisim}\eqref{def:forall-fairbisim2}. This shows that the lemma statement holds if we define $\forall$-fair bisimulation relations as in \cite{Kupferman2003:fair_equiv_rel}, i.e., without requiring the underlying relation to be an equivalence.

  Next, we show that how the above counterexample can be modified even if the underlying relations are equivalence relations. First, notice that both $\mcal R$ and $\mcal R'$ are symmetric relations (dashed lines and dotted lines are without any arrows). Consider the following extension of the two relations:
  \begin{equation*}
  \begin{aligned}
     \mcal T &=\ \mcal R \cup \id X \cup \{(z,z_2),(z_2,z)\} \\
     \mcal T'&=\ \mcal R' \cup \id X \cup \{(z,z_3),(z_3,z)\} .
  \end{aligned}
  \end{equation*}
  Clearly, $\mcal T$ is an equivalence relation; notice that the addition of pairs $(z,z_2)$ and $(z_2,z)$ in the relation $\mcal R$ results in a transitive and symmetric relation. Moreover, $z$ and $z_2$ are clearly $\forall$-fair bisimilar. On similar grounds, we can argue that $\mcal T'$ is a $\forall$-fair bisimulation relation. However, it is straightforward to see that the relation $\mcal T \mcal T'$ is not an equivalence relation because it is not transitive ($(x_1,z_1),(z_1,y_1)\in\mcal T \mcal T'$; however, $(x_1,y_1)\not\in\mcal T \mcal T'$). Thus, $\mcal T \mcal T'$ is not a $\forall$-fair bisimulation relation. \qed
\end{proof}
\begin{lemma}
There are two $\forall$-fair bisimulation relations which are not closed under union. This still holds if we restrict to Streett fairness constraints rather than arbitrary fairness constraints.
\end{lemma}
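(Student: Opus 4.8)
The plan is to produce, for the Streett case, an explicit fair transition system together with two $\forall$-fair bisimulation relations whose union is not a $\forall$-fair bisimulation; the construction runs parallel to the one in the preceding lemma. First I would fix a finite state space $X$, a transition relation, and a fairness predicate $\fairness X$ given by a finite list of Streett pairs $(E_i,F_i)$ (so that $p\in\fairness X$ iff for every $i$, if $p$ visits $E_i$ infinitely often then $p$ visits $F_i$ infinitely often). Then I would exhibit two relations $\mcal R_1,\mcal R_2\subseteq X\times X$ and check the two transfer properties of Definition~\ref{def:forall-fairbisim} for each of them: the first property is routine transition matching — and, exactly as observed in the proof of the previous lemma, it automatically passes to $\mcal R_1\cup\mcal R_2$ and to the equivalence it generates — while the second property amounts to showing that any two executions $p,q$ with $p=_{\mcal R_i}q$ agree on membership in $\fairness X$.

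The witness that $\mcal R_1\cup\mcal R_2$ fails would be a \emph{hybrid pointwise shadow}: a fair execution $p$ and an unfair execution $q$ with a common domain $\dom p=\dom q$ on which $p(\sigma)\mcal R_1 q(\sigma)$ at some time instants $\sigma$ and $p(\sigma)\mcal R_2 q(\sigma)$ at the others, so that $p=_{\mcal R_1\cup\mcal R_2}q$ (equivalently $p=_{\mcal R}q$ for the generated equivalence $\mcal R$) while $p\in\fairness X$ and $q\notin\fairness X$; since no single $\mcal R_i$ produces such a $q$, each $\mcal R_i$ stays a genuine $\forall$-fair bisimulation. One should also note that, for the reading of Definition~\ref{def:forall-fairbisim} that insists on equivalence relations, it already suffices that $\mcal R_1$ and $\mcal R_2$ be incomparable, since then $\mcal R_1\cup\mcal R_2$ is not transitive and hence not an equivalence: on a system on which every infinite execution is fair — a degenerate Streett condition — any two incomparable $\forall$-fair bisimulation equivalences witness the claim, which in particular confirms that no fairness engineering beyond a Streett constraint is needed.

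The main obstacle is the second transfer property for the individual relations $\mcal R_1,\mcal R_2$ under the Streett restriction. Unlike the predicate used for composition, a Streett condition is closed under passing to $\omega$-suffixes, so it cannot tell apart two executions that eventually coincide; consequently one cannot simply make a recurrently visited state ``bad'' in one context and ``good'' in another. The role mismatch between $\mcal R_i$-related states that the union exploits therefore has to be hidden inside each $\mcal R_i$ — arranged by routing it through states that no single fair execution visits infinitely often together, so that the pointwise image of a fair execution under $\mcal R_i$ alone can never realise the forbidden recurrence set, whereas the pointwise image under $\mcal R_1\cup\mcal R_2$, switching relations part-way through the run, can. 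Checking that this shielding is tight is the technical heart; the remainder is a finite case analysis on the chosen system.
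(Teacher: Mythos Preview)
Your main strategy---build an explicit fair transition system with a Streett constraint, pick two $\forall$-fair bisimulations $\mcal R_1,\mcal R_2$, and exhibit a fair $p$ and an unfair $q$ with $p=_{\mcal R_1\cup\mcal R_2}q$ by switching between the two relations along the run---is exactly what the paper does. The paper's concrete witness is a four-state system with Streett pairs $(x_1,y_1),(y_1,x_1),(x_2,y_2),(y_2,x_2)$ and the two ``swap'' relations $\mcal R_1=\{(x_1,y_2),(y_1,x_2),\ldots\}$ and $\mcal R_2=\{(x_1,x_2),(y_1,y_2),\ldots\}$; the fair run alternates $x_1,y_1$ and the unfair run sits at $x_2$. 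Your diagnosis of the obstacle (Streett conditions are suffix-invariant, so the mismatch has to be hidden inside each $\mcal R_i$ separately) is accurate, and the paper's example realises precisely that shielding. What your proposal lacks is only the explicit instance---which is fine for a plan, but the verification that clause~(2) of Definition~\ref{def:forall-fairbisim} holds for each $\mcal R_i$ is where all the content lives, and you should expect to actually carry it out.

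One genuine slip: your ``cheap'' shortcut claims that if $\mcal R_1,\mcal R_2$ are incomparable equivalences then $\mcal R_1\cup\mcal R_2$ is not transitive. That is false---take $\mcal R_1$ with blocks $\{a,b\},\{c\},\{d\}$ and $\mcal R_2$ with blocks $\{a\},\{b\},\{c,d\}$; they are incomparable yet their union is an equivalence. You need the two partitions to have \emph{overlapping} nontrivial blocks. Even repaired, this shortcut only shows that the union fails to be an equivalence, which is a formality; the paper instead establishes the substantive failure of the fairness-preservation clause~(2) for the union (and only afterwards remarks that its $\mcal R_1,\mcal R_2$ extend to equivalences by adding the diagonal). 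If you want your argument to carry the same weight, drop the shortcut and commit to the explicit construction.
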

\begin{proof}
	Consider the following system $X=\{\Box_i\mid \Box\in\{x,y\}\land i\in\{1,2\}\}$:
	\[
    \begin{tikzpicture}
    \node (x1) {$x_1$};
    \node (x2) at ($(x1.center)+(1.5,0)$) {$y_1$};
    \node (x3) at ($(x2.center)+(3,0)$) {$x_2$};
    \node (x4) at ($(x3.center)+(1.5,0)$) {$y_2$};
    \path[->]
        (x1) edge node[above]{$a$} (x2)
        (x1) edge[loop left] node[left] {$a$} (x1)
        (x2) edge[loop right] node[right] {$a$} (x2)
        (x2) edge[bend left] node[below] {$a$} (x1)
				
        (x3) edge node[above]{$a$} (x4)
        (x3) edge[loop left] node[left] {$a$} (x3)
        (x4) edge[loop right] node[right] {$a$} (x4)
        (x4) edge[bend left] node[below] {$a$} (x3);
  \end{tikzpicture}
\]
whose fairness predicate is defined as follows:
$$\fairness{X}=\{p\mid |\{n\in\mathbb N\mid p(a^n)=x_i\}|=|\{n\in\mathbb N\mid p(a^n)=y_i\}|,i\in\{1,2\}\}.$$
Note that the symbol $p$ in $\fairness X$ is ranged over all the infinite executions induced by the  fair transition system above.
Informally, this means that the state $x_i$ is infinitely often reached if, and only if, $y_i$ (for $i\in\{1,2\}$) is infinitely often reached. It should be noted that this constraint can be seen to be a Streett constraint specified by the pairs $(x_1,y_1), (y_1,x_1), (x_2, y_2), (y_2, x_2)$.


Next, we construct two $\forall$-fair bisimulation relations:
\begin{equation*}
\begin{aligned}
   \mcal R_1=&\ \{(x_1,y_2),(y_1,x_2),(y_2,x_1),(x_2,y_1)\} \\
   \mcal  R_2=&\ \{(x_1,x_2),(y_1,y_2),(x_2,x_1),(y_2,y_1)\}.
\end{aligned}
\end{equation*}
	First, we observe that $\mcal R_1$ and $\mcal R_2$ are both symmetric.
	We show for $\mcal R_2$ that also (1) and (2) of Definition~\ref{def:forall-fairbisim} hold; it can be shown completely analogously for $\mcal R_1$ (observe the symmetric nature of the system).
	\begin{enumerate}
		\item Assume $x_i\step ay_i$ then $x_{3-i}\step ay_{3-i}$ and $(y_i,y_{3-i})\in \mcal R_2$. If $x_i\step ax_i$ then $x_{3-i}\step ax_{3-i}$ and $(x_i,x_{3-i})\in \mcal R_2$. Analogously for transitions emanating from $y_i$.
		\item Assume $p\in\fairness{X}$, then for $i=1,2$, we find $|\{n\in\mathbb N\mid p(a^n)=x_i\}|=|\{n\in\mathbb N\mid p(a^n)=y_i\}|$. Now, if $q=_{\mcal R_2}p$, then
		$$q(\sigma)=\begin{cases}x_{3-i}&\text{if\ }p(\sigma)=x_i\\y_{3-i}&\text{if\ }p(\sigma)=y_i\end{cases},$$i.e. $|\{n\in\mathbb N\mid q(a^n)=x_i\}|=|\{n\in\mathbb N\mid p(a^n)=x_{3-i}\}|=|\{n\in\mathbb N\mid p(a^n)=y_{3-i}\}|=|\{n\in\mathbb N\mid q(a^n)=y_i\}|$, thus $q\in\fairness{X}$.
	\end{enumerate}
		Now we consider $\mcal R=\mcal R_1\cup \mcal R_2$ and the fair infinite execution $p$ such that $$p(\sigma)=\begin{cases}x_1&\text{if\ }|\sigma|\text{\ is\ even}\\y_1&\text{if\ }|\sigma|\text{\ is\ odd}\end{cases}.$$
		Moreover, we consider the unfair infinite execution $q$ such that $q(\sigma)=x_2$ for all finite words $\sigma$. Then for all finite words $\sigma$, we have $p(\sigma) \mcal R q(\sigma)$, because for all $\sigma$ of even length $p(\sigma)=x_1\ \mcal R_2\ x_2=q(\sigma)$ and for all $\sigma$ of odd length $p(\sigma)=y_1\ \mcal R_1\ x_2=q(\sigma)$. So $p=_{\mcal R}q$, but $q$ is not fair, so $\mcal R$ cannot be a $\forall$-fair bisimulation.
		
		Also note that $\mcal R_1$ and $\mcal R_2$ can be extended to equivalence relations by adding just the reflexive pairs (which trivially does not impact the bisimulation property), so this lemma even holds when restricted to equivalence relations.
\end{proof}

We will now demonstrate why the choice to restrict $\forall$-fair bisimulation to equivalences rather than merely to symmetric relations \footnote{It is easy to see that if we do not require $\forall$-fair bisimulations to be symmetric, we obtain a different notion: Just take two states $x$ and $y$ with a self loop each, where the (unique) infinite run from $y$ is fair and the one from $x$ is not, then $x$ could be related to $y$ but not vice-versa.} is sensible: $\forall$-fair bisimulations would not be closed under transitivity otherwise.
\begin{lemma}
	If we only require $\mcal R$ in the definition of $\forall$-fair bisimulation to be symmetric rather than an equivalence, then $\forall$-fair bisimulation is not closed under transitivity.
\end{lemma}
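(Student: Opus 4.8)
The plan is to exhibit a single symmetric relation that is a $\forall$-fair bisimulation in the relaxed sense (Definition~\ref{def:forall-fairbisim} with \emph{symmetric relation} replacing \emph{equivalence relation}) but whose transitive closure is not. I would work with the three-state system $X=\{s_{1},s_{2},s_{3}\}$ over a single action $\act=\{a\}$ equipped with the \emph{complete} transition relation $s_{i}\step{a}s_{j}$ for all $i,j$, so that the infinite executions are exactly the functions $\history{\iseq a}\to X\cup\{\Omega\}$ with $\iseq a\mapsto\Omega$ and arbitrary states at the finite prefixes. For the relation I take $\mcal R=\{(s_{1},s_{2}),(s_{2},s_{1}),(s_{2},s_{3}),(s_{3},s_{2})\}$: it is symmetric, it is \emph{not} transitive, and its transitive closure $\mcal R^{+}$ restricted to $\{s_{1},s_{2},s_{3}\}$ is the full relation.

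For the fairness predicate I would declare an infinite execution $p$ fair exactly when $s_{2}$ occurs infinitely often along $p$ \emph{and} some state of $\{s_{1},s_{3}\}$ occurs infinitely often along $p$ (this is in fact a Streett condition, generated by the pairs $(X,\{s_{2}\})$ and $(X,\{s_{1},s_{3}\})$). Clause~\eqref{def:forall-fairbisim1} holds trivially because the transition relation is complete and every state has an $\mcal R$-partner. The heart of the argument is clause~\eqref{def:forall-fairbisim2} for $\mcal R$: if $p=_{\mcal R}q$ then at every finite word $\sigma$ exactly one of $p(\sigma),q(\sigma)$ is $s_{2}$, so the set of positions where $q$ visits $s_{2}$ coincides with the set of positions where $p$ visits $\{s_{1},s_{3}\}$, and vice versa; consequently both ``$s_{2}$ infinitely often'' and ``$\{s_{1},s_{3}\}$ infinitely often'' transfer between $p$ and $q$, giving $p\in\fairness{X}\iff q\in\fairness{X}$. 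Hence $\mcal R$ is a $\forall$-fair bisimulation in the relaxed (merely symmetric) sense.

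It then remains to see that $\mcal R^{+}$ fails clause~\eqref{def:forall-fairbisim2}. Since $\mcal R^{+}$ is the full relation on $\{s_{1},s_{2},s_{3}\}$, the predicate $p=_{\mcal R^{+}}q$ on infinite executions amounts to $\dom{p}=\dom{q}$, which is automatic here. Thus the alternating execution $p$ with $p(a^{n})=s_{2}$ for even $n$ and $p(a^{n})=s_{1}$ for odd $n$ is fair, the constant execution $q$ with $q(a^{n})=s_{1}$ is unfair, and $p=_{\mcal R^{+}}q$; this witnesses that $\mcal R^{+}$ is not a $\forall$-fair bisimulation, so the class of symmetric $\forall$-fair bisimulations is not closed under transitive closure.

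The delicate step, and the one worth stating carefully, is the design of $\fairness{X}$: it must be invariant under the single ``local swap'' that one application of $=_{\mcal R}$ performs, but not under the arbitrary permutations of $\{s_{1},s_{2},s_{3}\}$ that iterating $=_{\mcal R}$ (equivalently $=_{\mcal R^{+}}$) realises. The observation that $\mcal R$-related executions have complementary $s_{2}$-position sets is exactly what lets the conjunction of the two B\"uchi conditions be preserved by $=_{\mcal R}$ while still being broken by $=_{\mcal R^{+}}$; a minor additional check is that completing the transition relation, which is what makes $p$ and $q$ above legitimate executions, does not disturb clause~\eqref{def:forall-fairbisim1}.
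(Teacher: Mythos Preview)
Your argument is correct and takes a genuinely different route from the paper. The paper builds an \emph{infinite} system $X=\{x_i^j\mid i,j\in\nat\}$ with parallel ``columns'' plus one diagonal, declares every infinite execution fair except the diagonal $x_1^1\to x_2^2\to x_3^3\to\cdots$, and takes $\mcal R$ to relate adjacent columns at each row; the transitive closure then relates all columns, so the unfair diagonal becomes $=_{\mcal R'}$-related to a fair column execution. Your construction, by contrast, uses a \emph{finite} three-state system with a complete transition relation and a Streett condition, and your key observation---that any two $\mcal R$-related executions have complementary $s_2$-position sets, so the two B\"uchi conjuncts swap---is a clean structural reason why clause~\eqref{def:forall-fairbisim2} is preserved by $\mcal R$ but destroyed by $\mcal R^+$. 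Your example has the advantage of finiteness and of using a standard fairness format (indeed the paper only claims Streett-robustness for the union counterexample, not for this one), while the paper's example is perhaps more transparent in that it has a single distinguished unfair execution and requires no combinatorial bookkeeping about position sets. Both establish the lemma; yours is the more economical witness.
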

\begin{proof}
	Consider the following system $X=\{x_i^j\mid i,j\in\mathbb N\}$ over a single letter alphabet (the action is omitted):
	\begin{tikzpicture}
    \node (x11) {$x_1^1$};
    \node (x12) at(1,0){$x_1^2$};
    \node (x13) at(2,0){$x_1^3$};
    \node (x14) at(3,0){$\dots$};
    \node (x21) at(0,-1){$x_2^1$};
    \node (x22) at(1,-1){$x_2^2$};
    \node (x23) at(2,-1){$x_2^3$};
    \node (x24) at(3,-1){$\dots$};
    \node (x31) at(0,-2){$\vdots$};
    \node (x32) at(1,-2){$\vdots$};
    \node (x33) at(2,-2){$\vdots$};
    \node (x34) at(3,-2){$\ddots$};

    \path[->]
        (x11) edge (x21)
        (x21) edge (x31)
        (x12) edge (x22)
        (x22) edge (x32)
        (x13) edge (x23)
        (x23) edge (x33)
        (x11) edge (x22)
        (x22) edge (x33);
  \end{tikzpicture}
	
	Where all infinite executions, except for $x_1^1\rightarrow x_2^2\rightarrow x_3^3\rightarrow...$ are fair. Let $\mcal R=\{(x_i^j,x_i^{j+1}),(x_i^{j+1},x_i^j),(x_i^j,x_i^j)\mid i,j\in\mathbb N\}$, then $\mcal R$ clearly is a $\forall$-fair bisimulation  -- except for transitivity of $\mcal R$. The transitive closure $\mcal R'$ of $\mcal R$ is not a $\forall$-fair bisimulation, though, because $x_1^1\rightarrow x_2^1\rightarrow x_3^1\rightarrow...=_{\mcal R'}x_1^1\rightarrow x_2^2\rightarrow R_3^3\rightarrow...$, $x_1^1\rightarrow x_2^1\rightarrow x_3^1\rightarrow...$ is fair, but $x_1^1\rightarrow x_2^2\rightarrow R_3^3\rightarrow...$ is not.
\end{proof}

\section{Proofs Concerning Section~\ref{sec:invsible}}
\begin{proposition}[Proposition~\ref{prop:A*-binary-products}]
  The categories $\ftact,\fseq\act$ have binary products.
\end{proposition}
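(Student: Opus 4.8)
The plan is to observe that, in a poset regarded as a category, a binary product is precisely an infimum (a greatest lower bound), so it suffices to show that both $\fseq\act$ and $\ftact$, each ordered by the prefix relation $\preceq$, admit all binary meets. I would carry out the argument for $\fseq\act$; the very same reasoning applies verbatim with $\act$ replaced by $\tact$, yielding the case of $\ftact$.

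First I would exhibit the candidate meet. Given $\sigma,\sigma'\in\fseq\act$, let $\pi$ be their longest common prefix, i.e.\ the longest word that is a prefix of both $\sigma$ and $\sigma'$. This is well defined: the empty word $\eseq$ is always such a prefix, and the length of any common prefix is bounded by $\min(|\sigma|,|\sigma'|)$. Concretely, writing $\sigma=a_1\cdots a_n$ and $\sigma'=a_1'\cdots a_m'$, we have $\pi=a_1\cdots a_k$ where $k$ is the largest index with $a_i=a_i'$ for all $i\le k$ (and $\pi=\eseq$ when already $a_1\neq a_1'$). By construction $\pi\preceq\sigma$ and $\pi\preceq\sigma'$, so $\pi$ is a lower bound of the pair.

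The key step is to verify that $\pi$ is the \emph{greatest} lower bound. The decisive observation is that the set of prefixes of a fixed word is totally ordered by $\preceq$. Hence, if $\rho$ is any common lower bound, i.e.\ $\rho\preceq\sigma$ and $\rho\preceq\sigma'$, then $\rho$ and $\pi$ are both prefixes of $\sigma$ and are therefore comparable. If we had $\pi\prec\rho$, then $\rho$ would be a common prefix of $\sigma$ and $\sigma'$ strictly longer than $\pi$, contradicting the maximality in the definition of $\pi$; so $\rho\preceq\pi$. Thus $\pi=\sigma\sqcap\sigma'$, and $\fseq\act$ has binary products (the meet never fails to exist, since $\eseq$ is a common lower bound of every pair). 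Running the identical argument over the alphabet $\tact$ gives binary products in $\ftact$.

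I do not expect a genuine obstacle here: the single point requiring a little care is the remark that prefixes of a fixed word form a chain, which is exactly what forces two common prefixes to be comparable; beyond that the proof is routine bookkeeping about finite words. (It is worth noting in passing, consistent with the remark following the proposition, that this argument establishes only the existence of binary products, not their preservation by the hiding functor $h$.)
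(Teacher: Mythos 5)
Your proposal is correct and follows essentially the same route as the paper: both reduce binary products in a poset to binary meets and identify the meet of two words as their longest common prefix (the paper phrases this as the unique word whose history is $\history{\sigma}\cap\history{\sigma'}$, which your chain argument about prefixes of a fixed word makes explicit), and both handle $\ftact$ by running the identical argument over the alphabet $\tact$.
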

\begin{proof}
  Note that $\ftact=\fseq{(\act\cup\{\tau\})}$. Therefore, by showing that $\fseq\act$ for arbitrary $\act$ has binary products, we also show that $\ftact$ does. Moreover, in a poset viewed as a category, binary product corresponds to meet, so for any $\sigma,\sigma'\in\fseq\act$, we define $\sigma\sqcap\sigma'$ as the unique element whose history is $\history{\sigma}\cap\history{\sigma'}$. It remains to show that $\sqcap$ satisfies the universal property of product. Clearly, $\sigma\sqcap \sigma' \preceq \sigma$ and $\sigma \sqcap \sigma' \preceq \sigma'$ because $\history{\sigma}\supseteq \history{\sigma} \cap\history{\sigma'} \subseteq \history{\sigma'}$. Suppose $\sigma \succeq \bar \sigma \preceq \sigma'$. Then,
  $
  \history{\sigma}\supseteq \history{\bar\sigma}\subseteq \history{\sigma'} \implies
  \history{\bar\sigma} \subseteq \history{\sigma}\cap \history{\sigma'} \implies
  \bar\sigma \preceq \sigma\sqcap\sigma'.
  $
\end{proof}

\begin{lemma}[Lemma~\ref{lem:colimit-decompose}]
  Given a set $J$ and a small category $\cat C=\coprod_{j\in J}\cat C_j$ with the natural injections $\iota_j$, then for any functor $F$ from $\cat C$ to a cocomplete category $\cat D$, we have
  \[\colimit{\cat C} F  \cong \coprod_{j\in J} \colimit{\cat C_j} F \circ \iota_j.\]
\end{lemma}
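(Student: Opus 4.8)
The plan is to identify $\coprod_{j\in J}\colimit{\cat C_j}(F\circ\iota_j)$ as a representing object for the functor $\cat D\to\set$ sending $D$ to the set of cocones under $F$ with vertex $D$; by definition of a colimit, any such representing object is (canonically isomorphic to) $\colimit{\cat C}F$, and the lemma follows.

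First I would unwind the definition of the coproduct of categories. Since $\cat C=\coprod_{j\in J}\cat C_j$, every object of $\cat C$ lies in exactly one block $\cat C_j$, one has $\hom{\cat C}{X,Y}=\emptyset$ whenever $X$ and $Y$ sit in different blocks, and $\hom{\cat C}{X,Y}=\hom{\cat C_j}{X,Y}$ when both lie in $\cat C_j$. Hence a cocone $\lambda\colon F\Rightarrow\Delta_D$ (natural transformation to the constant functor at $D$) amounts exactly to a $J$-indexed family of cocones $\lambda^{(j)}\colon F\circ\iota_j\Rightarrow\Delta_D$: the legs indexed by distinct blocks are subject to no common constraint, and every naturality square of $\lambda$ lives entirely inside a single block. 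Writing $\mathrm{Cone}(F,D)$ for the set of cocones under $F$ with vertex $D$, this yields a bijection $\mathrm{Cone}(F,D)\cong\prod_{j\in J}\mathrm{Cone}(F\circ\iota_j,D)$, evidently natural in $D$.

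Next I would chain the remaining universal properties. The universal property of each colimit gives $\mathrm{Cone}(F\circ\iota_j,D)\cong\hom{\cat D}{\colimit{\cat C_j}(F\circ\iota_j),D}$, naturally in $D$, and the universal property of the coproduct in $\cat D$ (which exists because $\cat D$ is cocomplete) gives $\prod_{j\in J}\hom{\cat D}{\colimit{\cat C_j}(F\circ\iota_j),D}\cong\hom{\cat D}{\coprod_{j\in J}\colimit{\cat C_j}(F\circ\iota_j),D}$. Composing the three natural bijections yields
\[
\mathrm{Cone}(F,D)\;\cong\;\hom{\cat D}{\textstyle\coprod_{j\in J}\colimit{\cat C_j}(F\circ\iota_j),\ D}
\]
naturally in $D\in\cat D$, so by the Yoneda lemma $\coprod_{j\in J}\colimit{\cat C_j}(F\circ\iota_j)$ is a colimit of $F$, i.e.\ $\colimit{\cat C}F\cong\coprod_{j\in J}\colimit{\cat C_j}(F\circ\iota_j)$. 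Tracing the isomorphism back through the chain shows that the comparison map is the one induced by the cocone whose leg at $X\in\cat C_j$ is the composite $FX=(F\circ\iota_j)(X)\to\colimit{\cat C_j}(F\circ\iota_j)\to\coprod_{j\in J}\colimit{\cat C_j}(F\circ\iota_j)$ of a colimit coprojection followed by a coproduct coprojection.

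The only genuine content is the first step — the observation that a cocone over a coproduct category splits as a family of cocones over the blocks — and even that is an immediate consequence of the definition of $\coprod_{j\in J}\cat C_j$; everything else is a formal assembly of universal properties, with cocompleteness of $\cat D$ used solely to guarantee that the blockwise colimits and the $J$-fold coproduct all exist. If one prefers a more hands-on argument, the same data can instead be packaged into an explicit mutually inverse pair of maps and the two composites checked to be identities, but the representable-functor route avoids the bookkeeping.
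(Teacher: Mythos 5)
Your proposal is correct and follows essentially the same route as the paper: both identify $\coprod_{j\in J}\colimit{\cat C_j}(F\circ\iota_j)$ as representing cocones under $F$ via a chain of natural isomorphisms ending with Yoneda. The only cosmetic difference is that the paper justifies the key splitting step $[\cat C,\cat D](F,\Delta D)\cong\prod_{j\in J}[\cat C_j,\cat D](F\circ\iota_j,\Delta D\circ\iota_j)$ by rewriting natural-transformation sets as ends and decomposing the end over the coproduct category, whereas you argue it directly from the fact that there are no morphisms between distinct blocks of $\coprod_{j\in J}\cat C_j$.
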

\begin{proof}
  We are going to use the Yoneda embedding to derive $D\cong D'$ whenever $\cat D(D,\_)\cong \cat D(D',\_)$, for any two objects $D,D'\in\cat D$. Recall the abstract definition of a colimit from \cite{categories-working-mat}: the colimit $\colimit {\cat C} F$ is a representation of the functor $[\cat C,\cat D](F,\Delta\_)$, i.e., we have an isomorphism $\cat D(\colimit{\cat C}F,D)\cong [\cat C,\cat D](F,\Delta D)$ natural in $D\in\cat D$. Thus, we can derive
  {\allowdisplaybreaks
  \begin{align*}
    \cat D (\coprod_{j\in J}\colimit{\cat C_j} F\circ \iota_j, D) & \cong\
        \prod_{j\in J}\cat D (\colimit{\cat C_j} F\circ \iota_j, D) \\
    & \cong\ \prod_{j\in J} [\cat C_j,\cat D] (F\circ \iota_j,\Delta (D) \circ \iota_j)\\
    & \cong\ \prod_{j\in J} \int_{C\in \cat C_j} \cat D(F \iota_j C,\Delta(D)\iota_j C)\\
    & \cong \int_{j\in J,C\in\cat C_j} \cat D(F \iota_j C,\Delta(D)\iota_j C)\\
    & \cong\ \int_{C\in \cat C} \cat D(FC,\Delta D C)\\
    & \cong\ [\cat C,\cat D](F,\Delta D)\\
    & \cong\ \cat D(\colimit{\cat C} F, D).
  \end{align*}
  }
  Clearly, all the isomorphisms are natural in $D$; thus, we obtain the desired isomorphism from the Yoneda embedding.
\end{proof}

\begin{theorem}[Theorem~\ref{thm:adjoint-h}]
  For a given transition system $\tuple{X,\tact,\rightarrow}$ and $\varrho\in\fseq\act$, we have
  \[\Sigma_hF_X(\varrho) \cong \mexec{X,\varrho}.\]
\end{theorem}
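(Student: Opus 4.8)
The plan is to begin from the simplification of $\Sigma_h F_X(\varrho)$ recorded in \eqref{eq:coend-simplification}, namely $\Sigma_h F_X(\varrho)\cong\colimit{\sigma\in\op{\ftact},\varrho\preceq h(\sigma)}F_X(\sigma)$, where the indexing category is the full subcategory of $\op{\ftact}$ on the set $\act_{\tau,\varrho}^\star=\{\sigma\in\ftact\mid\varrho\preceq h(\sigma)\}$ and $F_X$ acts by restriction of the domain of an execution. The first step is structural: $\act_{\tau,\varrho}^\star$ is upward closed in the tree $\ftact$ (from $\varrho\preceq h(\sigma)$ and $\sigma\preceq\sigma'$ one gets $\varrho\preceq h(\sigma)\preceq h(\sigma')$), and since the set of words below any given word of $\ftact$ is a finite chain, an upward closed set of this kind is the disjoint union $\coprod_{r\in R_\varrho}T_r$ of the principal up-sets $T_r=\{\sigma\mid r\preceq\sigma\}$ indexed by its set $R_\varrho$ of minimal elements. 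Taking opposites gives $\op{\act_{\tau,\varrho}^\star}=\coprod_{r\in R_\varrho}\op{T_r}$, so Lemma~\ref{lem:colimit-decompose} yields $\Sigma_h F_X(\varrho)\cong\coprod_{r\in R_\varrho}\colimit{\op{T_r}}F_X$.

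Second, I would evaluate each summand. The poset $T_r$ has $r$ as least element, so $r$ is the greatest element of $\op{T_r}$; since the colimit of a functor on a poset with a greatest element is its value there (the cocone obtained by restricting every $p$ along the unique arrow down to $r$ is initial), we get $\colimit{\op{T_r}}F_X\cong F_X(r)=\exec{X,r}$. Hence $\Sigma_h F_X(\varrho)\cong\coprod_{r\in R_\varrho}\exec{X,r}$, the isomorphism carrying the colimit class of $p\in F_X(\sigma)$ to $p|_{\history{r}}$ for the root $r$ of the component containing $\sigma$.

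Third, I would recognise the right-hand side as $\mexec{X,\varrho}$. The combinatorial core is the identification of $R_\varrho$: the minimal words $\sigma$ with $\varrho\preceq h(\sigma)$ are precisely the words of $\ftact$ with $h(\sigma)=\varrho$ having no trailing $\tau$ (for $\varrho=\eseq$ this is the single word $\eseq$, so $\act_{\tau,\eseq}^\star=\ftact$). Indeed, since $h$ merely deletes $\tau$'s, any $\sigma$ with $\varrho\preceq h(\sigma)$ has a shortest prefix $r$ still hiding onto an extension of $\varrho$; minimality forces $h(r)=\varrho$ and forbids a trailing $\tau$, and conversely no proper prefix of such an $r$ can hide onto $\varrho$. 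For a root $r$, every $p\in\exec{X,r}$ has observable trace $h(\max\dom p)=h(r)=\varrho$ and is minimal with this property, because a proper prefix $q\prec p$ has $\max\dom q$ a proper prefix of $r$, whose $h$-image is a proper prefix of $\varrho$; thus $\exec{X,r}\subseteq\mexec{X,\varrho}$. Conversely, if $p\in\mexec{X,\varrho}$ then $\sigma:=\max\dom p$ has $h(\sigma)=\varrho$, cannot end in $\tau$ (dropping a trailing $\tau$ would leave a shorter execution of the same observable trace), and is minimal among words hiding onto $\varrho$ --- i.e. $\sigma\in R_\varrho$ and $p\in\exec{X,\sigma}$. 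An execution determines its domain word, so the $\exec{X,r}$ are pairwise disjoint and their union is exactly $\mexec{X,\varrho}$; composing with the isomorphisms above gives $\Sigma_h F_X(\varrho)\cong\mexec{X,\varrho}$.

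The category-theoretic ingredients are light --- \eqref{eq:coend-simplification}, the decomposition of an upward closed subset of a tree into subtrees, Lemma~\ref{lem:colimit-decompose}, and the fact that a colimit over a poset with a top element is the value at that element. The part demanding real care, and the main obstacle, is the last step: correctly pinning down $R_\varrho$ and proving the two inclusions $\exec{X,r}\subseteq\mexec{X,\varrho}$ and $\mexec{X,\varrho}\subseteq\coprod_{r\in R_\varrho}\exec{X,r}$, i.e. that \emph{minimal execution of observable trace $\varrho$} and \emph{execution whose domain word is a minimal word of $\act_{\tau,\varrho}^\star$} describe the same set. One should also keep track of which isomorphism is being used, so that naturality in $\varrho$ (the subject of Theorem~\ref{thm:square8commutes}) can be read off afterwards, although for the present statement only the pointwise bijection is needed.
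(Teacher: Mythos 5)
Your proof is correct and follows essentially the same route as the paper's: starting from \eqref{eq:coend-simplification}, decomposing $\act_{\tau,\varrho}^\star$ into the principal up-sets of its minimal words (your $R_\varrho$ is the paper's $M(\varrho)$), applying Lemma~\ref{lem:colimit-decompose}, and identifying the resulting coproduct with $\mexec{X,\varrho}$ via the fact that minimal executions are exactly the executions whose domain word is a minimal word (the paper's \eqref{eq:mexec-char}, which you prove inline). The only local difference is that you evaluate each component colimit by observing that $\op{T_r}$ has a terminal object $r$, whereas the paper computes it as a filtered colimit via \eqref{eq:filtered-colimit} and exhibits an explicit bijection with $\exec{X,r}$; your shortcut is slightly slicker and yields the same restriction-to-$r$ isomorphism.
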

\begin{proof}
  Notice that the result in \eqref{eq:coend-simplification} can be seen as the colimit of the diagram:
  \[\op{\act_{\tau,\varrho}^{\star}} \rEmbedding^{\iota_\varrho} \op{\ftact} \rTo^{F_{X}} \set,\]
  where $\act_{\tau,\varrho}^\star=\{\sigma\in\ftact\mid \varrho \preceq h(\sigma)\}$ is a sub-forest of $\ftact$. Clearly, we can decompose the ordered set in the following way. Here, $M(\varrho)$ is the set of minimal words $\sigma$ such that $h(\sigma)=\varrho$, i.e., $M(\varrho)=\{\sigma\in\ftact \mid \history{\sigma} \cap \inv{h}(\varrho)=\{\sigma\}\}$:
  \[\act_{\tau,\varrho}^{\star} \cong \coprod_{\sigma\in M(\varrho)} \{\sigma' \mid \sigma \preceq \sigma'\}.\]
  This decomposition is due to the downward totality of the set $\ftact$. Lemma~\ref{lem:colimit-decompose} is now applicable and, upon applying, results in the following isomorphism:
  \[
  \colimit{\sigma\in\act_{\tau,\varrho}^\star} F_X\iota_\varrho \sigma
  \cong
  \coprod_{\sigma\in M(\varrho)} \colimit{\sigma'\succeq\sigma}F_X(\sigma').
  \]
  Notice that the index category $\{\sigma' \mid \sigma'\succeq \sigma\}$ (for each $\sigma\in\ftact$) is a directed set; thus, the colimit $\colimit{\sigma'\succeq\sigma}F_{X}(\sigma')$ can be computed using \eqref{eq:filtered-colimit}. Next, we claim that for any $\sigma\in M(\varrho)$, we have the following isomorphism:
  \begin{equation}\label{eq:exec}
  \colimit{\sigma'\succeq\sigma}F_{X}(\sigma') \cong \exec{X,\sigma}.
  \end{equation}
  Next, define two maps $\colimit{\sigma'\succeq\sigma}F_{X}(\sigma') \pile{\rTo~f \\ \lTo~g} \exec{X,\sigma}$:
  \[
  f[p,\sigma']_\sim = p\cdot \sigma \ \text{(for $p\in\exec{X,\sigma'}$)} \qquad
  g(p) = [p,\sigma]_\sim \ \text{(for $p\in\exec{X,\sigma}$)}.
  \]
  The function $f$ is well-defined because $(p_1,\sigma_1) \sim (p_2,\sigma_2) \implies \exists_{\sigma_3} (\sigma_1\succeq\sigma_3 \preceq \sigma_2 \land p_1\cdot \sigma_3 = p_2\cdot \sigma_3) \implies p_1\cdot\sigma = p_2\cdot\sigma$ (the last implication is because $\sigma\preceq\sigma_3$). Moreover, $f(g(p))=f([p,\sigma]_\sim) = p\cdot\sigma=p$ and $g(f([p,\sigma']_\sim)) =g(p\cdot\sigma) = [p,\sigma']_\sim$ because $(p\cdot\sigma)\cdot\sigma=p\cdot\sigma \implies (p\cdot\sigma,\sigma) \sim (p,\sigma')$.

  Thus, the isomorphism in \eqref{eq:exec} holds.
  Lastly, using \eqref{eq:mexec-char}, it is clear that $\coprod_{\sigma\in M(\varrho)} \exec{X,\sigma} \cong \mexec{X,\varrho}$.
\end{proof}

\begin{theorem}[Theorem~\ref{thm:square8commutes}]
  For any $\varrho,\varrho'\in\fseq\act$ with $\varrho'\preceq\varrho$, the square in \eqref{eq:naturality-mexec} commutes.
\end{theorem}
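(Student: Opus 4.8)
The plan is to make the isomorphism of Theorem~\ref{thm:adjoint-h} explicit on representatives and then observe that, under it, the two vertical maps of \eqref{eq:naturality-mexec} literally agree. Recall from the proof of Theorem~\ref{thm:adjoint-h} that $\Sigma_hF_X(\varrho)$ is the filtered colimit \eqref{eq:coend-simplification}, which decomposes (by Lemma~\ref{lem:colimit-decompose}) as $\coprod_{\sigma\in M(\varrho)}\colimit{\sigma'\succeq\sigma}F_X(\sigma')$, where $M(\varrho)=\{\sigma\in\ftact\mid\history{\sigma}\cap\inv{h}(\varrho)=\{\sigma\}\}$ is the set of $\preceq$-minimal words with observable trace $\varrho$, and that the identification with $\mexec{X,\varrho}$ is realised through the maps $f$ of \eqref{eq:exec}. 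Unwinding this, I would first record the explicit form: every class in $\Sigma_hF_X(\varrho)$ can be represented as $[p,\sigma]_\sim$ (with $\sim$ as in \eqref{eq:filtered-colimit}) with $\sigma\in M(\varrho)$ and $p\in\exec{X,\sigma}$ --- replace any representative $(p,\sigma')$ with $\sigma'\succeq\sigma\in M(\varrho)$ by the $\sim$-equivalent pair $(p\cdot\sigma,\sigma)$ --- and the isomorphism $\Sigma_hF_X(\varrho)\cong\mexec{X,\varrho}$ then simply sends $[p,\sigma]_\sim$ to $p$, viewing $p\in\exec{X,\sigma}$ as a minimal execution with observable trace $\varrho$ (its only prefix with trace $\varrho$ is itself, precisely because $\sigma\in M(\varrho)$).

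With this in hand, fix $\varrho'\preceq\varrho$ in $\fseq\act$ and a class $[p,\sigma]_\sim\in\Sigma_hF_X(\varrho)$ with $\sigma\in M(\varrho)$. Let $\sigma_{\varrho'}=\bigsqcap\{\bar\sigma\preceq\sigma\mid h(\bar\sigma)=\varrho'\}$ be the least prefix of $\sigma$ whose trace is $\varrho'$; this exists (since $\history{\sigma}$ is finite and totally ordered, and passing to the next longer prefix of $\sigma$ either keeps the trace or extends it by one letter, so every prefix of $\varrho=h(\sigma)$ occurs as a trace) and lies in $M(\varrho')$ by the same minimality argument. Going right then down in \eqref{eq:naturality-mexec}, $[p,\sigma]_\sim$ maps to $p\in\mexec{X,\varrho}$, whose trace word is $\sigma$, and $\mpast(\varrho,\varrho')$ then returns its restriction $p|_{\history{\sigma_{\varrho'}}}=p\cdot\sigma_{\varrho'}$. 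Going down then right, the cone displayed in \eqref{eq:sigma_h-action} shows that on the colimit component at $\sigma$ the map $\Sigma_h(\varrho,\varrho')$ is the composite of the restriction $F_X\sigma\to F_X\sigma_{\varrho'}$ with the colimit coprojection at $\sigma_{\varrho'}$ into $\Sigma_hF_X(\varrho')$, so $\Sigma_h(\varrho,\varrho')[p,\sigma]_\sim=[p\cdot\sigma_{\varrho'},\sigma_{\varrho'}]_\sim$, and the explicit isomorphism at $\varrho'$ sends this class --- whose representing word $\sigma_{\varrho'}$ already lies in $M(\varrho')$ --- to $p\cdot\sigma_{\varrho'}$. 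The two routes coincide, so \eqref{eq:naturality-mexec} commutes.

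I expect the only real work to be in the first paragraph: carefully tracing the chain of isomorphisms behind Theorem~\ref{thm:adjoint-h} to confirm that on minimal representatives it is ``forget the bracket'', and checking that the universal cone of \eqref{eq:sigma_h-action} determines $\Sigma_h(\varrho,\varrho')$ in the factored form used above. Once those two identifications are nailed down, the commutativity of \eqref{eq:naturality-mexec} is immediate, since both composites equal the restriction $p\mapsto p\cdot\sigma_{\varrho'}=\mpast(\varrho,\varrho')(p)$; in particular no further combinatorics about $h$ is needed beyond the observation that $\sigma_{\varrho'}\in M(\varrho')$.
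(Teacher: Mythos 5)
Your proposal is correct and follows essentially the same route as the paper's proof: both unwind the isomorphism of Theorem~\ref{thm:adjoint-h} to minimal representatives (via the decomposition over $M(\varrho)$), read off the action of $\Sigma_h(\varrho,\varrho')$ from the defining cone \eqref{eq:sigma_h-action} as $[p,\sigma]_\sim\mapsto[p\cdot\sigma_{\varrho'},\sigma_{\varrho'}]_\sim$, and observe that both composites in \eqref{eq:naturality-mexec} send $p$ to $p\cdot\sigma_{\varrho'}$. The paper packages the same computation through the characterisation $p\in\mexec{X,\varrho}\iff\max\dom p\in M(\varrho)$ and an auxiliary map $f(\varrho,\varrho')$ on $\coprod_{\sigma\in M(\varrho)}F_X(\sigma)$, while you argue directly on $\sim$-classes and additionally make explicit that $\sigma_{\varrho'}\in M(\varrho')$; these are presentational differences only.
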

\begin{proof}
Recall the construction of the map $\Sigma_h(\varrho,\varrho')$ (for $\varrho'\preceq\varrho$) and recall the following diagram \eqref{eq:sigma_h-action}. Note that $\sigma_{\varrho'}=\bigsqcap \{\sigma'\preceq\sigma \mid h(\sigma')=\varrho'\}$ is a unique minimal word $\sigma'$ in the history of $\sigma$ with $h(\sigma')=\varrho$.

\noindent
Furthermore, from the proof of Theorem~\ref{thm:adjoint-h}, we know that
$$\colimit{\sigma\in A_{\tau,\varrho}^*} F_X(\iota_\varrho(\sigma)) \cong
\coprod_{\sigma\in M(\varrho)}\colimit{\sigma'\succeq\sigma}F_X(\sigma')
\cong
\coprod_{\sigma\in M(\varrho)} F_X(\sigma).$$
Thus, the injections of the above colimit (or the vertical arrows in the square drawn below) map an execution $p\in F_X(\sigma)$ (for some $h(\sigma)=\varrho$) to a minimal execution $p\cdot\sigma_\varrho\in F_X(\sigma_{\varrho})$.
\begin{diagram}[width=2.3cm]
  F_X(\sigma) & \rAllMap & F_X(\sigma_{\varrho'})\\
  \dTo & & \dAllMap\\
  \colimit{\sigma\in\act_{\tau,\varrho}^\star} F_{X}(\iota_{\varrho} (\sigma)) & \rExistMap &\colimit{\sigma\in\act_{\tau,\varrho'}^\star} F_{X}(\iota_{\varrho'} (\sigma))
 \end{diagram}
Consider an execution $p\in F_X(\sigma)$. Clearly, $\sigma_{\varrho'} \preceq \sigma$ and using \eqref{eq:mexec-char}, we find that $p\cdot \sigma_{\varrho'}\in \coprod_{\sigma\in M(\varrho)} F_X(\sigma)$. From the commutativity of the above square, we obtain
\begin{equation}\label{eq:sigma_h-mexec}
  p\cdot\sigma_{\varrho'} = \Sigma_h(\varrho,\varrho')(p\cdot\sigma_\varrho).
\end{equation}
Thus, we can simplify the action $\Sigma_h(\varrho,\varrho')$ in terms of a map:
\[\coprod_{\sigma\in M(\varrho)} F_X(\sigma) \rTo^{f(\varrho,\varrho')} \coprod_{\sigma\in M(\varrho')} F_X(\sigma),\ \text{where $f(\varrho,\varrho')(p,\sigma) = (p\cdot {\sigma_{\varrho'}},\sigma_{\varrho'})$}.\]
Next, we further simplify the elements of the set $\coprod_{\sigma\in M(\varrho)} F_X(\sigma)$:
\begin{equation}\label{eq:mexec-char}
  \forall_p\ \Big(p\in\mexec{X,\varrho} \iff \max\dom p \in M(\varrho)\Big).
\end{equation}
\begin{itemize}
  \item[$\Leftarrow$] Let $\max \dom p\in M(\varrho)$, for some $p$. Further, suppose
      \[p'\in\downarrow p\cap\{q\in\exec X\mid h(\max\dom{q})=\varrho\}.\] Clearly, $p'\preceq p$ and $h(\max\dom{p'})=\varrho$. Then $\max\dom{p'} \preceq \max\dom{p}$ and since $\max\dom p\in M(\varrho)$, we find that $\max\dom {p'} = \max\dom p $. Thus, $p'=p$.
  \item[$\Rightarrow$] Let $p\in\mexec{X,\varrho}$ and let $\max \dom p=\sigma$, for some $\sigma$. Then $\history{p} \cap \{q\in\exec{X}\mid h(\max \dom q)=\varrho\}= \{p\}$. Moreover, $\history{\sigma}\cap \inv{h}(\varrho) \neq\emptyset$ because $h(\sigma)=\varrho$. So, let $\sigma'\preceq\sigma$ such that $h(\sigma')=\varrho$. Then $p\cdot\sigma' \preceq p$. However, since $p$ is minimal, we find that $p\cdot\sigma'=p$, i.e., $\sigma'=\sigma$.
\end{itemize}
%
\begin{diagram}[width=1.6cm]
  \coprod_{\sigma\in M(\varrho)}F_X(\sigma) & \rTo^{\pi_\varrho} & \mexec{X,\varrho}\\
  \dTo^{f(\varrho,\varrho')} & & \dTo_{\mpast(\varrho,\varrho')}\\
  \coprod_{\sigma\in M(\varrho')}F_X(\sigma) & \rTo^{\pi_{\varrho'}} & \mexec{X,\varrho'}
 \end{diagram}
To prove this theorem it remains to show that the above square commutes where $\pi_{\_}$ forgets the domain of a minimal execution. This is trivial since $\mpast(\varrho,\varrho')\circ \pi_\varrho(p,\sigma) = p|_{\history{\sigma_{\varrho'}}} = p\cdot\sigma_{\varrho'} = \pi_{\varrho'} \circ f(\varrho,\varrho') (p,\sigma)$.
\end{proof}

\begin{lemma}[Lemma~\ref{lem:mexec-preservation}]
  For a given branching simulation $X \rTo^f Y$ and a minimal execution $p\in\mexec{X,\varrho}$, we have $p_f\in\mexec{Y,\varrho}$.
\end{lemma}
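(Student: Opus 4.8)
The plan is to peel the definition of $p_f$ apart stage by stage along the (finite, totally ordered) history of $p$, maintaining a package of invariants by induction, and then to read the conclusion off from the characterisation \eqref{eq:mexec-char} of minimal executions. Write $\sigma=\max\dom p$ and let $\eseq=\sigma_0\prec\sigma_1\prec\dots\prec\sigma_n=\sigma$ enumerate the prefixes of $\sigma$, with corresponding prefixes $p_0\prec p_1\prec\dots\prec p_n=p$ of $p$, so that $p_i\step{a_i}p_{i+1}$ for a unique $a_i\in\tact$. First I would check, by induction on $i$, that $(p_i)_f$ is a well-defined execution of $Y$ and satisfies the two invariants $\last{(p_i)_f}=f(\last{p_i})$ and $h(\max\dom{(p_i)_f})=h(\max\dom{p_i})$. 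Well-definedness is where the branching-simulation clauses of Definition~\ref{def:bsimulation} enter: if $a_i\in\act$ then $p_{i+1}$ being an execution gives $\last{p_i}\step{a_i}\last{p_{i+1}}$, hence $f(\last{p_i})\step{a_i}f(\last{p_{i+1}})$ by clause~(1), so the corresponding rule for $p_{i+1}$ applies; if $a_i=\tau$ and $f(\last{p_i})\neq f(\last{p_{i+1}})$, clause~(2) yields $f(\last{p_i})\step{\tau}f(\last{p_{i+1}})$, again enabling the rule; and if $f(\last{p_i})=f(\last{p_{i+1}})$ nothing is appended. The first invariant is immediate from each of the four rules; the trace invariant holds because $h(\tau)=\eseq$, so a $\tau$-step (collapsed or not) leaves $h(\max\dom{\cdot})$ unchanged, while an $a$-step with $a\in\act$ appends $a$ on both sides.

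Secondly, I would establish the key structural fact, again by induction on $i$: every prefix of $(p_i)_f$ has the form $(p_j)_f$ for some $j\leq i$. The point is that passing from $(p_i)_f$ to $(p_{i+1})_f$ either leaves the execution unchanged (the $\tau$-collapse rule) or extends its domain by exactly one letter, so the weakly increasing chain $(p_0)_f\preceq(p_1)_f\preceq\dots\preceq(p_n)_f=p_f$ runs through exactly the prefixes of $p_f$. In particular, any proper prefix $r\prec p_f$ is $(p'')_f$ for some proper prefix $p''\prec p$: take $p''=p_j$ with $(p_j)_f=r$; since $r\neq p_f$ we cannot have $p_j=p$.

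Finally, I would invoke \eqref{eq:mexec-char}, which states $p\in\mexec{X,\varrho}\iff\max\dom p\in M(\varrho)$, where $M(\varrho)=\{w\in\ftact\mid\history{w}\cap\inv h(\varrho)=\{w\}\}$. From $\max\dom p\in M(\varrho)$ we get $h(\max\dom p)=\varrho$, hence $h(\max\dom{p_f})=\varrho$ by the trace invariant. If $\max\dom{p_f}\notin M(\varrho)$, then some proper prefix $\sigma'\prec\max\dom{p_f}$ has $h(\sigma')=\varrho$; the proper prefix $p_f\cdot\sigma'\prec p_f$ then equals $(p'')_f$ for some $p''\prec p$, and the trace invariant applied to $p''$ gives $h(\max\dom{p''})=h(\sigma')=\varrho$ with $\max\dom{p''}\prec\max\dom p$, contradicting $\max\dom p\in M(\varrho)$. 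Hence $\max\dom{p_f}\in M(\varrho)$, i.e.\ $p_f\in\mexec{Y,\varrho}$. I expect the only genuine friction to be the simultaneous bookkeeping across the four defining rules of $p_f$ together with the prefix-tracking claim in the presence of the $\tau$-collapse rule; the rest is a direct unwinding of definitions.
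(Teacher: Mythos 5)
Your proof is correct, and while it shares the same skeleton as the paper's (an induction along the prefixes of $p$ that unwinds the four defining clauses of $p_f$), it handles minimality in a genuinely different and more robust way. The paper performs structural induction on the statement of the lemma itself: the case $p'\step\tau p$ is vacuous because a minimal execution cannot end in a $\tau$-step, and in the case $p'\step a p$ with $a\in\act$ it invokes the induction hypothesis on $p'$ to get $p'_f\in\mexec{Y,\varrho'}$ and then appends the visible $a$-step. That shortcut is slightly loose: the prefix $p'$ of a minimal execution need not itself be minimal (e.g.\ $x_0\step\tau x_1\step a x_2$ is minimal at $a$, but $x_0\step\tau x_1$ is not minimal at $\eseq$), so the lemma's induction hypothesis does not literally apply to it; what saves the paper is that only well-definedness and the trace of $p'_f$ are actually needed, together with the observation that any execution whose last transition carries a visible label is automatically minimal. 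Your strengthened invariant --- well-definedness of $(p_i)_f$ via the clauses of Definition~\ref{def:bsimulation}, $\last{(p_i)_f}=f(\last{p_i})$, and preservation of the observable trace for \emph{all} prefixes $p_i$, minimal or not --- sidesteps this issue entirely, and your prefix-tracking claim (every prefix of $p_f$ is some $(p_j)_f$) plus the characterisation \eqref{eq:mexec-char} then yields minimality by contradiction. The price is a somewhat longer final step: with your trace invariant in hand you could instead conclude directly, as the paper implicitly does, that $p_f$ ends in the visible letter $a$ whenever $p$ does (and is empty when $p$ is), which gives $\max\dom{p_f}\in M(\varrho)$ without the prefix-tracking argument. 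What your route buys is that the induction goes through over arbitrary executions, which is also the form in which the auxiliary facts about $\_{}_f$ are reused later (e.g.\ in the proof of Theorem~\ref{thm:lts-tau_semantic1}).
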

\begin{proof}
  Let $p\in\mexec{X,\varrho}$. We prove this lemma by performing structural induction on $p$.
	\begin{itemize}
		\item If $\dom p=\eseq$, then $\dom{p_f}=\eseq$, $p_f(\eseq)=f(p(\eseq))$ is trivially minimal.
		\item If $p' \step a p$ and $a\in A$, then $p_f' \step a q$ and $\last q = f(\last{p})$, thus, $p_f=q$. By induction hypothesis, $p_f'$ is minimal at $\varrho'=h(\max\dom{p'})$ (i.e., $p_f'\in\mexec{Y,\varrho'}$) and since $p$ ends with a transition labelled $a\in A$, we know that $p_f$ is also minimal at $\varrho' a$.
		\item If $p'\step\tau p$ then $p$ and $p'$ have the same observable trace. Yet, this case is inapplicable since $p'\prec p$ and thus, $p$ is not minimal.
	\end{itemize}
\end{proof}

\begin{theorem}[Theorem~\ref{thm:lts-tau_semantic1}]
  The above mapping $\lts_\tau \rTo^{\semantics{\_}} \presheaves{\fseq\act}$ is a faithful functor.
\end{theorem}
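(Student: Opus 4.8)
The plan is to verify the three ingredients of a faithful functor --- well-definedness on objects, well-definedness on morphisms, preservation of composition (and identities) --- and then faithfulness; the substantive part is checking that each $\semantics{f}$ is a genuine natural transformation. For objects: by \eqref{eq:presheaf-tact}, $F_X$ is a presheaf on $\ftact$, its action being restriction of the domain of an execution, which is plainly functorial; post-composing with the left adjoint $\Sigma_h$ of Corollary~\ref{cor:ess_geometric} gives $\semantics{X}=\Sigma_h F_X\in\presheaves{\fseq\act}$, and Theorems~\ref{thm:adjoint-h} and~\ref{thm:square8commutes} identify this presheaf, up to isomorphism, with the one sending $\varrho\mapsto\mexec{X,\varrho}$ and acting by the maps $\mpast(\varrho,\varrho')$. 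It is this concrete form I would work with throughout.

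For morphisms, I would put $\semantics{\id X}=\id{\semantics X}$ on the formal identities of $\lts_\tau$; there is no clash with the branching-simulation clause, since $\id X$ fails the stuttering condition \eqref{eq:b-stutter} in general and so is not itself a branching simulation. For a branching simulation $X\rTo^f Y$ I would first argue that $p\mapsto p_f$ is a well-defined function $\mexec{X,\varrho}\to\mexec{Y,\varrho}$: proceeding by induction on $|\dom p|$, the four inductive clauses defining $p_f$ are exhaustive and mutually exclusive, the transition each clause asks for exists because $f$ meets the conditions of Definition~\ref{def:bsimulation}, and one carries the invariant $\last{p_f}=f(\last p)$ along to keep the clauses applicable; minimality of the result, $p_f\in\mexec{Y,\varrho}$, is then exactly Lemma~\ref{lem:mexec-preservation}. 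Naturality of $\semantics{f}$ is the identity $\mpast(\varrho,\varrho')\circ\semantics{f}_\varrho=\semantics{f}_{\varrho'}\circ\mpast(\varrho,\varrho')$ for $\varrho'\preceq\varrho$, which I would obtain by structural induction on executions: restricting $p$ to its minimal prefix of observable trace $\varrho'$ and then forming $(\_)_f$ yields the same execution as forming $(\_)_f$ first and then restricting, because both operations process the transitions of $p$ in order and both discard trailing $\tau$-steps.

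The identity laws are then immediate. For composition one first observes that branching simulations are closed under composition: conditions (1) and (2) of Definition~\ref{def:bsimulation} transfer directly, and stuttering \eqref{eq:b-stutter} survives because weak reachability $\steps{}$ is preserved, after which one invokes the stuttering property of the second map. Then $p_{g\circ f}=(p_f)_g$ for every minimal $p$ follows by structural induction on $p$, matching each clause for $(\_)_{g\circ f}$ with the composite of the corresponding clauses for $(\_)_f$ and $(\_)_g$. Finally, for faithfulness, suppose $\semantics{f}=\semantics{g}$ for branching simulations $f,g\colon X\to Y$; evaluating at $\varrho=\eseq$ on the one-point execution $\eseq_x\in\mexec{X,\eseq}$ gives $\eseq_{f(x)}=(\eseq_x)_f=(\eseq_x)_g=\eseq_{g(x)}$, so $f(x)=g(x)$ for all $x$ and hence $f=g$ (for formal identities faithfulness is vacuous).

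I expect the delicate step to be the naturality of $\semantics{f}$: the restriction maps $\mpast$ collapse trailing $\tau$-transitions down to the minimal representative, while the passage $p\mapsto p_f$ may on its own insert or delete $\tau$-transitions in the image, so one has to argue with some care that these two $\tau$-collapsing mechanisms commute. Keeping track of the invariant $\last{p_f}=f(\last p)$ throughout the structural inductions is what makes this, and the well-definedness of $p_f$, go through.
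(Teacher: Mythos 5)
Your proposal is correct and follows essentially the same route as the paper: well-definedness of $\semantics{f}_\varrho$ on minimal executions via Lemma~\ref{lem:mexec-preservation}, naturality (i.e., the identity $p_f\cdot\sigma'_{\varrho'}=(p\cdot\sigma_{\varrho'})_f$ behind $\mpast(\varrho,\varrho')\circ\semantics{f}_\varrho=\semantics{f}_{\varrho'}\circ\mpast(\varrho,\varrho')$) by structural induction on executions carrying the invariant $\last{p_f}=f(\last p)$, and faithfulness by evaluating at $\eseq$ on empty executions. Your additional verification of the identity and composition laws (closure of branching simulations under composition and $p_{g\circ f}=(p_f)_g$) is not spelled out in the paper but is consistent with it and correct.
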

\begin{proof}
  Lemma~\ref{lem:mexec-preservation} ensures that for any $p\in\Sigma_h F_X\varrho$, we also have $p_f\in \Sigma_h F_Y \varrho$.
	We use the isomorphic view and show that $\semantics f$ is a natural transformation, i.e., for any $\varrho'\preceq\varrho$, we have $\mpast(\varrho,\varrho')\circ\semantics f_\varrho=\semantics f_{\varrho'}\circ\mpast(\varrho,\varrho')$. So let $p\in\mexec{X,\varrho}$ with $\sigma=\max \dom p$ and $\sigma'=\max \dom{p_f}$. Then
	\begin{align}
		&\ \mpast(\varrho,\varrho')\circ\semantics f_\varrho(p)=\semantics f_{\varrho'}\circ\mpast(\varrho,\varrho')(p) \notag \\
		\iff &\ \mpast(\varrho,\varrho')(p_f)=\semantics f_{\varrho'}(p\mid_{\history\varrho'}) \notag \\
		\iff &\ 
p_f\cdot \sigma'_{\varrho'}=(p \cdot\sigma_{\varrho'})_f. \label{eq:mpast-simplification}
	\end{align}
	This we can show by structural induction on $p$. For the base case, let $p=\eseq_x$. Then $\varrho'=\varrho=\eseq$ and $p_f\cdot \sigma'_{\varrho'}=\eseq_{f(x)}= (p \cdot\sigma_{\varrho'})_f$. For the inductive case, let $p' \step a p$ with $p\in\mexec{X,\varrho}$ and $\max\dom{p'}=\sigma$. Clearly, $a\in A$, otherwise, if $a=\tau$, then $p$ is not minimal. Note that $(\sigma a)_{\varrho'}=\sigma_{\varrho'}$ and thus,
\begin{equation}\label{eq:ltstau-faitfhul-1}
  p\cdot(\sigma a)_{\varrho'} = p|_{\history{(\sigma a)_{\varrho'}}} = p|_{\history{\sigma_{\varrho'}}} = p'|_{\history{\sigma_{\varrho'}}} = p' \cdot\sigma_{\varrho'}.
\end{equation}
Note that $p'$ may not be a minimal execution at $\varrho'$; yet, $p'\cdot \sigma_{\varrho'}\in \mexec{X,\varrho'}$ is minimal due to Equation~\eqref{eq:mexec-char}. Let $\sigma'=\max\dom{(p'\cdot\sigma_{\varrho'})_f}$. From Lemma~\ref{lem:mexec-preservation}, we know that $(p'\cdot\sigma_{\varrho'})_f\in\mexec{Y,\varrho'}$. Then, by the induction hypothesis and substituting $p$ by $p'\cdot\sigma_{\varrho'}$ in \eqref{eq:mpast-simplification} we obtain:
\begin{equation}\label{eq:ltstau-faithful-IH}
  (p'\cdot\sigma_{\varrho'})_f\cdot \sigma'_{\varrho'}=((p'\cdot\sigma_{\varrho'}) \cdot\sigma_{\varrho'})_f = (p' \cdot\sigma_{\varrho'})_f
\end{equation}
Moreover, by the definition of $p_f$, we find that $(p'\cdot\sigma_{\varrho'})_f \steps{\tau^n} p'_f \step a p_f$ with $\dom{p_f}=\sigma'\tau^n a$, for some $n\in\nat$. Note that $p_f'=(p'\cdot\sigma_{\varrho'})_f$, whenever $n=0$. Then we find
$(\sigma'\tau^n a)_{\varrho'}=\sigma'_{\varrho'}$ because $\sigma'\prec \sigma' \tau^n a$. Thus, we have
\[
p_f\cdot(\sigma'\tau^n a)_{\varrho'} =
 p_f|_{\history{(\sigma'\tau^n a)_{\varrho'}}} =
 p_f |_{\history{\sigma'_{\varrho'}}} =
 (p_f'\cdot \sigma_{\varrho'})_f |_{\history{\sigma'_{\varrho'}}} =
 (p_f'\cdot \sigma_{\varrho'})_f \cdot \sigma'_{\varrho'}.
\]
Using the above equation, \eqref{eq:ltstau-faitfhul-1}, and \eqref{eq:ltstau-faithful-IH}, we can conclude that
\[(p\cdot(\sigma a)_{\varrho'})_f =(p' \cdot\sigma_{\varrho'})_f = (p_f'\cdot \sigma_{\varrho'})_f \cdot \sigma'_{\varrho'} =p_f\cdot(\sigma'\tau^n a)_{\varrho'}.\]
%
%

  The faithfulness of $\semantics{\_}$ is similar to the strong case since empty executions are always minimal executions.
\end{proof}

\begin{proposition}[Proposition~\ref{prop:Oispresheaf}]
  The above mapping $\op{\nat} \rTo^{\obs} \set$ is a presheaf.
\end{proposition}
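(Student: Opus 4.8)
The plan is to unwind what it means for $\obs$ to be a presheaf on the poset $\nat$: it is precisely the data of a set $\obs(n)$ for each $n\in\nat$ together with, for every $m\leq n$, a restriction map $\obs(n)\to\obs(m)$ written $\sigma\mapsto\sigma\cdot m$, subject to (i) $\sigma\cdot n=\sigma$ and (ii) $(\sigma\cdot m')\cdot m=\sigma\cdot m$ whenever $m\leq m'\leq n$. So after a preliminary check that the displayed formula is well typed, it remains to verify (i) and (ii), and in each case a clean case split on whether $\sigma=\bar\tau$ does the job.

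For well-typedness I would argue: if $\sigma\neq\bar\tau$ then $\sigma\in\A_\tau(n)$ is a length-$n$ word over $\tact$, so its length-$m$ prefix $\sigma|_m$ lies in $\A_\tau(m)\subseteq\obs(m)$; if $\sigma=\bar\tau$ then necessarily $n>0$, and the two relevant clauses land in $\A_\tau(0)=\obs(0)$ (when $m=0$, giving $\eseq$) or in $\obs(m)$ (when $m>0$, giving $\bar\tau$, which is legitimate precisely because $m>0$). Thus $\obs$ does carry a family of maps of the required shape.

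Axiom (i) is immediate: along the identity $n\leq n$ we get $\sigma\cdot n=\sigma|_n=\sigma$ when $\sigma\neq\bar\tau$ (as $|\sigma|=n$), and $\sigma\cdot n=\bar\tau$ when $\sigma=\bar\tau$ (using $n>0$). For axiom (ii), fix $m\leq m'\leq n$ and $\sigma\in\obs(n)$. If $\sigma\neq\bar\tau$ then $\sigma\cdot m'=\sigma|_{m'}$ is again not $\bar\tau$, so $(\sigma\cdot m')\cdot m=(\sigma|_{m'})|_m=\sigma|_m=\sigma\cdot m$ by transitivity of the prefix relation. If $\sigma=\bar\tau$ there are two subcases: when $m'=0$ we also have $m=0$, and both sides reduce to $\eseq$; when $m'>0$ we have $\sigma\cdot m'=\bar\tau$, and both $(\bar\tau)\cdot m$ and $\sigma\cdot m$ equal $\eseq$ if $m=0$ and $\bar\tau$ if $m>0$. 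Hence (ii) holds, and $\obs$ is a presheaf.

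The argument is purely case-checking, so there is no genuine obstacle; the single place I would be careful is the subcase $\sigma=\bar\tau$, $m'=0$ of axiom (ii), where one must apply the first clause of the action to the already-restricted intermediate value $\eseq$ rather than re-applying the $\bar\tau$-clause (the value $\eseq$ has, so to speak, forgotten that it came from $\bar\tau$). Everything else is bookkeeping, and the three defining clauses of the action were written to mirror the three clauses of $\obs$, so no deeper structure is needed.
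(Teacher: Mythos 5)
Your proof is correct and follows essentially the same route as the paper: a direct verification of the identity and composition axioms by case analysis on whether $\sigma=\bar\tau$ (and on whether the intermediate index is $0$ or positive), mirroring the three defining clauses of the action. The extra well-typedness check and the remark about the subcase $\sigma=\bar\tau$, $m'=0$ are fine but do not change the argument.
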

\begin{proof}
	We need to show that $\obs$ is a functor:
	\begin{itemize}
		\item $\obs$ preserves identities: Let $\sigma\in\obs(n)$, then
		$$\sigma\cdot n=\begin{cases}\eseq&\text{if\ }n=0\wedge\sigma=\bar\tau\\\bar\tau&\text{if\ }n>0\wedge\sigma=\bar\tau\\\sigma\mid_n&\text{if\ }\sigma\neq\bar\tau\end{cases}=\begin{cases}\bar\tau&\text{if\ }n>0\wedge\sigma=\bar\tau\\\sigma&\text{if\ }\sigma\neq\bar\tau\end{cases}=\sigma$$
		\item $\obs$ respects composition: Let $\sigma\in\obs(n)$; $l\leq m\leq n$, then:
		\begin{equation*}\begin{aligned}
			&(\sigma\cdot m)\cdot l=\begin{cases}\eseq\cdot l&\text{if\ }m=0\wedge\sigma=\bar\tau\\\bar\tau\cdot l&\text{if\ }m>0\wedge\sigma=\bar\tau\\\sigma\mid_m\cdot l&\text{if\ }\sigma\neq\bar\tau\end{cases}\\
			=&\begin{cases}\eseq\mid_l&\text{if\ }m=0\wedge\sigma=\bar\tau\\\bar\tau&\text{if\ }l>0\wedge\sigma=\bar\tau\\\eseq&\text{if\ }l=0\wedge m>0\wedge\sigma=\bar\tau\\(\sigma\mid_m)\mid_l&\text{if\ }\sigma\neq\bar\tau\end{cases}=\begin{cases}\eseq&\text{if\ }l=0\wedge\sigma=\bar\tau\\\bar\tau&\text{if\ }l>0\wedge\sigma=\bar\tau\\\sigma\mid_l&\text{if\ }\sigma\neq\bar\tau\end{cases}=\sigma\cdot l
		\end{aligned}\end{equation*}
	\end{itemize}
\end{proof}

\begin{proposition}[Proposition~\ref{prop:barfxcfunctor}]
  The mapping $\bar F_X$ induced by a transition system $\tuple{X,\tact,\rightarrow}$ is a contravariant functor.
\end{proposition}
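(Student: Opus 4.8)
The plan is to check directly that $\bar F_X$ satisfies the two functoriality axioms: $p\cdot\sigma=p$ for every $p\in\bar F_X(\sigma)$, and $(p\cdot\sigma')\cdot\sigma''=p\cdot\sigma''$ whenever $\sigma''\preceq\sigma'\preceq\sigma$ and $p\in\bar F_X(\sigma)$. Everything reduces to a case analysis on where the words involved sit relative to the two parts $\ftact$ and $\{(n,\bar\tau)\mid n>0\}$ of $\ftbtact$, so the first step is to pin down the order on $\ftbtact$. From the three generating rules (and the fact that $\eseq$ is the bottom of $\ftact$) one sees that the down-set of $\sigma\in\ftact$ is just its set of prefixes $\history\sigma\subseteq\ftact$, while the down-set of $(n,\bar\tau)$ is $\{\eseq\}\cup\{(m,\bar\tau)\mid 0<m\le n\}$; in particular no $(n,\bar\tau)$ lies below any element of $\ftact$. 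This is precisely what makes the three clauses defining $p\cdot\sigma'$ exhaustive and mutually exclusive, and it tells us exactly which clause is invoked at each step of a composite.

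The identity axiom is immediate: if $\sigma\in\ftact$ then $p\in F_X(\sigma)=\exec{X,\sigma}$ has $\dom p=\history\sigma$, so $p\cdot\sigma=p|_{\history\sigma}=p$; if $\sigma=(n,\bar\tau)$ then $\sigma\notin\ftact$ and $\sigma\neq\eseq$, so the third clause gives $p\cdot\sigma=p$. For the composition axiom I would split on whether $\sigma\in\ftact$. If $\sigma\in\ftact$, the order remark forces $\sigma',\sigma''\in\ftact$, and the identity to prove is $(p|_{\history{\sigma'}})|_{\history{\sigma''}}=p|_{\history{\sigma''}}$, which holds since $\history{\sigma''}\subseteq\history{\sigma'}$ — i.e. functoriality is simply inherited from $F_X$. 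If $\sigma=(n,\bar\tau)$, then $\sigma'$ is either $\eseq$ or some $(m,\bar\tau)$: when $\sigma'=(m,\bar\tau)$ we have $p\cdot\sigma'=p$, and computing $(p\cdot\sigma')\cdot\sigma''$ via the appropriate clause ($p|_{\{\eseq\}}$ if $\sigma''=\eseq$, and $p$ if $\sigma''=(l,\bar\tau)$) reproduces $p\cdot\sigma''$ on the nose; when $\sigma'=\eseq$ necessarily $\sigma''=\eseq$, and then $p\cdot\sigma'=p|_{\{\eseq\}}\in\bar F_X(\eseq)=F_X(\eseq)$, so the first clause applies and gives $(p|_{\{\eseq\}})|_{\{\eseq\}}=p|_{\{\eseq\}}=p\cdot\eseq$.

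Alongside these computations I would record that the action is well typed, so that the equations are not vacuous: $p|_{\{\eseq\}}$ really is an execution of trace $\eseq$ (namely $p(\eseq)$), hence an element of $\bar F_X(\eseq)=F_X(\eseq)$, and the third clause keeps an element of $\coprod_{n\in\nat}F_X(\tau^n)$ inside that same coproduct. There is no substantial obstacle here — the content is bookkeeping — and the only point requiring care is the ``boundary'' composite where one passes from a $\bar\tau$-target down to $\eseq$, i.e. where the second clause and then the first clause both get used in succession; checking that these agree is exactly the computation above, and it mirrors the three-clause shape shared by $\obs$, $\bar F_X$, and the action of $\bar F_X$.
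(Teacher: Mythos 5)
Your proof is correct and follows essentially the same route as the paper's: a direct verification of the identity and composition axioms by case analysis on which of the three clauses of the action applies, with the composition case split (on $\sigma\in\ftact$ versus $\sigma=(n,\bar\tau)$, and then on whether the intermediate word is $\eseq$ or another $(m,\bar\tau)$) matching the paper's case distinction. The added remarks on the order structure of $\ftbtact$ and on well-typedness of the restrictions are correct and only make the bookkeeping more explicit.
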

\begin{proof}
	\begin{itemize}
		\item Preservation of identities: 
Let $p\in \bar F_X(\sigma)$. Then $$p\cdot\sigma=\begin{cases}p&\text{if\ }\exists n\in\mathbb N,\sigma=(n,\overline\tau)\\p\mid_{\downarrow\sigma}&\text{if\ }\sigma\in \ftact\end{cases}=p.$$
		\item Respecting composition: Let $\sigma_3 \preceq\sigma_2 \preceq\sigma_1$ and let $p\in \bar F_X(\sigma_1)$, for some $p\in\exec X$ and $\sigma_1,\sigma_2,\sigma_3\in\ftbtact$. Then,  we need to prove
\begin{equation}\label{eq:barfunctor-comp}
  (p\cdot \sigma_2) \cdot \sigma_3 = p\cdot \sigma_3.
\end{equation}
We do the following case distinction to prove the above equation.
\begin{enumerate}
  \item Let $\sigma_1\in\ftact$. Then, by the construction of $\ftbtact$, we know that $\sigma_3,\sigma_2\in\ftact$.
      Moreover, $(p\cdot\sigma_2)\cdot\sigma_3 = (p|_{\history{\sigma_2}})|_{\history{\sigma_3}}= p|_{\history{\sigma_3}} = p\cdot\sigma_3$ since $\history{\sigma_3} \subseteq\history{\sigma_2}$.
  \item Let $\sigma_1\not\in\ftact$. Then, $\sigma_1=(n_1,\bar \tau)$, for some $n_1>0$.
      \begin{enumerate}
        \item Let $\sigma_2\in\ftact$. Then, $\sigma_2=\eseq$ and $\sigma_3=\eseq$. Moreover, $(p\cdot\sigma_2)\cdot\sigma_3 = (p|_{\{\eseq\}})|_{\{\eseq\}}=p|_{\{\eseq\}}=p\cdot\sigma_3$.
        \item Let $\sigma_2\not\in\ftact$. Then, $\sigma_2=(n_2,\bar\tau)$, for some $n_2>0$ and $n_2\leq n_1$. We further distinguish the following cases:
            \begin{enumerate}
              \item Let $\sigma_3\in\ftact$. Then, $\sigma_3=\eseq$. Moreover, $(p\cdot\sigma_2)\cdot\sigma_3=p|_{\{\eseq\}} = p\cdot\sigma_3$.
              \item Let $\sigma_3\not\in\ftact$. Then, $\sigma_3=(n_3,\bar\tau)$, for some $n_3>0$ and $n_3\leq n_2$. Moreover,
                  $(p\cdot\sigma_2)\cdot\sigma_3 = p = p\cdot\sigma_3$.
            \end{enumerate}
      \end{enumerate}
\end{enumerate}
	\end{itemize}
\end{proof}

\begin{theorem}[Theorem~\ref{thm:adjoint-barh}]
  For a given transition system $\tuple{X,\tact,\rightarrow}$, we find that
  \[
  \Sigma_{\bar h}\bar F_X(\bar\tau) \cong \coprod_{n\in\nat} \exec{X,\tau^n},
  \qquad
  \Sigma_{\bar h}\bar F_X(\varrho) \cong \mexec{X,\varrho} \quad (\text{for any $\varrho\neq\bar\tau$}).
  \]
  Moreover, the above isomorphisms are natural in $\varrho\in\ftbact$. I.e., for any $\varrho'\preceq\varrho$ with $\varrho,\varrho'\in\fseq\act$, the square in \eqref{eq:naturality-mexec} and the following square commute.
  \begin{diagram}[width=2.2cm]
    \Sigma_{\bar h} F_{X} (\bar\tau) & \rTo~{\cong} & \coprod_{n\in\nat} \exec{X,\tau^n}\\
    \dTo^{\Sigma_{\bar h}(\bar\tau,\eseq)} & & \dTo_{\_|_{\eseq}}\\
    \Sigma_h F_{X} (\eseq) & \rTo~{\cong} & \mexec{X,\eseq}
  \end{diagram}
\end{theorem}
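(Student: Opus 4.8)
The plan is to compute $\Sigma_{\bar h}\bar F_X$ through the colimit presentation one obtains exactly as in \eqref{eq:coend-simplification}, but with the indexing poset $\op{\ftbtact}$ replacing $\op{\ftact}$, namely
\[
\Sigma_{\bar h}\bar F_X(\varrho) \;\cong\; \colimit{\sigma\in\op{\ftbtact},\ \varrho\preceq\bar h(\sigma)}\bar F_X(\sigma),
\]
and then to read off the shape of the indexing subcategory $K_\varrho=\{\sigma\in\op{\ftbtact}\mid\varrho\preceq\bar h(\sigma)\}$ in the three cases $\varrho\in\fseq\act\setminus\{\eseq\}$, $\varrho=\eseq$, and $\varrho=\bar\tau$. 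Two observations will drive everything: first, $\bar h(\sigma)=\bar\tau$ iff $\sigma$ is one of the extra objects $(n,\bar\tau)$, and in $\ftbact$ one has $\bar\tau\preceq x$ only for $x=\bar\tau$, so an object $(n,\bar\tau)$ belongs to $K_\varrho$ precisely when $\varrho\in\{\eseq,\bar\tau\}$; second, $\bar F_X$ restricts to $F_X$ on $\ftact\subseteq\ftbtact$.

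From the first observation I would argue the isomorphisms case by case. For $\varrho\in\fseq\act\setminus\{\eseq\}$ the category $K_\varrho$ is literally $\op{\act_{\tau,\varrho}^\star}$ and the restricted diagram is $F_X\circ\iota_\varrho$, so $\Sigma_{\bar h}\bar F_X(\varrho)\cong\Sigma_h F_X(\varrho)\cong\mexec{X,\varrho}$ by Theorem~\ref{thm:adjoint-h}. For $\varrho=\eseq$ the category $K_\eseq$ is all of $\op{\ftbtact}$, which has $\eseq$ as terminal object (it is $\preceq$-least in $\ftbtact$, below every word and every $(n,\bar\tau)$); since a colimit over a category with a terminal object is the value of the diagram there, $\Sigma_{\bar h}\bar F_X(\eseq)\cong\bar F_X(\eseq)=\exec{X,\eseq}=\mexec{X,\eseq}$, which matches the statement because the trailing $\tau$'s of an execution with empty observable trace may be chopped all the way down to its initial state. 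For $\varrho=\bar\tau$ the category $K_{\bar\tau}=\{(n,\bar\tau)\mid n>0\}$ has terminal object $(1,\bar\tau)$, and $\bar F_X(1,\bar\tau)=\coprod_{n\in\nat}\exec{X,\tau^n}$ directly from the definition of $\bar F_X$; hence $\Sigma_{\bar h}\bar F_X(\bar\tau)\cong\coprod_{n\in\nat}\exec{X,\tau^n}$ (equivalently, $\bar F_X$ is the constant diagram on $K_{\bar\tau}$, its transition maps between $\bar\tau$-objects being identities).

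For naturality I would use that whenever $\varrho'\preceq\varrho$ the comparison map $\Sigma_{\bar h}(\varrho,\varrho')$ is, by construction, the arrow induced on colimits by the inclusion $K_\varrho\hookrightarrow K_{\varrho'}$. If $\varrho,\varrho'\in\fseq\act$ with $\varrho'\neq\eseq$, both indexing categories and both diagrams coincide with those used for $\Sigma_h$, so $\Sigma_{\bar h}(\varrho,\varrho')$ equals $\Sigma_h(\varrho,\varrho')$ and the square \eqref{eq:naturality-mexec} commutes by Theorem~\ref{thm:square8commutes}. If $\varrho'=\eseq$, I would check directly that, under the terminal-object identification $\Sigma_{\bar h}\bar F_X(\eseq)\cong\exec{X,\eseq}$, the induced map sends a class $[\sigma,p]$ to the restriction $p\cdot\eseq=p|_{\history{\eseq}}$; this is restriction to the initial state, which on minimal executions is exactly $\mpast(\varrho,\eseq)$ because the infimum $\sigma_\eseq$ of the $\tau$-prefixes of $\max\dom p$ is always $\eseq$. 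The last square is obtained the same way: $\Sigma_{\bar h}(\bar\tau,\eseq)$ is induced by the inclusion of the chain $\{(n,\bar\tau)\}$ into $\op{\ftbtact}$, and tracing a class $[(n,\bar\tau),p]$ — which corresponds to $p\in\coprod_{n\in\nat}\exec{X,\tau^n}$ along the top edge — through the terminal-object isomorphism shows it is sent to $p\cdot\eseq=p|_{\history{\eseq}}$, i.e. to $\_|_{\eseq}$ applied to $p$.

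The step I expect to be the main obstacle is the bookkeeping around the edge case $\varrho=\eseq$: there $K_\eseq$ genuinely contains the extra $\bar\tau$-objects, so one must verify both that they do not change the colimit (handled by the terminal-object argument) and that the comparison maps $\Sigma_{\bar h}(\varrho,\eseq)$ and $\Sigma_{\bar h}(\bar\tau,\eseq)$ are compatible with the isomorphisms coming from Theorem~\ref{thm:adjoint-h}. Everything else reduces to the already-established $h$-case once the shape of each $K_\varrho$ is pinned down; the only real work is chasing a representative $[\sigma,p]$ through the two colimit identifications, which is routine.
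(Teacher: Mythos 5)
Your proposal is correct and follows essentially the same route as the paper: split into the cases $\varrho\in\fseq\act\setminus\{\eseq\}$ (where the indexing poset and diagram coincide with the $h$-case, so Theorems~\ref{thm:adjoint-h} and \ref{thm:square8commutes} apply verbatim), $\varrho=\eseq$ and $\varrho=\bar\tau$ (computed directly, using that $\bar F_X$ is constant with identity restrictions on the $\bar\tau$-objects), with naturality obtained from the universal property of the colimits by chasing representatives. The only difference is cosmetic: where the paper invokes directedness of $\op{\ftbtact}$ and the filtered-colimit description \eqref{eq:filtered-colimit}, you use the equivalent (and slightly slicker) observation that the relevant index categories have terminal objects $\eseq$ and $(1,\bar\tau)$, so the colimits are the values of the diagram there.
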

\begin{proof}
  Most of the proof remains similar to the proof of Theorem~\ref{thm:adjoint-h}. For instance, the case when $\varrho\in\fseq\act\setminus\{\eseq\}$ remains unchanged due to the way $\bar h$ is constructed. The calculation when $\varrho=\eseq$ also remains similar except that we have to consider terms $\bar F_X(n,\bar\tau)$ (for $n>0$), which are pairwise equivalent by the way, i.e., for all $n,m>0$  we have $\bar F_X(n,\bar\tau)=F_X(m,\bar\tau)$. Moreover, the structure $\op{\ftbtact}$ is directed, i.e., we can compute the colimit using \eqref{eq:filtered-colimit}. Also, since $\bar F_X(n,\bar\tau)=\coprod_{m\in\nat} \exec{X,\tau^m}$, we have $\Sigma_{\bar h}\bar F_X(\eseq)\cong \mexec{X,\eseq}$.

  The only difference in computation is when $\varrho=\bar\tau$. In this case, we have to take the colimit over the decreasing sequence $(1,\bar\tau) \succ (2,\bar\tau) \succ \cdots$ and we observe that the action of $\bar F_X$ between any two elements in this sequence is an identity on the set $\coprod_{n\in\nat} \exec{X,\tau^n}$. Thus, we conclude that $\Sigma_{\bar h}\bar F_X(\bar\tau)\cong \coprod_{n\in\nat} \exec{X,\tau^n}$.

  Next, we need to show that the square drawn in the theorem statement commutes. For this, we recall that $\Sigma_{\bar h}\bar F_X(\bar\tau) \cong \colimit{\bar\tau \succeq \bar h(n,\bar\tau)} \bar F_X (n,\bar\tau)$. So due to the universal property of the colimits, we have the following commutative square.
  \begin{diagram}[width=2.5cm]
  \bar F_X(n,\bar\tau) & \rTo & \bar F_X(\eseq)\\
  \dTo & & \dTo\\
  \colimit{\bar\tau \succeq h(n,\bar\tau)} \bar F_X (n,\bar\tau) & \rExistMap^{\Sigma_{\bar h}(\bar\tau,\eseq)} &\colimit{\sigma\in\ftbtact} \bar F_X \sigma
 \end{diagram}
 Notice that the vertical arrows, which are the injections of the corresponding colimits, are identities. The left vertical arrow is an identity because the restriction of $\bar F_X$ between any two elements having the constant $\bar \tau $ (say, $(n',\bar\tau)\preceq (n,\bar\tau)$) is an identity, i.e., $p\cdot (n',\bar\tau)=p$ (for any $p\in \bar F_X(n,\bar\tau)$). The right vertical arrow is an identity because the elements in $\bar F_X(\eseq)$ are already minimal elements w.r.t. $\varrho=\eseq$.
 Thus, for any $p\in\bar F_X (n,\bar\tau)$, we conclude that $\Sigma_{\bar h} (\bar\tau,\eseq) (p) = p|_{\eseq}$.
\end{proof}

\begin{lemma}[Lemma~\ref{lem:semanticsfunctor}]
  The mapping $\semantics{f}$ induced by a branching simulation $f$ between $\tuple{X,\tact,\rightarrow},\tuple{Y,\tact,\rightarrow}$ is a faithful functor.
\end{lemma}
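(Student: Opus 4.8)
The plan is to transcribe the proof of Theorem~\ref{thm:lts-tau_semantic1}, replacing $\fseq\act$ by $\ftbact$ and using Theorem~\ref{thm:adjoint-barh} to work with the concrete description of $\semantics{X}=\Sigma_{\bar h}\bar F_X$ rather than with the abstract colimit: namely $\semantics{X}(\varrho)\cong\mexec{X,\varrho}$ for $\varrho\neq\bar\tau$, $\semantics{X}(\bar\tau)\cong\coprod_{n\in\nat}\exec{X,\tau^n}$, and the restriction maps of $\semantics{X}$ are, under these isomorphisms, the maps $\mpast(\varrho,\varrho')$ (when $\varrho',\varrho\in\fseq\act$), the identity (for $\bar\tau\preceq\bar\tau$), and restriction $\_|_{\eseq}$ to the empty domain (for $\eseq\preceq\bar\tau$). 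That each $\semantics{X}$ is a presheaf on $\ftbact$ is already delivered by Proposition~\ref{prop:barfxcfunctor}, Corollary~\ref{cor:ess_geometric}, and Theorem~\ref{thm:adjoint-barh}, so the work concentrates on the morphism part and on faithfulness.

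First I would check that $\semantics{f}_\varrho(p)=p_f$ is well-defined, i.e.\ that $p_f\in\semantics{Y}(\varrho)$. For $\varrho\in\fseq\act$ this is exactly Lemma~\ref{lem:mexec-preservation}. For $\varrho=\bar\tau$, given $p\in\exec{X,\tau^n}$, a short structural induction on $p$ through the four defining clauses of $(\_)_f$ shows that $p_f$ again has an all-invisible trace, hence $p_f\in\exec{Y,\tau^m}$ for some $m\le n$ and thus $p_f\in\coprod_m\exec{Y,\tau^m}$; here one uses that a branching simulation turns a $\tau$-step into a $\tau$-step or into ``no step'' and turns an $a$-step ($a\in\act$) into an $a$-step.

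Next I would verify naturality of $\semantics{f}$, i.e.\ commutativity of the square for every covering pair $\varrho'\preceq\varrho$ in $\ftbact$. There are three kinds. For $\varrho',\varrho\in\fseq\act$ the required identity $\mpast(\varrho,\varrho')\circ\semantics{f}_\varrho=\semantics{f}_{\varrho'}\circ\mpast(\varrho,\varrho')$ is literally Equation~\eqref{eq:mpast-simplification}, already established inside the proof of Theorem~\ref{thm:lts-tau_semantic1}; nothing changes, since $\bar h$ agrees with $h$ on $\ftact$. For $\varrho'=\varrho=\bar\tau$ the square is trivial because both legs are identities. The one genuinely new case is $\eseq\preceq\bar\tau$: here both vertical maps are $\_|_{\eseq}$, so I must show $(p_f)|_{\eseq}=(p|_{\eseq})_f$ for all $p\in\exec{X,\tau^n}$. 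The right-hand side is the empty execution at $f(p(\eseq))$ by the first defining clause of $(\_)_f$; by the induction of the previous paragraph, $p_f$ is an all-$\tau$ execution whose first state is $p_f(\eseq)=f(p(\eseq))$, so the left-hand side is the same empty execution. Functoriality is then routine: $\semantics{\id{X}}=\id{\semantics{X}}$ because $p_{\id{X}}=p$, and $\semantics{g\circ f}=\semantics{g}\circ\semantics{f}$ because $p_{g\circ f}=(p_f)_g$, again by structural induction on $p$ using the defining clauses of $(\_)_f$.

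Finally, faithfulness is argued exactly as in the strong and fair cases: if $\semantics{f}=\semantics{g}$ then, evaluating at the component $\eseq$ on the (always minimal) empty execution $\eseq_x$, we get $\eseq_{f(x)}=\semantics{f}_\eseq(\eseq_x)=\semantics{g}_\eseq(\eseq_x)=\eseq_{g(x)}$, so $f(x)=g(x)$ for every $x\in X$ and hence $f=g$. I expect the main obstacle to be the new naturality square at $\eseq\preceq\bar\tau$, together with the well-definedness of the operation $p\mapsto p_f$ on the ``stretchable-in-time'' component $\coprod_n\exec{X,\tau^n}$ and its compatibility with the identification of Theorem~\ref{thm:adjoint-barh} under which $\Sigma_{\bar h}(\bar\tau,\eseq)$ becomes mere restriction to the empty domain; everything else is a faithful copy of the proofs of Theorem~\ref{thm:lts-tau_semantic1} and Lemma~\ref{lem:mexec-preservation}.
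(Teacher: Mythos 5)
Your proposal is correct and follows essentially the same route as the paper: the paper's proof likewise reuses the argument of Theorem~\ref{thm:lts-tau_semantic1} verbatim and only verifies the new naturality square for $\eseq\preceq\bar\tau$, observing that for $p\in\coprod_{n}\exec{X,\tau^n}$ one has $\eseq_{f(p(\eseq))}\steps{\tau^m}p_f$ with $m\le n$, so $(\semantics{f}_{\bar\tau}p)|_{\eseq}=\eseq_{f(p(\eseq))}=\semantics{f}_{\eseq}(p|_{\eseq})$, with faithfulness handled via the empty executions exactly as you do. Your extra remarks (explicit well-definedness of $p\mapsto p_f$ on the $\bar\tau$ component and the functoriality identities) only spell out details the paper leaves implicit.
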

\begin{proof}
  The proof is the same as Theorem~\ref{thm:lts-tau_semantic1}, except that we have to prove the commutativity of the following naturality square when $\varrho'=\eseq\preceq \bar\tau=\varrho$.
  \begin{diagram}[width=2cm]
          \coprod_{n\in\nat} \exec{X,\tau^n} & \rTo_{\semantics{f}_{\bar\tau}} & \coprod_{n\in\nat} \exec{Y,\tau^n}\\
          \dTo^{\_|_{\eseq}} & & \dTo_{\_|_{\eseq}}\\
          \mexec{X,\eseq} & \rTo^{\semantics{f}_{\eseq}} & \mexec{Y,\eseq}
    \end{diagram}
  Let $p\in \coprod_{n\in\nat} \exec{X,\tau^n}$. Then there is some $x\in X$ such that $\eseq_x \steps{\tau^n} p$, for some $n\in\nat$. Moreover, by the definition of ${\_}_f$, we can deduce $\eseq_{f(x)} \steps{\tau^m} p_f$, for some $m\leq n$. Clearly, we have $\semantics f_\eseq (p|_{\eseq}) = \eseq_{f(x)} = (p_f)|_{\eseq}=(\semantics{f}_{\bar\tau} p)|_{\eseq }$.
\end{proof}

\begin{theorem}[Theorem~\ref{thm:bbisim-char}]
  A branching simulation function $f$ is a branching bisimulation function if, and only if, the induced map $\semantics f$ is a bisimulation map in $\presheaves{\ftbact}$.
\end{theorem}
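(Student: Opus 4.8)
The plan is to prove the two implications separately, following closely the pattern of the proof of Theorem~\ref{thm:fair-bisimulation-map}, but exploiting the extra observation value $\bar\tau$ to account for invisible steps. Throughout I will use the isomorphic descriptions from Theorem~\ref{thm:adjoint-barh}: $\semantics{X}(\varrho)\cong\mexec{X,\varrho}$ for $\varrho\in\fseq\act$ and $\semantics{X}(\bar\tau)\cong\coprod_{n}\exec{X,\tau^n}$, with $\semantics{f}_\varrho(p)=p_f$ in both cases, together with the fact that empty executions are always minimal, so $\semantics{X}(\eseq)=\{\eseq_x\mid x\in X\}$.

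For ``$\semantics{f}$ a bisimulation map $\Rightarrow$ $f$ a branching bisimulation function'': since $f$ is already a branching simulation function, it suffices to establish surjectivity of $f$ and the weak reflection \eqref{eq:trans-breflect}. Surjectivity is immediate from Theorem~\ref{thm:bisim-epi} --- $\semantics{f}$ is a retract, hence an epimorphism, hence pointwise surjective, and at $\eseq$ this says exactly that $f$ is onto. For \eqref{eq:trans-breflect}, given $f(x)\step a y$ I would build a square \eqref{eq:bisim-presheaf} in which $P$ has $\eseq_x$ at $\eseq$ and $\emptyset$ elsewhere, and $Q$ has $\eseq_{f(x)}$ at $\eseq$, the two-point execution carrying $f(x)\step a y$ at the appropriate slot --- the component $a$ (this execution being minimal, hence in $\mexec{Y,a}$) if $a\in\act$, and the component $\bar\tau$ if $a=\tau$ --- and $\emptyset$ at all other points. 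The diagonal $k$ then yields an execution $p$ of $X$ (namely $k_a(q)$ or $k_{\bar\tau}(q)$) with $p(\eseq)=x$ by naturality and $p_f=q$; inspecting the defining clauses of the operation $(\cdot)_f$ against the fact that $q$ has no intermediate $\tau$ in its domain forces $f$ to be constant along every $\tau$-step of $p$ preceding its last step, so the penultimate and last states of $p$ supply the required $x',x''$ with $f(x')=f(x)$ and $f(x'')=y$. The case $a=\tau$ is precisely the one needing the $\bar\tau$ component, which is why the construction of Section~\ref{subsec:failed} fell short.

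For ``$f$ a branching bisimulation function $\Rightarrow$ $\semantics{f}$ a bisimulation map'': given a commutative square \eqref{eq:bisim-presheaf} with a monomorphism $g\colon P\to Q$ and $m\colon P\to\semantics{X}$, $n\colon Q\to\semantics{Y}$, I would construct the diagonal $k\colon Q\to\semantics{X}$ by well-founded induction over $\ftbact$; since $\ftbact$ is $\fseq\act$ under the prefix order with $\bar\tau$ sitting just above $\eseq$, this is induction on word length with $\bar\tau$ handled after $\eseq$ and depending only on it. At $\eseq$: surjectivity of $\semantics{f}_\eseq$ lets us split it in $\set$, and combining a section with $m_\eseq\circ g_\eseq^{-1}$ on the image of $g_\eseq$ defines $k_\eseq$, exactly as in the proof of Theorem~\ref{thm:fair-bisimulation-map}. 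At $\varrho=\varrho'a$ with $a\in\act$ and $q\in Q(\varrho)$ outside the image of $g$: set $q'=q\cdot\varrho'$, so by induction $p:=k_{\varrho'}(q')\in\mexec{X,\varrho'}$ satisfies $p_f=n_{\varrho'}(q')$; since $n_\varrho(q)\in\mexec{Y,\varrho}$ arises from $n_{\varrho'}(q')$ by some $\tau$-steps followed by one $a$-step, applying \eqref{eq:trans-breflect} to that $a$-transition starting from $\last p$ produces a minimal extension $\bar p\in\mexec{X,\varrho}$ of $p$ with $\bar p_f=n_\varrho(q)$, and I set $k_\varrho(q)=\bar p$. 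At $\bar\tau$ with $q\in Q(\bar\tau)$ outside the image of $g$: write $\bar q=n_{\bar\tau}(q)$; the $\eseq$-step gives $k_\eseq(q\cdot\eseq)=\eseq_x$ with $f(x)=\bar q(\eseq)$, and I would walk along $\bar q$ one $\tau$-step at a time, reflecting each step by \eqref{eq:trans-breflect} (with $a=\tau$) and gluing the resulting finite $\tau$-paths into an execution $p\in\coprod_n\exec{X,\tau^n}$ with $p(\eseq)=x$, setting $k_{\bar\tau}(q)=p$. On the image of $g$ one simply takes $k_\varrho(q)=m_\varrho(g_\varrho^{-1}(q))$. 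Finally I would check that the $k_\varrho$ form a natural transformation and that $k\circ g=m$ and $\semantics{f}\circ k=n$; these are routine once the local constructions are in place, using the naturality squares of Theorem~\ref{thm:adjoint-barh}.

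The main obstacle is the $\bar\tau$ case of this induction, and symmetrically the $a=\tau$ case of \eqref{eq:trans-breflect} above: one must reflect an entire finite $\tau$-labelled execution of $Y$ back into $X$ so that applying $(\cdot)_f$ returns it unchanged. The difficulty is that \eqref{eq:trans-breflect} only guarantees a reflecting path $x\steps\eseq x'\step\tau x''$ in which $f(x')$ equals the current $f$-value, leaving the $f$-values of the intermediate states of $x\steps\eseq x'$ a priori uncontrolled; the stuttering law \eqref{eq:b-stutter} is exactly what forces those to agree, so that the collapse $(\cdot)_f$ reproduces $\bar q$ step for step. This is the point at which the naive attempt of Section~\ref{subsec:failed} breaks down, and it is the reason for introducing the observation value $\bar\tau$.
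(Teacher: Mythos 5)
Your proposal is correct and follows essentially the same route as the paper's proof: surjectivity via Theorem~\ref{thm:bisim-epi} and small test presheaves (with the $\bar\tau$ component handling the silent step) for one direction, and a prefix-induction over $\ftbact$ splitting $\semantics{f}_\eseq$ at the base case and reflecting transitions via \eqref{eq:trans-breflect} for the other. The ``main obstacle'' you identify --- that \eqref{eq:b-stutter} is needed so that $(\cdot)_f$ reproduces the reflected $\tau$-steps exactly --- is precisely what the paper isolates as its auxiliary claim \eqref{eq:thm:bbisim-char-claim}, so your inline treatment matches the paper's argument in substance.
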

\begin{proof}
  \fbox{$\Rightarrow$} Consider the commutative square as depicted below in $\presheaves{\ftbact}$.
    \begin{equation}\label{eq:bbisim}
    \begin{diagram}
          Q & \rTo^n & \semantics{Y}\\
          \uMono^g & & \uTo_{\semantics{f}}\\
          P & \rTo_m & \semantics{X}
    \end{diagram}
    \end{equation}
    We prove the existence of $Q \rTo^k \semantics{X}$ by structural induction on the elements $\varrho\in\ftbact$. For the base case, let $\varrho=\eseq$. Notice that the map $\semantics{f}_{\eseq}$ is surjective since the given $f$ is surjective. Moreover, since every epi splits in \set, there is a map $\semantics{Y}(\eseq) \rTo^h \semantics{X}(\eseq)$ such that $\semantics f_\eseq\circ h=\id{\semantics{Y}(\eseq)}$. Thus, we define
    \[
    k_\eseq(q)=
    \begin{cases}
      m_\eseq (\inv{g}_\eseq (q)), & \text{if}\  q\in g_\eseq P(\eseq)\\
      h (n_\eseq (q)), & \text{otherwise}.
    \end{cases}
    \]
    Clearly, $\semantics{f}_\eseq\circ k_\eseq=n_\eseq$ and $k_\eseq\circ g_\eseq=m_\eseq$.

Before we tackle the inductive case, we prove the following claim:
\begin{equation}\label{eq:thm:bbisim-char-claim}
  \forall_{p\in\exec X,q\in\exec Y}\ p_f \step \tau q \implies \exists_{p'}\  (p \steps{\eseq} p' \land p'_f=q).
\end{equation}
To see this, let $p_f \step \tau q$. Then, we find that $\last{p_f} \step \tau \last q$. Using the reflection of $\tau$-transition (i.e., \eqref{eq:trans-breflect}) we find that there are states $x,x'\in X$ such that $\last {p_f} \steps{\eseq} x \step \tau x'$ and $fx=\last {p_f},fx'=\last q$. Let $\bar p,p'$ be the extension of $p$ that ends in states $x,x'$, respectively. To complete the proof, it suffices to show that $\bar p_f=p_f$. And this follows directly from the fact that $f$ satisfies the stuttering of $\tau$-transitions (i.e., \eqref{eq:b-stutter}). This completes the proof of Claim~\ref{eq:thm:bbisim-char-claim}.

For the inductive case, consider a family of functions $k_{\varrho'}$ (for each $\varrho'\preceq\varrho$) such that
\begin{enumerate}
  \item $k$ is a natural transformation up to $\varrho$, i.e., $n_{\varrho''} (q\cdot \varrho'')= (n_{\varrho'} (q)) \cdot \varrho''$, for any $\varrho''\preceq\varrho'\preceq \varrho$ and $q\in Q(\varrho')$;
  \item the lower triangle commutes up to $\varrho$, i.e., $k_{\varrho'}\circ g_{\varrho'} = m_{\varrho'}$ (for any $\varrho'\preceq\varrho$);
  \item the upper triangle commutes up to $\varrho$, i.e.,
      $\semantics{f}_{\varrho'} \circ k_{\varrho'} = n_{\varrho'}$ (for any $\varrho'\preceq \varrho$).
\end{enumerate}
We are now going to construct a function $k_{\varrho a}$ such that the desired properties (as described in the above inductive hypothesis) are satisfied.

Let $\bar q\in Q(\varrho a)$ and assume that $\bar q\not\in g_{\varrho a} P(\varrho a)$ (otherwise, simply take $k_{\varrho a}(\bar q)=m_{\varrho a}(\inv{g}_{\varrho a} (\bar q))$). Then there are some $q'\in Q(\varrho)$ and $q\in \semantics{Y}(\varrho)$ such that $q'=\bar q\cdot \varrho$, $n_\varrho (q')=q$, and $q=(n_{\varrho a} (\bar q))\cdot \varrho$. By induction hypothesis, we find a minimal execution $p\in\mexec{X,\varrho}$  such that $k_\varrho (q')=p$ and $\semantics{f}_\varrho (p)=q$. Thus, $\last{p_f} \steps{} y \step a \last{n_{\varrho a}(\bar q)}$, for some $y\in Y$. Using \eqref{eq:trans-breflect}, we find a matching sequence of transitions $\last{p} \steps{} x \step a x'$ (for some $x,x'$) such that $f(x)=y$ and $f(x')=\last{n_{\varrho a}(\bar q)}$. Let $p'$ be the extension of $p$ which witnesses the above sequence transitions that end in states $x'$. Clearly, $\bar p\in\mexec{X,\varrho a}$ and $\bar p\cdot\varrho=p$. Moreover, $\semantics{f}_{\varrho a}(\bar p) = \bar p_f = n_{\varrho a}(\bar q)$ follows directly from Claim~\ref{eq:thm:bbisim-char-claim}.
Therefore, we let $k_{\varrho a} (\bar q)=p'$. Clearly, $(k_{\varrho a}(\bar q))\cdot \varrho =k_\varrho (\bar q \cdot\varrho)$. Moreover, $\semantics{f}_{\varrho a} (k_{\varrho a}(\bar q))=n_{\varrho a}(\bar q)$; thus, the upper triangle commutes up to $\varrho a$. Lastly, commutativity of the lower triangle up to $\varrho a$ follows directly from $k_\varrho$.

The argument for the remaining case $\bar\tau$ is analogous to the induction step above, where $\eseq$ plays the role of $\varrho$. Note that then $q=\eseq_{\tilde y}$ for some $\tilde y\in Y$, $p=\eseq_{\tilde x}$ for some $\tilde x\in X$. and we find the execution $p'$ to be in $\coprod_{n\in\nat}\exec{X,\tau^n}$.

  \fbox{$\Leftarrow$} We have to show two properties: $f$ is surjective and $f$ satisfies the reflection property \eqref{eq:trans-breflect}, which we will in turn differentiate according to whether a silent or non-silent action is being considered.
	\begin{itemize}
		\item First, we show that $f$ is surjective. From Theorem~\ref{thm:bisim-epi} we know that $\semantics{f}_\eseq$ is surjective. So for any $y\in Y$, there is some $x\in X$ such that $\semantics{f}_\eseq (\eseq_x) = \eseq_{f(x)}=\eseq_y$. Thus, $f(x)=y$ and so $f$ must be surjective. 
		\item Let $f(x)\step a y$, $a\in A$. We let $q$ be the execution that corresponds to the transition $f(x)\step a y$. Now construct two presheaves in $\presheaves{\ftbact}$:
\begin{itemize}
  \item $P(\eseq)=\{\eseq_x\}$ and $P(\varrho)=\emptyset$ (for $\varrho\neq\eseq$).
  \item $Q(\eseq)=\{\eseq_{f(x)}\},Q(a) =\{q\},Q(\varrho)=\emptyset$ (for $\varrho\not\in\{\eseq,a\}$) with $q\cdot \eseq =\eseq_{f(x)}$.
\end{itemize}
Consider the square in \eqref{eq:bbisim} with the maps  $g_\eseq(\eseq_x)=\eseq_{f(x)}$, $m_\eseq(\eseq_x)=\eseq_x$, $n_\eseq(\eseq_{f(x)})=\eseq_{f(x)}$ and $n_a(q)=q$. Clearly, $g$ is a mono in $\presheaves{\ftbact}$ and the square in \eqref{eq:bbisim} commutes, i.e., $\semantics f\circ m = n\circ g$. Since $\semantics f$ is assumed to be a bisimulation map, there is a map $Q \rTo^k\semantics X$ such that the upper and lower triangle in \eqref{eq:bbisim} commute, i.e., $\semantics f\circ k=n$ and $k\circ g = m$.
		
		We compute $\semantics f_a\circ k_a(q)=n_a(q)=q$. Thus we find a $p\in\semantics X(a)$ such that $p_f=q$ and by restriction to $\eseq$ using $m=k\circ g$, we also obtain $p\mid_\eseq=\eseq_x$. By the definition of $p_f$, this means that $x\steps\tau x'\step a x''$ and $f(x'')=y$ as required.
		
		\item The case of a $\tau$-step $f(x)\step\tau y$ (corresponding to an execution $q$ observable at $\bar\tau$) can be shown analogously to the visible $a$-step, where we consider the set $Q(\bar\tau)=\{q\}$ instead of $Q(a)$ and argue identically, else.
	\end{itemize}
\end{proof}

%
%
%
%
\end{document}